\newtheorem{de}{Definition}
\newtheorem{theo}[de]{Theorem}    
\newtheorem{prop}[de]{Proposition}
\newtheorem{lem}[de]{Lemma}
\newtheorem{rem}[de]{Remark}
\newtheorem{exa}[de]{Example}
\DeclareMathOperator{\tr}{Tr}
\DeclareMathOperator{\supp}{supp}
\DeclareMathOperator{\loc}{loc}
\newcommand{\floor}[1]{\lfloor #1 \rfloor}
\newcommand{\myp}[1]{\left(#1\right)}
\newcommand{\normL}[2]{\left\lVert #1 \right\rVert_{L^{#2}(\mathbb{R}^d)}}
\newcommand{\myt}[1]{\left\lbrace #1\right\rbrace}
\newcommand{\abs}[1]{\left\lvert #1 \right\rvert}
\newcommand{\id}{\, \mathrm{d}}
\newcommand{\Esc}{ \mathcal{E}_{\rm Vla}}
\newcommand{\Ecan}{ \mathcal{E}_{\rm Can}}
\newcommand{\ecan}{ e_{\rm Can}^\beta}
\newcommand{\Veff}{ V^{\textrm{eff}}}
\newcommand{\Musc}{\mu_{\textrm{Vla}}}
\newcommand{\esc}{e_{\rm Vla}^\beta}
\newcommand{\per}{\mathrm{per}}
\newcommand{\Ssc}{S_{\textrm{Vla}}}
\newcommand{\ii}{\infty}
\newcommand{\eps}{\varepsilon}
\newcommand\R{{\ensuremath {\mathbb R} }}
\newcommand\nn{\nonumber}
\renewcommand{\epsilon}{\varepsilon}
\newcommand{\norm}[1]{ \left| \! \left| #1 \right| \! \right| }
\newcommand{\gS}{\mathfrak{S}}
\definecolor{fxnote}{rgb}{0.8000,0.0000,0.0000}
\colorlet{fxnotebg}{yellow}
\title[Fermionic systems at positive temperature]{Semi-classical limit of large fermionic systems at positive temperature}
\date{\today}
\author[M. Lewin]{Mathieu Lewin}
\address{CNRS \& CEREMADE, Paris-Dauphine University, PSL University, 75016 Paris, France} 
\email{mathieu.lewin@math.cnrs.fr}
\author[P.S. Madsen]{Peter S. Madsen}
\address{Department of Mathematics, Aarhus University, Ny Munkegade 118, DK-8000 Aarhus C, Denmark} 
\email{psm@math.au.dk}
\author[A. Triay]{Arnaud Triay}
\address{CEREMADE, Paris-Dauphine University, PSL University, 75016 Paris, France} 
\email{triay@ceremade.dauphine.fr}
\begin{document}
\begin{abstract}
	We study a system of $N$ interacting fermions at positive temperature in a confining potential. In the regime where the intensity of the interaction scales as $1/N$ and with an effective semi-classical parameter $\hbar=N^{-1/d}$ where $d$ is the space dimension, we prove the convergence to the corresponding Thomas-Fermi model at positive temperature. 
\end{abstract}

\maketitle

\setcounter{tocdepth}{1}
\tableofcontents

In this article we study mean-field-type limits for a system of $N$ fermions at temperature $T>0$ in a fixed confining potential. We assume that the interaction has an intensity of the order $1/N$ and that there is an effective semi-classical parameter $\hbar=N^{-1/d}$ where $d$ is the space dimension. In the limit we obtain the nonlinear Thomas-Fermi problem at the corresponding temperature $T>0$. This paper is an extension of a recent work~\cite{FouLewSol-18} by Fournais, Solovej and the first author where the case $T=0$ was solved.

Physically, the Thomas-Fermi model is a rather crude approximation of quantum many-body systems in normal conditions, and it has to be refined in order to obtain a quantitative description of their equilibrium properties. However, certain physical systems in extreme conditions are rather well described by Thomas-Fermi theory. It then becomes important to take into account the effect of the temperature. For instance, the positive-temperature Thomas-Fermi model has been thoroughly studied for very heavy atoms~\cite{FeyMetTel-49,GilPee-55,Latter-55,CowAsh-57,NarThi-81}. It has also played an important role in astrophysics, where the very high pressure encountered in the core of neutron stars and white dwarfs makes it valuable for all kinds of elements of the periodic table~\cite{MarBet-40,March-55b,BruBucJorLom-68,BarBuc-81}. Finally, the Thomas-Fermi model is also useful for ultracold dilute atomic Fermi gases, but the interaction often becomes negligible due to the Pauli principle, except in the presence of spin or of several interacting species~\cite{GoiPitStr-08}. 

In the regime considered in this paper, a mean-field scaling is coupled to a semi-classical limit. This creates some mathematical difficulties. Before~\cite{FouLewSol-18}, this limit has been rigorously considered at $T=0$ for atoms by Lieb and Simon in~\cite{LieSim-77b,LieSim-77} and for pseudo-relativistic stars by Lieb, Thirring and Yau in~\cite{LieThi-84,LieYau-87}. Upper and lower bounds on the next order correction have recently been derived in~\cite{HaiPorRex-18,BenNamPorSchSei-18}, for particles evolving on the torus. For atoms the positive Thomas-Fermi model was derived for the first time in~\cite{NarThi-81}. There are several mathematical works on the time-dependent setting~\cite{NarSew-81,Spohn-81,BarGolGotMau-03,ElgErdSchYau-04,FroKno-11,BenPorSch-14,BenJakPorSafSch-16,BacBrePetPicTza-15,PetPic-16,BenPorSafSch-16,DieRadSch-18}, in which the Schr\"odinger dynamics has been proved to converge to the Vlasov time-dependent equation in the limit $N\to\ii$. Finally, the first two terms in the expansion of the (free) energy of a Fermi gas with spin in the limit $\rho\to0$ was provided in~\cite{LieSeiSol-05} at $T=0$ and in~\cite{Seiringer-06b} at $T>0$. 

The mean-field limit at positive temperature for fermions is completely different from the bosonic case. It was proved in~\cite{LewNamRou-14} that in the similar mean-field regime for bosons, the leading order is the same at $T>0$ as when $T=0$. Only the next (Bogoliubov) correction depends on $T$~\cite{LewNamSerSol-15}. In order to observe an effect of the temperature at the leading order of the bosonic free energy, one should take $T\sim N$, a completely different limit where nonlinear Gibbs measures arise~\cite{Gottlieb-05,LewNamRou-15,LewNamRou-18a,LewNamRou-18b,LewNamRou-18c,Rougerie-15}. Without statistics (boltzons), the temperature does affect the leading order of the energy~\cite{LewNamRou-16}, and the same happens for fermions, as we will demonstrate.

Our method for studying the Fermi gas in the coupled mean-field/semi-classical limit relies on previous techniques introduced in~\cite{FouLewSol-18}. Assuming that the interaction is positive-type ($\widehat{w}\geq0$), the lower bound follows from using coherent states and inequalities on the entropy.  We discuss later in Remark~\ref{rmk:inequality_entropy} a conjectured inequality on the entropy of large fermionic systems which would imply the result for any interaction potential, not necessarily of positive-type. The upper bound is slightly more tedious. The idea is to construct a trial state with locally constant density in small boxes of side length much larger than $\hbar$, and to use the equivalence between the canonical and grand-canonical ensembles for the free Fermi gas. Finally, the convergence of states requires the tools recently introduced in \cite{FouLewSol-18} based on the classical de Finetti theorem for fermions.

The article is organized as follows. In the next section we introduce both the $N$-particle quantum Hamiltonian and the positive-temperature Thomas-Fermi theory which is obtained in the limit. We then state our main theorems, \cref{theo:TL_Can_Sc} and \cref{theo:dilute_limit}. As an intermediate result for the upper bound, we show in Section ~\ref{sec:trial_states} how to approximate a classical density by an $N$ body quantum state. In Section~\ref{sec:proof_non_interacting}, we use this trial state and some known results about the free Fermi gas at positive temperature to prove our main result in the non-interacting case. The interacting case is dealt with in Section~\ref{sec:proof_interacting}. Finally, in Section ~\ref{sec:proof_theo_Vlasov_func} we study the Gibbs state and the minimizers of the Thomas-Fermi functional at positive temperature (\cref{theo:min_Vlasov}).

\subsubsection*{\textbf{Acknowledgment}}
We thank Robert Seiringer for useful comments. 
This project has received funding from the European Research Council (ERC) under the European Union's Horizon 2020 research and innovation programme (grant agreement MDFT No 725528 of M.L.). This work was started when P.S.M. was a visiting student at the University Paris-Dauphine. 
P.S.M. was partially supported by the Sapere Aude grant DFF--4181-00221 from the Independent Research Fund Denmark.

%%%%%%%%%%%%%%%%%%%%%%%%%%%%%%%%%%%%%%%%%%%%%%%%%%%%%
%%%%%%%%%%%%%%%%%%%%%%%%%%%%%%%%%%%%%%%%%%%%%%%%%%%%%
\section{Models and main results}
	\label{sec:non_interacting_case}
%%%%%%%%%%%%%%%%%%%%%%%%%%%%%%%%%%%%%%%%%%%%%%%%%%%%%
%%%%%%%%%%%%%%%%%%%%%%%%%%%%%%%%%%%%%%%%%%%%%%%%%%%%%

\subsection{The Vlasov and Thomas-Fermi functionals at $T>0$}
For a given density $\rho>0$ and an inverse temperature $ \beta > 0 $, the Vlasov functional at positive temperature is given by
\begin{align}
	\mathcal{E}_{\mathrm{Vla}}^{\beta,\rho} \myp{m} 
	= &\frac{1}{\myp{2 \pi}^d} \iint_{\mathbb{R}^{2d}} \myp{|p+A(x)|^2+V\myp{x}} m \myp{x,p} \, \mathrm{d} x \, \mathrm{d} p \nn\\
	&\qquad+\frac{1}{2 \rho} \iint_{\mathbb{R}^{2d}} w \myp{x-y} \rho_m \myp{x} \rho_m \myp{y} \, \mathrm{d} x \, \mathrm{d} y  \nn\\
	 &\qquad+\frac{1}{\myp{2 \pi}^d \beta} \iint_{\mathbb{R}^{2d}} s \myp{m \myp{x,p}} \, \mathrm{d} x \, \mathrm{d} p,\label{eq:semiclfunct}
\end{align}
where $ s \myp{t} = t \log t + \myp{1-t} \log \myp{1-t} $ is the fermionic entropy, and
\begin{equation*}
	\rho_m \myp{x} = \frac{1}{\myp{2 \pi}^d} \int_{\mathbb{R}^d} m \myp{x,p} \, \mathrm{d} p
\end{equation*}
is the spatial density of particles. Here $m$ is a positive measure on the phase space $\R^d\times\R^d$, with the convention
\begin{equation*}
\frac{1}{\myp{2 \pi}^d} \iint_{\R^{2d}} m(x,p)\,\mathrm{d} x\,\mathrm{d} p = \int_{\mathbb{R}^d} \rho_{m}(x) \mathrm{d} x = \rho,
\end{equation*}
and which is assumed to satisfy Pauli's principle $0\leq m\leq 1$. For convenience we have added the factor $1/\rho$ in front of the interaction energy, because it will naturally arise in the mean-field limit. 
We denote the Vlasov minimum free energy by
\begin{equation}
\boxed{e_{\mathrm{Vla}}^{\beta} \myp{\rho} = \inf_{\substack{0\leq m\leq 1\\ (2\pi)^{-d}\iint_{\R^{2d}}m=\rho}} \mathcal{E}_{\mathrm{Vla}}^{\beta,\rho} \myp{m}.}
\label{eq:def_min_Vlasov}
\end{equation}
Precise assumptions on $A,V$ and $w$ will be given later. 

Similarly as in the case $T=0$, we can rewrite the minimum as a two-step procedure where we first choose a density $\nu\in L^1(\R^d,\R_+)$ with $\int_{\R^d}\nu=\rho$ and minimize over all $m$ such that $\rho_m=\nu$, before minimizing over $\nu$. For any fixed constants $\nu\in\R_+$ and $A\in \R^d$  we can solve the problem at fixed $x$ and obtain 
\begin{align*}
\min_{\substack{0\leq m(p)\leq 1\\ (2\pi)^{-d}\int_{\R^{d}}m(p)\,{\rm d}p=\nu}}\left(\frac{1}{\myp{2 \pi}^d} \int_{\mathbb{R}^{d}} |p+A|^2 m \myp{p} \,  \mathrm{d} p +\frac{1}{\myp{2 \pi}^d \beta} \int_{\mathbb{R}^{d}} s \myp{m \myp{p}} \, \mathrm{d} p\right) \\
=-\frac{1}{(2\pi)^d\beta}\int_{\R^d}\log\left(1+e^{-\beta\big(p^2-\mu_{\rm FG}(\beta,\nu)\big)}\right)\,{\rm d}p+\mu_{\rm FG}(\beta,\nu)\,\nu
\end{align*}
where $\mu_{\rm FG}(\beta,\nu)$ is the unique solution to the implicit equation
$$\frac{1}{(2\pi)^d}\int_{\R^d}\frac{1}{1+e^{\beta(p^2-\mu_{\rm FG}(\beta,\nu))}}\,{\rm d}p=\nu$$
and with the unique corresponding minimizer
$$m_{\nu,A}(p)=\frac{1}{1+e^{\beta(|p+A|^2-\mu_{\rm FG}(\beta,\nu))}}.$$
This is the uniform Fermi gas at density $\nu>0$. 
For later purposes we introduce the free energy of the Fermi gas 
\begin{equation}
F_\beta(\nu):=-\frac{1}{(2\pi)^d\beta}\int_{\R^d}\log\left(1+e^{-\beta(p^2-\mu_{\rm FG}(\beta,\nu))}\right)\,{\rm d}p+\mu_{\rm FG}(\beta,\nu)\,\nu.
\label{eq:def_Phi}
\end{equation}
Note that $A$ only appears in the formula of the minimizer. It does not affect the value of the minimum $F_\beta(\nu)$.

All this allows us to reformulate the Vlasov minimization problem using only the density, which leads to the Thomas-Fermi minimization problem at positive temperature $T=1/\beta$
\begin{multline}
e_{\mathrm{Vla}}^{\beta} \myp{\rho}=\min_{\substack{\nu\in L^1(\R^d,\R_+)\\ \int_{\R^d}\nu(x)\,{\rm d}x=\rho}}\bigg\{
\int_{\R^d}F_\beta\big(\nu(x)\big)\,{\rm d}x+\int_{\R^d}V(x)\nu(x)\,{\rm d}x\\+\frac{1}{2 \rho} \iint_{\mathbb{R}^{2d}} w \myp{x-y} \nu \myp{x} \nu \myp{y} \, \mathrm{d} x \, \mathrm{d} y \bigg\}.
\label{eq:def_min_TF}
\end{multline}
The Vlasov minimization~\eqref{eq:def_min_Vlasov} on phase space will be more tractable and we will almost never use the Thomas-Fermi formulation \eqref{eq:def_min_TF} of the problem. 

Now we discuss the existence of a unique Vlasov minimizer for~\eqref{eq:def_min_Vlasov}, under appropriate assumptions on $V,A,w$. We use everywhere the notation $V_\pm=\max(\pm V,0)$ for the positive and negative parts of $V$, which are both positive functions by definition.

\begin{theo}[Minimizers of the Vlasov functional]
	\label{theo:min_Vlasov}
	Fix $ \rho,\beta_0 > 0 $.
	Suppose that $ V_- \in L^{d/2} \myp{\mathbb{R}^d}\cap L^{1+d/2}(\mathbb{R}^d)$, $A\in L^{1}_{\loc}(\mathbb{R}^d)$ and that $V_+ \in L_{\loc}^1 \myp{\mathbb{R}^d} $ satisfies $ \int_{\mathbb{R}^d} e^{-\beta_0 V_+\myp{x}} \, \mathrm{d} x < \infty $. Let
	$$w\in L^{1+\frac{d}{2}} \myp{\mathbb{R}^d} + L_{\varepsilon}^{\infty} \myp{\mathbb{R}^d}+\R_+\delta_0.$$
	Then, for all $\beta > \beta_0$, there are minimizers for the Vlasov problem~\eqref{eq:def_min_Vlasov}. Any minimizer $ m_0$ solves the nonlinear equation
	\begin{equation}
	\label{eq:m0def}
		m_0 \myp{x,p} = \frac{1}{1+ \exp\Big(\beta (|p+A(x)|^2 + V\myp{x} + \rho^{-1} w \ast \rho_{m_0} \myp{x} -\mu) \Big)},
	\end{equation}
for some Lagrange multiplier $\mu$.
	The minimum can be expressed in terms of $m_0$  and $\mu$ as
\begin{align}
		e_{\mathrm{Vla}}^{\beta} \myp{\rho}=
		& -\frac{1}{\myp{2 \pi}^d \beta} \iint_{\mathbb{R}^{2d}} \log \myp{1 + e^{- \beta \big(|p|^2 + V\myp{x} + \rho^{-1} w \ast \rho_{m_0} \myp{x} - \mu\big)}} \, \mathrm{d} x \, \mathrm{d} p \nn \\
		\label{eq:functminimum}
		& + \mu \rho  - \frac{1}{2 \rho} \iint_{\mathbb{R}^{2d}} w \myp{x-y} \rho_{m_0} \myp{x} \rho_{m_0} \myp{y} \, \mathrm{d} x \, \mathrm{d} y.
\end{align}

Furthermore, if $ \widehat{w} \geq 0 $, then $ \mathcal{E}_{\mathrm{Vla}}^{\beta,\rho} $ is strictly convex and therefore has a unique minimizer. In this case, for $\rho'>0$ define
\begin{equation}
F_{\mathrm{Vla}}^{\beta} (\rho,\rho') := \inf_{\substack{0\leq m\leq 1\\ (2\pi)^{-d}\iint_{\R^{2d}}m=\rho}} \mathcal{E}_{\mathrm{Vla}}^{\beta,\rho'} \myp{m}.
\end{equation}
Then, for any $\rho' >0$, $F_{\mathrm{Vla}}^{\beta} (\cdot,\rho')$ is $C^1$ on $\R_+$ and the multiplier appearing in~\eqref{eq:m0def} is given by
\begin{equation}
\label{eq:chemical_potential_derivative}
\mu=\frac{\partial F_{\mathrm{Vla}}^{\beta}}{\partial\rho} (\rho,\rho')_{\big| \rho' = \rho}.
\end{equation}
\end{theo}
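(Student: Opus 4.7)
The plan is to handle the five assertions in turn, the first four by the direct method of the calculus of variations and the last by convex duality.

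\emph{Existence.} For a minimizing sequence $(m_n)$, the key a priori bound comes from the pointwise Gibbs-type inequality
$$t X + \beta^{-1} s(t) \geq -\beta^{-1}\log\bigl(1+e^{-\beta X}\bigr),\qquad 0\leq t\leq 1,\ X\in\R,$$
with equality iff $t=(1+e^{\beta X})^{-1}$. Applied to $X=|p+A(x)|^2+\alpha V_+(x)-C$ with $\alpha\in(\beta_0/\beta,1)$ chosen so that $\alpha\beta>\beta_0$, and combined with bounds on $V_-\in L^{d/2}\cap L^{1+d/2}$ and on the interaction (Young's inequality against $\rho_m\in L^1\cap L^{1+2/d}$, the $L^{1+2/d}$ bound coming from the semi-classical inequality $\int\rho_m^{1+2/d}\leq C\iint|p|^2 m$ valid for $0\leq m\leq 1$), this yields uniform bounds on $\iint|p|^2 m_n$, on $\iint V_+\,m_n$, and on $-\iint s(m_n)$. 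The hypothesis $\int e^{-\beta_0 V_+}<\infty$ together with the $V_+$ bound provides tightness in $x$, and the kinetic bound gives tightness in $p$. After extraction, $m_n\rightharpoonup m_0$ weakly-$*$ in $L^\infty(\R^{2d})$ with $0\leq m_0\leq 1$, and $\rho_{m_n}\to\rho_{m_0}$ strongly in $L^q_{\mathrm{loc}}$ for $q<1+2/d$. Lower semicontinuity of the kinetic and external terms is by Fatou, of the entropy by convexity of $s$, and of the interaction by the $L^p$ structure of $w$; tightness passes the mass constraint to the limit, so $m_0$ is a minimizer.

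\emph{Euler--Lagrange and formula for the minimum.} Computing the first variation of $\mathcal{E}_{\mathrm{Vla}}^{\beta,\rho}$ at $m_0$ along admissible directions preserving the mass constraint gives
$$|p+A(x)|^2+V(x)+\rho^{-1}(w\ast\rho_{m_0})(x)+\beta^{-1}s'(m_0)=\mu$$
a.e., with $s'(t)=\log(t/(1-t))$ and $\mu$ the Lagrange multiplier; inverting yields~\eqref{eq:m0def}. The bound constraint $0\leq m\leq 1$ is inactive because the Fermi--Dirac profile automatically satisfies $0<m_0<1$ a.e. To obtain~\eqref{eq:functminimum}, I would use the algebraic identity
$$m_0 X+\beta^{-1}s(m_0)=-\beta^{-1}\log\bigl(1+e^{-\beta X}\bigr)$$
for $X=|p+A|^2+V+\rho^{-1}w\ast\rho_{m_0}-\mu$, rearrange the energy to isolate $\iint(Xm_0+\beta^{-1}s(m_0))$, and account separately for the contribution $\mu\rho$ (from $(2\pi)^{-d}\iint\mu m_0$) and for the double-counting of the interaction $\tfrac{1}{2\rho}\iint w\rho_{m_0}\rho_{m_0}$. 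A translation $p\mapsto p-A(x)$ removes $A$ from the logarithm.

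\emph{Uniqueness and $C^1$ regularity.} When $\widehat{w}\geq 0$, the interaction is a positive quadratic form in $\rho_m$ and hence convex in $m$; together with the strict convexity of $s$, this makes $\mathcal{E}_{\mathrm{Vla}}^{\beta,\rho}$ strictly convex on the convex constraint set, yielding uniqueness. For~\eqref{eq:chemical_potential_derivative}, I would note that $F_{\mathrm{Vla}}^\beta(\cdot,\rho')$ is the value function of a convex minimization with linear mass constraint, so it is convex on $\R_+$; its subdifferential at $\rho$ coincides with the set of Lagrange multipliers. Uniqueness of the minimizer forces uniqueness of $\mu$ (since~\eqref{eq:m0def} determines $\mu$ from $m_0$), so the subdifferential is a singleton and $F_{\mathrm{Vla}}^\beta(\cdot,\rho')$ is differentiable at $\rho$ with derivative equal to the $\mu$ in~\eqref{eq:m0def} at $\rho'=\rho$. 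The main obstacle I anticipate is the coercivity step: because $s\leq 0$ and there is no naive coercivity in $m$, one must balance the entropy against the kinetic and confining terms through the Gibbs inequality and the sharp integrability hypothesis $\int e^{-\beta_0 V_+}<\infty$ in order to extract a non-trivial weak-$*$ limit and to preserve the mass constraint; once this is in place, the remaining steps are routine convex analysis.
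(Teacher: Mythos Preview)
Your overall strategy matches the paper's: existence by the direct method with a priori bounds from the Gibbs-type inequality and the semiclassical Lieb--Thirring inequality, then Euler--Lagrange, then convex analysis for the last part. The existence and formula-for-the-minimum steps are essentially the same as in the paper (Lemmas~\ref{lem:strongcont} and~\ref{lem:functminexist}, Proposition~\ref{lem:functmin}).

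There is, however, a genuine gap in your Euler--Lagrange step. You write that ``the bound constraint $0\le m\le 1$ is inactive because the Fermi--Dirac profile automatically satisfies $0<m_0<1$ a.e.'' This is circular: you only obtain the Fermi--Dirac form \emph{after} you know the constraint is inactive and can differentiate $s$ at $m_0$. A priori a minimizer might equal $0$ or $1$ on a set of positive measure, and on such a set $s'(m_0)$ is not defined (indeed $s'(0^+)=-\infty$, $s'(1^-)=+\infty$), so the first-variation computation as you wrote it does not go through. The paper isolates this point as a separate lemma (Lemma~\ref{lem:minimizerbounds}) and proves $0<m_0<1$ a.e.\ by a non-trivial perturbation argument: if both $\{m_0=0\}$ and $\{m_0=1\}$ had positive measure, one moves a small mass $\lambda$ from the latter to the former; the entropy gain behaves like $s(\lambda)/\lambda\to -\infty$ and beats any finite energy cost, a contradiction. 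One then rules out each set separately by moving mass to or from a level set $\{1-\varepsilon\le m_0\le 1-\varepsilon/2\}$. You need to supply this argument (or an equivalent one exploiting the blow-up of $s'$ at the endpoints) before you can invert $s'$ to get~\eqref{eq:m0def}.

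For the $C^1$ part, your subdifferential argument is a legitimate alternative to the paper's route (which passes through the unconstrained problem and establishes a bijection $\mu\leftrightarrow\rho$). Both rely on the same convexity, and your version is slightly more direct; note that differentiability of a convex function on an interval automatically gives continuity of the derivative, so $C^1$ follows once the subdifferential is a singleton at every $\rho>0$.
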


The proof of \cref{theo:min_Vlasov} is classical and given for completeness in \cref{sec:proof_theo_Vlasov_func}. Note that the magnetic potential $A$ has only a trivial effect on the minimization problem. The minimizers for a given $A$ are exactly equal to the $m_0(x,p+A)$ with $m_0$ a minimizer for $A\equiv0$. The value of the minimal energy, the density $\rho_{m_0}$ and the Lagrange multiplier $\mu$ are unchanged under this transformation.

The two conditions $e^{-\beta V_+} \in L^1(\mathbb{R}^d)$ and $V_- \in L^{d/2}(\mathbb{R}^d)\cap L^{1+d/2}(\mathbb{R}^d)$ have been chosen to ensure that the minimizer has a finite total mass and a finite total energy. This is because
\begin{align}
	\label{eq:computation_rho}
\iint_{\mathbb{R}^{2d}} \frac{1}{1+e^{\beta(p^2 +V_+ - V_-)}}  
	&\leq \iint_{\mathbb{R}^{2d}} e^{-\beta(p^2/2 + V_+)} + |\{p^2 \leq 2 V_-\}|  \nn \\
	&\leq C\int_{\mathbb{R}^d} \left(\beta^{-\frac{d}2}e^{-\beta V_+} + V_-^{\frac{d}2} \right)
\end{align}
and, similarly,
\begin{align*}
&\iint_{\mathbb{R}^{2d}} \log\left(1+e^{-\beta(p^2 +V_+ - V_-)}\right)\,dx\,dp \\
&\qquad \qquad \leq \iint_{\mathbb{R}^{2d}} e^{-\beta(p^2/2 + V_+)} + C \iint_{\{p^2 \leq 2 V_-\}}\left(1+\beta V_-\right)\\
&\qquad\qquad  \leq C\int_{\mathbb{R}^d} \left(\beta^{-\frac{d}2}e^{-\beta V_+} +V_-^{\frac{d}2} + \beta V_-^{1+\frac{d}2}\right).
\end{align*}

%%%%%%%%%%%%%%%%%%%%%%%%%%%%%%%%%%%%%%%%%%%%%%%%%%%%%
\subsection{The $N$-body Gibbs state and its limit}
The aim of this paper is to understand the large--$N$ limit of fermionic systems in a mean-field-type regime. We will end up with the Vlasov problem~\cref{eq:semiclfunct} introduced in the previous section. 

\subsubsection{The mean-field limit}

Here we analyze the `mean-field' limit where the interaction has a fixed range and a small intensity. We consider the following Hamiltonian
\begin{equation}
\label{eq:hamiltonian}
	H_{N,\hbar} = \sum\limits_{j=1}^N  |i\hbar \nabla_{x_j} + A(x_j)|^2 + V\myp{x_j} + \frac{1}{N} \sum\limits_{1\leq j < k\leq N} w \myp{x_j - x_k}
\end{equation}
acting on the Hilbert space $ \bigwedge^N_1 L^2 \myp{\mathbb{R}^d} $ of anti-symmetric functions. For simplicity we neglect the spin variable. 
We suppose that 
$$|A|^2, w \in L^{1+\frac{d}{2}} \myp{\mathbb{R}^d} + L_{\varepsilon}^{\infty} \myp{\mathbb{R}^d}$$
and that $w$ is an even function. We also assume that the electric potential $ V \in L_{\textrm{loc}}^{1+{d}/{2}} \myp{\mathbb{R}^d} $ is confining, that is, $ V \myp{x} \to \infty $ when $ \left\lvert x \right\rvert \to \infty $, and that the divergence is so fast that $ \int e^{-\beta_0 V_+\myp{x}} \, \mathrm{d} x < \infty $ for some $\beta_0>0$. Note that this implies that $V_-$ has a compact support, hence in particular $V_-\in L^{d/2}(\R^d)\cap L^{1+d/2}(\mathbb{R}^d)$.
At inverse temperature $ \beta > \beta_0 $, the canonical free energy is given by the functional
\begin{equation}
	\mathcal{E}_{\mathrm{Can}}^{N,\hbar} \myp{\Gamma} = \tr \myp{H_{N,\hbar} \Gamma} + \frac{1}{\beta} \tr (\Gamma \log \Gamma),
\end{equation}
defined for all fermionic quantum states $\Gamma=\Gamma^*\geq0$ with $\tr(\Gamma)=1$. The minimum over all $\Gamma$ is uniquely attained at the Gibbs state 
\begin{equation*}
\Gamma_{N,\hbar,\beta} = Z^{-1} e^{-\beta H_{N,\hbar}},
\end{equation*}
where $ Z = \tr e^{-\beta H_{N,\hbar}} $, which leads to the minimum free energy
\begin{align*}
\boxed{e_{\mathrm{Can}}^{\beta} \myp{\hbar,N}
	:= \min_{\Gamma} \mathcal{E}_{\mathrm{Can}}^{N,\hbar} \myp{\Gamma} 
%	= \tr \myp{H_{N,\hbar} \Gamma_{N,\hbar,\beta}} + \frac{1}{\beta} \tr \Gamma_{N,\hbar,\beta} \log \Gamma_{N,\hbar,\beta}
	= - \frac{1}{\beta} \log \tr e^{-\beta H_{N,\hbar}}.}
\end{align*}

Our main result is the following.

\begin{theo}[Mean-field limit]
	\label{theo:TL_Can_Sc}
Let $ \beta_0, \rho > 0 $. Assume that $ V \in L_{\loc}^{1+d/2} \myp{\mathbb{R}^d} $ is such that $ V \myp{x} \to \infty $ at infinity and that $ \int e^{-\beta_0 V_+\myp{x}} \, \mathrm{d} x < \infty $.
Furthermore, assume $ |A|^2,w \in L^{1+d/2} \myp{\mathbb{R}^d} + L_{\varepsilon}^{\infty} \myp{\mathbb{R}^d} $ with $w$ even and satisfying 
$\widehat{w} \geq 0$. Then, for all $\beta > \beta_0$ we have the convergence
\begin{equation}
\boxed{\lim_{\substack{N\to\infty \\ \hbar^d N \to \rho}} \hbar^d \ecan(\hbar,N) = \esc(\rho).}
\label{eq:limit}
\end{equation}
Moreover, if $(\Gamma_N)$ is a sequence of approximate Gibbs states so that
\begin{equation}
 \Ecan^{N,\hbar}(\Gamma_N) = \ecan(\hbar,N) + o(1), 
\label{eq:approx_min}
 \end{equation}
then the one particle density of $\Gamma_N$ satisfies the following convergence
\begin{equation*}
\hbar^d \rho_{\Gamma_N}^{\myp{1}} \rightharpoonup \rho_{m_0} \textrm{ weakly in } L^1(\mathbb{R}^d) \cap L^{1+2/d}(\mathbb{R}^d),
\end{equation*}
and
\begin{align}
& m_{f,\Gamma_N}^{(1)} \longrightarrow m_0 \textrm{ strongly in } L^1(\mathbb{R}^{2d}), \label{eq:limit_1PDM_a}\\
& \rho_{m_{f,\Gamma_N}^{(1)}} \longrightarrow \rho_{m_0} \textrm{ strongly in } L^1(\mathbb{R}^d) \cap L^{1+2/d}(\mathbb{R}^d),\label{eq:limit_1PDM_b}
\end{align}
where $m_{f,\Gamma_N}^{(k)}$ is the $k$-particle Husimi function of $\Gamma_N$ and $m_0$ is the unique minimizer of the Vlasov functional in \cref{eq:m0def}.
The $k$-particle Husimi functions converge weakly in the sense that 
\begin{equation}
\label{eq:cv_of_states_husimi}
\int_{\mathbb{R}^{2dk}} m_{f,\Gamma_N}^{(k)} \varphi  \to \int_{\mathbb{R}^{2dk}} m^{\otimes k}_0 \varphi 
\end{equation}
for all $\varphi \in L^{1}(\mathbb{R}^{2dk}) + L^{\infty}(\mathbb{R}^{2dk})$. Similarly, if we denote by $\mathcal{W}_{\Gamma_{N}}^{(k)}$ the $k$-particle Wigner measure of $\Gamma_{N}$, we also have,
\begin{equation}
\label{eq:cv_of_states_wiener}
\int_{\mathbb{R}^{2dk}} \mathcal{W}_{\Gamma_{N}}^{(k)} \varphi \to \int_{\mathbb{R}^{2dk}} m^{\otimes k}_0 \varphi,
\end{equation}
for all $\varphi$ satisfying $\partial_{x_1}^{\alpha_1}...\partial_{x_k}^{\alpha_k}\partial_{p_1}^{\beta_1}...\partial_{p_k}^{\beta_k} \varphi \in L^\infty(\mathbb{R}^{2dk})$, where $\max(\alpha_j,\beta_j) \leq 1$.

\end{theo}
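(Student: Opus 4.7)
The plan is to prove the energy convergence \eqref{eq:limit} by matching an upper and a lower bound, and then extract every state-convergence statement from the saturation of these bounds, using strict convexity of $\Esc^{\beta,\rho}$ (ensured by $\widehat w\ge 0$) and a de-Finetti theorem at the Husimi level. For the lower bound I would follow the coherent-state strategy of \cite{FouLewSol-18} adapted to positive temperature. To a fermionic state $\Gamma$ I associate its one-particle Husimi transform $m_{f,\Gamma}^{(1)}$ built from a fixed smooth radial profile $f$ with $\|f\|_2=1$ and width $\hbar$. The fermionic structure forces $0\le m_{f,\Gamma}^{(1)}\le 1$ and $(2\pi)^{-d}\iint m_{f,\Gamma}^{(1)}=N$, so $m_N:=\hbar^d m_{f,\Gamma_N}^{(1)}$ is an admissible Vlasov candidate with total mass $\hbar^d N\to\rho$. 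I would bound each piece of $\Ecan^{N,\hbar}(\Gamma)$ by its Vlasov analogue: the magnetic kinetic term reproduces $(2\pi)^{-d}\iint|p+A|^2 m_{f,\Gamma}^{(1)}$ up to an $O(\hbar^2)$ localization error from derivatives of $f$; the external potential convolves against a mollifier shrinking to $\delta_0$; the quantum entropy is dominated by its phase-space analogue via the fermionic Berezin--Lieb-type inequality $\tr(\Gamma\log\Gamma)\ge(2\pi)^{-d}\iint s(m_{f,\Gamma}^{(1)})$; and, using $\widehat w\ge 0$, the two-body operator is replaced by the Hartree expression $\frac{1}{2N}\iint w(x-y)\rho_\Gamma(x)\rho_\Gamma(y)-\frac{1}{2}w(0)$ via completion of squares, with the subsequent coherent-state smearing of $\rho_\Gamma$ into $\rho_{m_{f,\Gamma}^{(1)}}$ producing a negligible remainder. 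Multiplying by $\hbar^d$ and extracting a weak limit $m_\infty$ of $(m_N)$ in $L^1\cap L^\infty$ gives $\liminf\hbar^d\ecan(\hbar,N)\ge\Esc^{\beta,\rho}(m_\infty)\ge\esc(\rho)$.

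For the matching upper bound I would use the Vlasov minimizer $m_0$ from \cref{theo:min_Vlasov} as a blueprint and quantize it by the procedure announced in Section~\ref{sec:trial_states}. Partitioning $\R^d$ into cubes $Q_i$ of side length $\ell$ with $\hbar\ll\ell\ll 1$, I approximate $\rho_{m_0}$ by a piecewise constant density and in each $Q_i$ place the grand-canonical Gibbs state of the homogeneous free Fermi gas at density $\rho_{m_0}(x_i)$, with magnetic shift $A(x_i)$ and the mean-field shifted chemical potential read off from \eqref{eq:m0def}. Antisymmetrizing these local pieces into a global canonical $N$-body state, the canonical/grand-canonical equivalence for the free Fermi gas controls the resulting particle-number fluctuations. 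A direct box-by-box energy computation then gives $\hbar^{-d}\esc(\rho)+o(\hbar^{-d})$, with errors of order $(\hbar/\ell)\hbar^{-d}$ from cube interfaces and $\ell\,\hbar^{-d}$ from the piecewise-constant approximation; optimizing in $\ell$ yields $\limsup\hbar^d\ecan(\hbar,N)\le\esc(\rho)$.

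For an approximate minimizing sequence $(\Gamma_N)$ satisfying \eqref{eq:approx_min}, saturation of each inequality used in the lower bound pins every weak limit $m_\infty$ of $m_N$ at $\Esc^{\beta,\rho}(m_\infty)=\esc(\rho)$; strict convexity then forces $m_\infty=m_0$, so the whole sequence converges weakly to $m_0$ and \eqref{eq:limit_1PDM_a} upgrades to strong $L^1$ convergence by Brezis--Lieb once the total mass is shown to converge and Pauli's bound $m_N\le 1$ is in force. The convergence of $\hbar^d\rho_{\Gamma_N}^{(1)}$ to $\rho_{m_0}$ in $L^1\cap L^{1+2/d}$ follows from saturating the kinetic Lieb--Thirring inequality, which pins the $L^{1+2/d}$ norm in the limit. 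For the $k$-particle statements I would invoke the classical de-Finetti theorem at the Husimi level from \cite{FouLewSol-18}: weak limit points of the $k$-particle Husimi measures $m_{f,\Gamma_N}^{(k)}$ are convex integrals of tensor products $m^{\otimes k}$ over Vlasov measures $m$ of mass $\rho$, and strict convexity again concentrates this integral at the Dirac mass on $m_0$, giving \eqref{eq:cv_of_states_husimi}. Statement \eqref{eq:cv_of_states_wiener} follows because the $k$-particle Husimi function is a Gaussian smearing of width $\hbar$ of the Wigner function, and the class of test functions with one bounded derivative in each variable is stable under this smearing as $\hbar\to 0$.

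The hardest step is the interaction lower bound. Because the semi-classical and mean-field scales are coupled, the interaction energy is of order $\hbar^{-d}$, the same as the kinetic and entropy contributions, so it cannot be absorbed as a perturbation. One must transfer $\iint w(x-y)\rho_\Gamma(x)\rho_\Gamma(y)$ evaluated on the true one-particle density of $\Gamma_N$ to the same bilinear form evaluated on the Husimi density $\hbar^{-d}\rho_{m_N}$, with an error that is $o(\hbar^{-d})$ after rescaling; this is exactly where $\widehat w\ge 0$ is used essentially via completion of squares, and where the conjectural entropy inequality alluded to in Remark~\ref{rmk:inequality_entropy} would be needed to cover general $w$. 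Simultaneously tuning the coherent-state width, the trial-state box size, and the interaction smearing length so that every remainder is $o(\hbar^{-d})$ is the bookkeeping heart of the proof.
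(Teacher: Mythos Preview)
Your upper-bound sketch matches the paper's construction closely. Your lower-bound route, however, diverges from the paper and leaves a genuine gap in the $k$-particle convergence.

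\textbf{Lower bound and one-particle convergence.} The paper does \emph{not} transfer the Hartree form $\iint w\,\rho_{\Gamma}\rho_{\Gamma}$ to the Husimi density, which you correctly identify as the delicate step in your scheme. Instead it applies the inequality coming from $\widehat w\ge 0$ with the \emph{fixed} test density $\varphi=\tfrac{N}{\rho}\rho_{m_0}$ (the Vlasov minimizer), obtaining
\[
\sum_{j<k}w(x_j-x_k)\ \ge\ \sum_j w\ast\tfrac{N}{\rho}\rho_{m_0}(x_j)\ -\ \tfrac{N^2}{2\rho^2}\!\int(\rho_{m_0}\ast w)\rho_{m_0}\ -\ \tfrac{N}{2}w(0).
\]
This reduces the interacting lower bound to the \emph{non-interacting} problem with the effective potential $V^{\rm eff}=V+\rho^{-1}w\ast\rho_{m_0}-\mu$, whose semiclassical limit is handled once and for all by the quasi-free entropy bound $\tr\Gamma\log\Gamma\ge\tr s(\Gamma^{(1)})$ followed by coherent states. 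For the strong $L^1$ convergence of $m_{f,\Gamma_N}^{(1)}$ the paper does not use Brezis--Lieb (which would require a.e.\ convergence you do not have) but rather Pinsker: writing $s(m_{f,\Gamma_N}^{(1)})-s(m_0)$ as a relative entropy plus a linear remainder and using $\mathcal H(m\,\|\,m_0)\ge\tfrac12\|m-m_0\|_{L^1}^2$.

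\textbf{The real gap: $k$-particle Husimi measures.} Your claim that ``strict convexity again concentrates this integral at the Dirac mass on $m_0$'' is exactly what the paper shows \emph{cannot} be concluded from the available ingredients. The Berezin--Lieb step you invoke yields only
\[
\liminf\ \hbar^d\tr\Gamma_N\log\Gamma_N\ \ge\ \frac{1}{(2\pi)^d}\iint s\!\left(\int m\,\mathrm d\mathcal P(m)\right),
\]
i.e.\ $s$ of the barycenter $\overline m$, not $\int\!\big(\iint s(m)\big)\mathrm d\mathcal P(m)$ (the latter is precisely the conjectured inequality of Remark~\ref{rmk:inequality_entropy}). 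Since $s$ is convex, $s(\overline m)$ is the \emph{smaller} of the two, so equality in the chain of lower bounds only forces $\overline m=m_0$; it says nothing about concentration of $\mathcal P$. With $\widehat w>0$ one further gets that $\mathcal P$-a.e.\ $m$ shares the density $\rho_{m_0}$, but still not $\mathcal P=\delta_{m_0}$. The paper closes this gap for the exact Gibbs state by a Feynman--Hellmann argument: perturbing $H_{N,\hbar}$ by $\varepsilon B_{N,\hbar}$ with $B_\hbar$ the coherent-state quantization of a test function $b\ge 0$ (so $\|B_{N,\hbar}\|=O(1)$ uniformly in $N$), proving $f_N(\varepsilon)=\tr e^{-\beta(H+\varepsilon B)}/\tr e^{-\beta H}\to e^{-\varepsilon\beta(2\pi)^{-d}\iint b\,m_0}$ with all derivatives, and identifying $f_N''(0)$ with $\beta^2(2\pi)^{-2d}\iint b\otimes b\,m^{(2)}_{f,\Gamma_{N,\hbar,\beta}}+o(1)$ via a double-commutator estimate (Lemma~\ref{lem:commutators}). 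Comparing the two expressions for the limit of $f_N''(0)$ forces the de Finetti measure to satisfy $\int(\iint b\,m)^2\mathrm d\mathcal P=(\iint b\,m_0)^2$ for all $b$, hence $\mathcal P=\delta_{m_0}$. Approximate Gibbs states are then handled by the quantum Pinsker inequality on the full $N$-body relative entropy. This entire mechanism is absent from your proposal and is not replaceable by strict convexity of $\Esc^{\beta,\rho}$.
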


The Husimi function $m_{f,\Gamma_N}^{(k)}$ (based on a given shape function $f$) and the Wigner measure $\mathcal{W}_{\Gamma_{N}}^{(k)}$ are defined and studied at length in~\cite{FouLewSol-18}. These are some natural semiclassical measures that can be associated with $\Gamma_N$ in the $k$-particle phase space $\R^{2dk}$. We will recall their definition in the proof later in Section~\ref{sec:CV_states}. 

\begin{rem}
For simplicity we work with a confining potential $V$ but Theorems~\ref{theo:min_Vlasov} and~\ref{theo:TL_Can_Sc} hold the same when $\mathbb{R}^d$ is replaced by a bounded domain $\Omega$ with any boundary conditions.
\end{rem}

\begin{rem}
Our lower bound relies on the strong assumption that $\widehat{w} \geq 0$, but the upper bound does not. It is classical that a positive Fourier transform allows to easily bound the interaction from below by a one-body potential, see  \cref{ineq:w_hat_pos} below. 
\end{rem}

\begin{rem}
As mentioned in~\eqref{eq:approx_min}, we can handle approximate Gibbs states with a free energy close to the minimum with an error $o(1)$, although the energy is itself of order $N$. Our proof actually applies to the one-particle Husimi function under the weaker condition that $\Ecan^{N,\hbar}(\Gamma_N) = \ecan(\hbar,N) + o(N)$ but our argument does not easily generalize to higher order Husimi functions. Of course, for the exact quantum Gibbs state we have equality $\Ecan^{N,\hbar}(\Gamma_{N,\hbar,\beta}) = \ecan(\hbar,N)$.
\end{rem}

\begin{rem}
\label{rmk:inequality_entropy}
Without the assumption $\widehat{w} \geq 0$, the Vlasov functional $\Esc^{\beta,\rho}$ can have several minimizers and the limit in \cref{eq:cv_of_states_husimi} is believed to be an average over the set of minimizers of $\Esc^{\beta,\rho}$. Namely there exists a so called de Finetti measure $\mathcal{P}$~\cite{FouLewSol-18}, concentrated on the set of minimizers for $e^\beta_{\rm Vla}$, such that
\begin{equation*}
m_{f,\Gamma_N}^{(k)}  \rightharpoonup \int m^{\otimes k} {\rm d}\mathcal{P}(m),
\end{equation*}
in the sense defined in Theorem~\ref{theo:TL_Can_Sc}. We conjecture the following Fatou-type inequality on the entropy
\begin{equation}
\liminf_{\substack{N\to \infty \\ \hbar^d N \to \rho}} \hbar^d \tr \Gamma_N \log \Gamma_N \geq \int \left(\int_{\mathbb{R}^{2d}} s(m) \right)\,{\rm d}\mathcal{P}(m)
\label{eq:Fatou_entropy}
\end{equation}
for general sequences $(\Gamma_N)$ with de Finetti measure $\mathcal{P}$. Should this inequality be true, we could remove the assumption $\widehat{w} \geq 0$ in \cref{theo:TL_Can_Sc}. In fact, in our proof we show that the above inequality holds when the right-hand side is replaced by $$\int_{\mathbb{R}^{2d}} s\left( \int m \,{\rm d}\mathcal{P}(m)\right).$$ 
When there is a unique minimizer, the two coincide.
\end{rem}

\begin{exa} [Large atoms in a strong harmonic potential]\rm
The Hamiltonian in \cref{eq:hamiltonian} can describe a large atom in a strong harmonic potential. Indeed, consider $N$ electrons in a harmonic trap and interacting with a nucleus of charge $Z$. In the Born-Oppenheimer approximation, the $N$ electrons are described by the Hamiltonian
\begin{equation*}
\sum_{j=1}^N -\Delta_{x_j} + \omega^2 |x_j|^2 - \frac{Z}{|x_j|} + \sum_{j<k} \frac{1}{|x_j-x_k|}.
\end{equation*}
Scaling length in the manner $x_j = N^{-1/2}x_j'$ we see that this Hamiltonian is unitarily equivalent to
\begin{equation*}
N^{4/3} \left(\sum_{j=1}^N -N^{-\frac{2}{3}}\Delta_{x_j} + \left(\omega N^{-1}\right)^2 |x_j|^2 - \frac{Z N^{-1}}{|x_j|} + \frac{1}{N} \sum_{j<k} \frac{1}{|x_j-x_k|} \right).
\end{equation*}
Hence taking $Z$ proportional to $N$ and $\omega$ proportional to $N$, we obtain the Hamiltonian of \cref{eq:hamiltonian} with $d=3$, $A = 0$, $V(x) = |x|^2$ and $w(x) = |x|^{-1}$. In the limit we find the positive-temperature Thomas-Fermi model for an atom in a harmonic trap, which has stimulated many works in the Physics literature~\cite{FeyMetTel-49,GilPee-55,Latter-55,CowAsh-57}. This convergence has been proved for the first time by Narnhofer and Thirring in~\cite{NarThi-81}, but starting from the grand-canonical model instead of the canonical ensemble as we do here. 
\end{exa}

\subsubsection{The dilute limit}

In this section we deal with the case where the interaction potential has a range depending on $N$ and tending to zero in our limit $N\to\ii$ with $\hbar^d N\to\rho$. This is classically taken into account by choosing the interaction in the form
\begin{equation}
w_N (x) := N^{d\eta} w(N^\eta x)
\end{equation}
for a fixed $w$ and a fixed parameter $\eta>0$. In our confined system, the average distance between the particles is of order $N^{-1/d}\simeq \rho^{-1/d}\hbar$. The system is dilute when the particles interact rarely, that is, $\eta>1/d$. For bosons in 3D, the limit involves the finite-range interaction $4\pi a \delta_0$ where $a=\int_{\mathbb{R}^d}w / (4\pi)$ for $\eta<1$ and $a=a_s$, the $s$-wave scattering length $a_s$ when $\eta=1$. Due to the anti-symmetry the $s$-wave scattering length does not appear for fermions, except if there are several different species, e.g.~with spin. This regime has been studied in \cite{LieSeiSol-05} for the ground state and \cite{Seiringer-06b} at positive temperature, for the infinite translation-invariant gas. Here we extend these results to the confined case but do not consider any spin for shortness, hence we obtain a trivial limit.  
Our main result for dilute systems is the following.

\begin{theo}[Dilute limit]
\label{theo:dilute_limit}
Let $ \beta_0, \rho > 0 $. We assume that $ V \in L_{\loc}^{1+d/2} \myp{\mathbb{R}^d} $ is such that $ V \myp{x} \to \infty $ at infinity and that $ \int e^{-\beta_0 V_+\myp{x}} \, \mathrm{d} x < \infty $. Furthermore, assume that $ |A|^2 \in L^{1+d/2} \myp{\mathbb{R}^d} + L_{\varepsilon}^{\infty} \myp{\mathbb{R}^d} $ and $w\in L^1(\mathbb{R}^d) \cap L^{1+d/2}(\mathbb{R}^d)$ is even.

\medskip

\noindent$\bullet$ If $0< \eta < 1/d$ and $\widehat{w}\geq 0$ then, for all $\beta > \beta_0$ we have
\begin{equation*}
\lim_{\substack{N\to\infty \\ \hbar^d N \to \rho}} \hbar^d \ecan(\hbar,N) = \textrm{e}^{\beta,(\int_{\mathbb{R}^d} w)\delta_0}_{\mathrm{Vla}}(\rho)
\end{equation*}
where $\textrm{e}^{\beta,(\int_{\mathbb{R}^d} w)\delta_0}_{\mathrm{Can}}(\rho)$ is the minimum of the Vlasov energy with interaction potential $(\int_{\mathbb{R}^d} w)\delta_0$.  

\medskip

\noindent$\bullet$ If $\eta > 1/d$, $d\geq3$ and $w \geq 0$ is compactly supported, then for all $\beta > \beta_0$ we have
\begin{equation*}
\lim_{\substack{N\to\infty \\ \hbar^d N \to \rho}} \hbar^d \ecan(\hbar,N) = \textrm{e}^{\beta,0}_{\mathrm{Vla}}(\rho)
\end{equation*}
where $\textrm{e}^{\beta,0}_{\mathrm{Can}}(\rho)$ is the minimum of the Vlasov energy without interaction potential.

\medskip

In both cases, we have the same convergence of approximate Gibbs states as in Theorem~\ref{theo:TL_Can_Sc}.
\end{theo}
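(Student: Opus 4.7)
The plan is to deduce both regimes from \cref{theo:TL_Can_Sc} by replacing, in the limit, the rescaled interaction $w_N$ by a simpler $N$-independent potential and controlling the error.

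\textbf{Case 1} ($0<\eta<1/d$, $\widehat{w}\geq 0$). Here I would re-run the proof of \cref{theo:TL_Can_Sc} with $w_N$ in place of $w$, and then pass the resulting $N$-dependent Vlasov minimum to the limit. The key hypothesis transfers: $\widehat{w_N}(k)=\widehat{w}(k/N^{\eta})\geq 0$, so the coherent-states plus entropy-inequality lower bound of \cref{theo:TL_Can_Sc} still applies, with semiclassical errors controllable because the range $N^{-\eta}$ of $w_N$ stays much larger than $\hbar=N^{-1/d}$ when $\eta<1/d$. For the upper bound I would use the trial state of \cref{sec:trial_states}, whose density is essentially constant on boxes of side $\ell$ with $\ell\gg\hbar$; choosing also $\ell\gg N^{-\eta}$ (possible since $\eta<1/d$), the trial interaction approximates $(2\rho)^{-1}\iint w_N(x-y)\nu(x)\nu(y)\,\mathrm{d}x\,\mathrm{d}y$, matching $e^{\beta,w_N}_{\rm Vla}(\rho)$ up to $o(1)$. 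The last step is the Vlasov continuity
\[
\lim_{N\to\infty} e^{\beta,w_N}_{\rm Vla}(\rho)=e^{\beta,(\int w)\delta_0}_{\rm Vla}(\rho),
\]
which follows from \cref{theo:min_Vlasov}: the minimizers $m_0^{(N)}$ are uniformly bounded in $L^1\cap L^{1+2/d}$, and extracting a weak limit together with a standard mollification shows $\iint w_N(x-y)\rho(x)\rho(y)\,\mathrm{d}x\,\mathrm{d}y\to(\int w)\int\rho^2$ for such densities, yielding both inequalities.

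\textbf{Case 2} ($\eta>1/d$, $d\geq 3$, $w\geq 0$ with compact support). The lower bound is immediate: since $w_N\geq 0$, the interaction term can be dropped and \cref{theo:TL_Can_Sc} with $w=0$ gives $\liminf\hbar^d\ecan(\hbar,N)\geq e^{\beta,0}_{\rm Vla}(\rho)$. For the matching upper bound I would use the non-interacting trial state of \cref{sec:trial_states} and show its residual interaction energy is negligible. The mechanism is the Pauli hole: the two-body density of the trial state, built from local free Fermi gases, satisfies $\rho^{(2)}(x,y)\leq C\hbar^{-2d-2}|x-y|^2$ for $|x-y|\leq \hbar$, while $w_N$ is supported in $\{|z|\leq CN^{-\eta}\}\subset\{|z|\leq\hbar\}$ for large $N$. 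Using the scaling identity $\int w_N(z)|z|^2\,\mathrm{d}z=N^{-2\eta}\int w(u)|u|^2\,\mathrm{d}u$, the interaction contribution is bounded by
\[
\frac{\hbar^d}{N}\iint w_N(x-y)\rho^{(2)}(x,y)\,\mathrm{d}x\,\mathrm{d}y\leq C\,N^{2/d-2\eta},
\]
which is $o(1)$ exactly when $\eta>1/d$.

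\textbf{States and main obstacle.} Once the energy limit is established, convergence of the Husimi and Wigner measures is inherited from \cref{theo:TL_Can_Sc}: in Case 1 the limiting Vlasov functional with $\delta_0$-interaction retains strict convexity (its Fourier symbol is the constant $\int w\geq 0$), and in Case 2 the limit is the strictly convex non-interacting problem, so in both cases the minimizer is unique and the de Finetti compactness argument of \cref{theo:TL_Can_Sc} transfers directly. I expect the main technical obstacle to be the quantitative Pauli-hole estimate for the trial state in Case 2: that state is a patchwork of local free Fermi gases, and one must verify that $\rho^{(2)}$ vanishes quadratically on the diagonal with a constant uniform in the box size, including across box boundaries where antisymmetrization acts nontrivially. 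The restriction $d\geq 3$ enters precisely in the short-distance integrability balance above, where in $d=1,2$ neither the Pauli factor $|x-y|^2$ nor the scaling of $\int w(u)|u|^2\,\mathrm{d}u$ provides enough smallness to dominate the singular concentration of $w_N$.
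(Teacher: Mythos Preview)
Your plan has the right skeleton, but in both cases the execution diverges from the paper's in ways that are not merely cosmetic.

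In Case~1 you propose a two-step route: first show $\hbar^d\ecan(\hbar,N)=e^{\beta,w_N}_{\rm Vla}(\rho)+o(1)$ by re-running \cref{theo:TL_Can_Sc} with $w_N$, then pass $e^{\beta,w_N}_{\rm Vla}(\rho)$ to its limit. The difficulty is that the lower bound of \cref{theo:TL_Can_Sc}, after the linearization \eqref{ineq:w_hat_pos}, lands on a non-interacting problem with the $N$-dependent effective potential $V+\rho^{-1}w_N\ast\rho_{m_0^{(N)}}$, and you would need the semiclassical limit of \cref{prop:gcsemilimit} to hold uniformly along that moving family of potentials --- this is not automatic. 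The paper avoids the issue entirely: it inserts into \eqref{ineq:w_hat_pos} the density $\rho_{m_0}$ of the \emph{fixed} limiting ($a\delta_0$) minimizer, obtaining the $N$-independent $V^{\rm eff}=V+\tfrac{a}{\rho}\rho_{m_0}-\mu$, and is then left with a single residual term $\hbar^d\tr(w_N\ast\rho_{m_0}-a\rho_{m_0})\Gamma_N^{(1)}$, which vanishes by H\"older and the regularity \cref{lem:rho_regularity}. For the upper bound, note also that the paper takes $\ell\ll N^{-\eta}$, not $\ell\gg N^{-\eta}$ as you write; the Lieb--Thirring control of the exchange correction in \cref{prop:trial_states} produces an error of order $(N^\eta\ell)^{d^2/(d+2)}$, which only vanishes under the paper's choice.

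In Case~2 the paper does not use a Pauli-hole estimate at all. It inserts a Jastrow factor $F=\prod_{i<j}\varphi_s(x_i-x_j)$ with $N^{-\eta}\ll s\ll\hbar$ into the trial state, so that no two particles lie within the range of the compactly supported $w_N$ and the trial interaction is \emph{exactly zero}. Your alternative would require a uniform bound $\rho^{(2)}(x,y)\leq C\hbar^{-2d-2}|x-y|^2$ for the canonical positive-temperature free-Fermi state in each box; this is plausible but not standard, since that state is a mixture of Slater determinants rather than a quasi-free state. Your explanation of the $d\geq3$ restriction is also off: your own bound $N^{2/d-2\eta}\to0$ whenever $\eta>1/d$, in any dimension; in the paper $d\geq3$ enters through the $s^{d-2}$ kinetic-energy error generated by the Jastrow factor in \cref{prop:trial_states}. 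Finally, the convergence of the $k$-particle Husimi functions for $k\geq2$ is not simply ``inherited'' from \cref{theo:TL_Can_Sc}: the paper runs a separate Feynman--Hellmann argument together with a double-commutator estimate (\cref{lem:commutators}), and that commutator computation has to be redone with the $N$-dependent interaction $w_N$ in the Hamiltonian.
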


The proof of Theorem~\ref{theo:dilute_limit} is given in Section \ref{sec:proof_interacting}.

%%%%%%%%%%%%%%%%%%%%%%%%%%%%%%%%%%%%%%%%%%%%%%%%%%%%%
%%%%%%%%%%%%%%%%%%%%%%%%%%%%%%%%%%%%%%%%%%%%%%%%%%%%%
\section{Contruction of trial states}
\label{sec:trial_states}
%%%%%%%%%%%%%%%%%%%%%%%%%%%%%%%%%%%%%%%%%%%%%%%%%%%%%
%%%%%%%%%%%%%%%%%%%%%%%%%%%%%%%%%%%%%%%%%%%%%%%%%%%%%

In this section we construct a trial state for the proof of the upper bound. In the dilute case this construction is similar to the one in \cite{Seiringer-06b} where the thermodynamic limit of non-zero spin interacting fermions were studied in the grand-canonical picture. In particular we will make use of \cite[Lemma 2]{Seiringer-06b}. Precisely we prove the following proposition.

\begin{prop}[Canonical trial states]
\label{prop:trial_states}
Let $\rho_0 \in C^\infty_c(\mathbb{R}^d)$ be such that $\int_{\mathbb{R}^d} \rho_0 = 1$. Assume $|A|^2 \in L^{1+d/2}(\mathbb{R}^d)$, $w \in L^1(\mathbb{R}^d)\cap L^{1+d/2}(\mathbb{R}^d)$. If  $\eta d > 1$, we assume $w$ to be compactly supported. Then, there is a sequence of canonical states $\Gamma_N$ on $\bigwedge_{i=1}^N L^2(\mathbb{R}^d)$ satisfying
\begin{equation}\label{eq:trial_state_cv_energy}
\hbar^d \left(\tr \sum_{i=1}^N (i\hbar \nabla + A)^2 \Gamma_N + \tr \Gamma_N \log \Gamma_N \right)\underset{\substack{N\to\infty\\\hbar^d N \to 1}}{\longrightarrow} \int_{\mathbb{R}^d} F_\beta(\rho_0),
\end{equation}
\begin{equation}\label{eq:trial_state_cv_onebody}
\left\| \frac{1}{N} \rho_{\Gamma_N}^{(1)} - \rho_0\right\|_{L^1(\mathbb{R}^d)} \underset{\substack{N\to\infty\\\hbar^d N \to 1}}{\longrightarrow} 0
\end{equation} 
and
\begin{equation}\label{eq:trial_state_cv_twobody}
\frac{\hbar^d}{N}\int_{\mathbb{R}^{2d}} w_N(x-y) \rho_{\Gamma_N}^{(2)}(x,y) dx dy \underset{\substack{N\to\infty\\\hbar^d N \to 1}}{\longrightarrow} 
	\left\{
		\begin{array}{ll}
		\int_{\mathbb{R}^d} (w \ast \rho_0) \rho_0  & \mathrm{ if } \; \eta = 0 \\[3pt]
		\left(\int_{\mathbb{R}^d} w\right) \int_{\mathbb{R}^d} \rho_0^2 & \mathrm{ if } \;  0 < d \eta < 1 \\[3pt]
		0 & \mathrm{ if } \; d\eta > 1, d\geq 3.
		\end{array}
		\right.
\end{equation}
Furthermore, we can take $\rho_{\Gamma_N}^{(1)}$ to be supported in a compact set which is independent of $N$ and uniformly bounded in $L^{\infty}(\mathbb{R}^d)$ so that the convergence (\ref{eq:trial_state_cv_onebody}) holds in fact in all $L^p(\mathbb{R}^d)$ for $1 \leq p < \infty$.
\end{prop}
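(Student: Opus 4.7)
The strategy is to tile the (bounded) support of $\rho_0$ into disjoint cubes $\{Q_k\}_k$ of common side length $\ell_N$ with $\hbar\ll\ell_N\ll1$ (say $\ell_N=\hbar^{1-\delta}$ for a small $\delta>0$), set $\rho_k:=|Q_k|^{-1}\int_{Q_k}\rho_0$, and choose integers $N_k=\lfloor\rho_k|Q_k|/\hbar^d\rfloor$ adjusted by $O(1)$ in finitely many cubes so that $\sum_k N_k=N$. On each $Q_k$, let $\Gamma_k$ be the $N_k$-particle canonical Gibbs state at inverse temperature $\beta$ of the non-magnetic free Fermi gas in $L^2(Q_k)$ (with, say, Dirichlet boundary conditions) and kinetic operator $-\hbar^2\Delta$. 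The trial state is then the fermionic tensor product $\Gamma_N=\bigwedge_k\Gamma_k$, which is well-defined because the single-particle subspaces $L^2(Q_k)$ are pairwise orthogonal, and it automatically carries exactly $N$ particles.

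For the energy, I would apply the canonical--grand-canonical equivalence of \cite[Lemma 2]{Seiringer-06b} on each cube at the local chemical potential $\mu_{\rm FG}(\beta,\rho_k)$, obtaining
\begin{equation*}
\hbar^d\left(\tr_{Q_k}(-\hbar^2\Delta)\Gamma_k+\beta^{-1}\tr\Gamma_k\log\Gamma_k\right)=|Q_k|F_\beta(\rho_k)(1+o(1))
\end{equation*}
uniformly in $k$, together with the uniform density bound $\hbar^d\rho_{\Gamma_k}^{(1)}\leq C\|\rho_0\|_\infty$ and the convergence $\hbar^d\rho_{\Gamma_k}^{(1)}\to\rho_k$ locally. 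Summing over the $O(\ell_N^{-d})$ non-empty cubes and passing to the limit by dominated convergence (using the continuity of $F_\beta$ and $\rho_k\to\rho_0$ a.e.) yields \eqref{eq:trial_state_cv_energy} for $A\equiv 0$. The magnetic contribution is incorporated by adding $\tr(2A\cdot i\hbar\nabla+|A|^2)\Gamma_N$: the quadratic term $\int|A|^2\rho_{\Gamma_N}^{(1)}$ is $O(\hbar^d\cdot\hbar^{-d})=O(1)$ but its contribution to the \emph{per-particle} energy $\hbar^d\,\tr(\cdots)$ tends to $\int|A|^2\rho_0$, which is absorbed in $F_\beta$ once we gauge the minimizer (recall from the discussion after Theorem~\ref{theo:min_Vlasov} that $A$ has no effect on $F_\beta$, so the proper trial state must actually be built from $m_{\nu,A}$; concretely one tensors the construction with a gauge phase $e^{iA(x_k)\cdot x/\hbar}$ in each cube). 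The cross term is handled by Cauchy--Schwarz against the kinetic energy, and Dirichlet boundary corrections are $O(\ell_N^{-2}\,\hbar^{-d+2})=o(\hbar^{-d})$. For the density, $\rho_{\Gamma_N}^{(1)}=\sum_k\rho_{\Gamma_k}^{(1)}$ (disjoint supports), and strong $L^1$ convergence in \eqref{eq:trial_state_cv_onebody} follows from the piecewise-constant approximation $\rho_k\to\rho_0$ together with the uniform $L^\infty$ bound; $L^p$ convergence for $p<\infty$ then follows by interpolation on the fixed compact support.

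For the interaction we exploit the tensor structure: since $\Gamma_k$ is quasi-free and the cubes are disjoint,
\begin{equation*}
\rho_{\Gamma_N}^{(2)}(x,y)=\sum_k\rho_{\Gamma_k}^{(2)}(x,y)\mathbf{1}_{Q_k\times Q_k}+\sum_{k\ne j}\rho_{\Gamma_k}^{(1)}(x)\rho_{\Gamma_j}^{(1)}(y)\mathbf{1}_{Q_k\times Q_j}.
\end{equation*}
The cross-cube sum is essentially $\rho_{\Gamma_N}^{(1)}(x)\rho_{\Gamma_N}^{(1)}(y)\approx N^2\rho_0(x)\rho_0(y)$, which after multiplication by $\hbar^d/N$ and integration against $w_N$ gives, via a standard weak-convergence argument, the displayed limits $\int(w*\rho_0)\rho_0$ when $\eta=0$ and $(\int w)\int\rho_0^2$ when $0<d\eta<1$ (using $w_N\rightharpoonup(\int w)\delta_0$). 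The diagonal same-cube contribution is lower order in these two regimes but is the main obstacle when $d\eta>1$: there $w_N$ concentrates on a scale $N^{-\eta}\ll\hbar$, smaller than the interparticle spacing, so one cannot replace $\rho^{(2)}$ by the product of densities. Instead, I would use the Pauli-induced short-distance vanishing of the free-Fermi-gas pair density, $\rho_{\Gamma_k}^{(2)}(x,y)\lesssim(\rho_k/\hbar^d)^2\,\hbar^{-2}|x-y|^2$ for $|x-y|\lesssim\hbar$, which comes from the expansion $\gamma_k(x,y)=\gamma_k(x,x)-C|x-y|^2+\cdots$ and the identity $\rho_{\Gamma_k}^{(2)}=(\rho_k^{\rm phys})^2-|\gamma_k|^2$. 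Integrating $w_N(x-y)|x-y|^2$ over a cube yields a factor $N^{-2\eta}$; the total same-cube interaction is then bounded by $\hbar^{-d-2}N^{-2\eta-1}(\int\rho_0^2)(\int w|z|^2\,dz)\cdot(\hbar^d/N)^{-1}$, i.e.\ of order $N^{2/d-2\eta}\to 0$ when $d\eta>1$ and $d\geq 3$, giving the third case of \eqref{eq:trial_state_cv_twobody}. Justifying this short-distance estimate at positive temperature in a box (rather than in the translation-invariant free gas) and transferring it to the canonical $\Gamma_k$ via \cite[Lemma 2]{Seiringer-06b} is the most delicate part of the argument.
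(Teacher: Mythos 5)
Your overall plan — tile the support of $\rho_0$ into cubes of size $\ell$ with $\hbar\ll\ell\ll 1$, place a free canonical Fermi gas in each cube, take the fermionic tensor product, invoke the thermodynamic limit per cube for the energy, and split $\rho^{(2)}$ into same-cube and cross-cube parts for the interaction — is the same as the paper's. The genuine divergence is in the dilute regime $d\eta>1$, and here the two approaches are really different. The paper works with periodic boundary conditions in a slightly smaller sub-cube (then regularizes to Dirichlet via a smooth cutoff $\sqrt{\mathds 1\ast\chi_\varepsilon}$) and, crucially, multiplies each Slater determinant by a Jastrow factor $F=\prod_{i<j}\varphi_s(x_i-x_j)$ with $N^{-\eta}\ll s\ll\hbar$; since $w$ is compactly supported, this makes $\tr w_N(x-y)\,\Gamma$ \emph{identically zero} for $N$ large, and the only price is an entropy correction controlled by Seiringer's Lemma~2 plus small kinetic error terms. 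You instead propose to leave the state uncorrelated and use the Pauli-induced short-distance vanishing $\rho^{(2)}_{\Gamma_k}(x,y)\lesssim\hbar^{-2(d+1)}|x-y|^2$ for $|x-y|\lesssim\hbar$. Your scaling count $N^{2/d-2\eta}\to 0$ is correct, and such a bound is indeed plausible for the periodic free gas: writing $\gamma_{\Psi_k}(u)=\ell^{-d}\sum_j e^{-i 2\pi k_j\cdot u/\ell}$, the inequality $\cos t\geq 1-t^2/2$ gives $\rho^{(2)}_{\Psi_k}(x,y)\leq 2\bar\rho\, T_k|x-y|^2$ with $T_k$ the kinetic energy density, and averaging over the canonical weights gives the claimed bound since $\hbar^2\bar T=O(\bar\rho)$. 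So your route can in principle be made to work and is somewhat more physical, but it is strictly harder to finalize: the bound has to be carried through the Dirichlet boundary modification and the positive-temperature canonical (rather than translation-invariant grand-canonical) state, and you would need to justify it quantitatively rather than pointing to the translation-invariant heuristic. The paper's Jastrow device avoids this entirely at the cost of one extra correlation scale $s$.

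Two smaller points. First, you misattribute Seiringer's Lemma~2: it is not a canonical/grand-canonical equivalence statement but an entropy inequality $\tr\Gamma\log\Gamma\leq\tr\widetilde\Gamma\log\widetilde\Gamma-\log\min_kZ_k$ for the Jastrow-correlated state (which you do not use); the local thermodynamic limit $e^{\beta,\mathrm{per}}_{\rm Can}(\widetilde\Lambda_z,\hbar,N_z)=\hbar^{-d}\ell^d F_\beta(\rho_z)+o(\hbar^{-d}\ell^d)$ the paper invokes comes from Robinson and Ruelle. Second, the identity $\rho^{(2)}=(\rho^{(1)})^2-|\gamma|^2$ holds for each Slater determinant, not for the canonical mixture itself; one should decompose $\rho^{(2)}_{\Gamma_k}=\sum_m\lambda_m\rho^{(2)}_{\Psi_m}$ first and then bound term by term — which your scaling implicitly does, but the text should say so. Your discussion of the magnetic term is also somewhat muddled (the leading behaviour $\int|A|^2\rho_0$ is not absorbed in $F_\beta$; rather, gauging away a piecewise-constant approximation of $A$ removes it, and the residual is controlled by $|A|^2\in L^{1+d/2}$ and the uniform $L^{1+2/d}$ bound on $\hbar^d\rho^{(1)}$, exactly as for $V$); the paper is terse on this point too, so I only flag it.
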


\begin{proof}
The proof consists in dividing the space into small cubes in which we take a correlated version of the minimizer for the free case and then do the thermodynamic limit in these cubes. This choice allows us to control the one-body density, which will be almost constant in these boxes. Without loss of generality, we will write the proof for $A=0$. The proof is the same for $A\neq 0$.

\subsection*{Step 1. Definition of the trial state}

Let $\rho_0 \in C^{\infty}_c(\mathbb{R}^d)$ and take $R>0$ such that $\supp \rho_0 \subset [-R/2,R/2)^d =: C_R$. Divide $C_R$ in small cubes of size $\ell >0$, $C_R \subset \bigcup_{z\in B_\infty(R\ell^{-1}) \cap \mathbb{Z}^d}  \Lambda_z$ with $\Lambda_z := z\ell + [-\ell/2,\ell/2)^d$. We will take later $1\gg \ell \gg \hbar$. For all $z$ define $N_z := \floor{\hbar^d \ell^d \min_{\Lambda_z}\rho_0}$ so that $\sum_z N_z \leq N$. For $0< \varepsilon < \ell/4$ and for all $z$, define the box 
$$\widetilde{\Lambda}_z := z \ell + \left[-\frac{\ell-\varepsilon}2,\frac{\ell-\varepsilon}2\right)^d \subset \Lambda_z $$ 
and denote by 
\begin{equation*}
\widetilde{\Gamma}_z = \frac{e^{-\beta\left(\sum_{i=1}^{N_z} -\hbar^2 \Delta^{\per}_i \right)}}{Z_z}
	= \sum_{k\in \mathcal{P}_{N_z}(\mathbb{Z}^d)} \lambda_k \ket{e_{k_1}\wedge \cdots \wedge e_{k_{N_z}}}\bra{e_{k_1}\wedge \cdots \wedge e_{k_{N_z}}}
\end{equation*}
the canonical minimizer of the free energy at inverse temperature $\beta$ of $N_z$ free fermions in the box $\widetilde{\Lambda}_z$ with periodic boundary conditions, where $\mathcal{P}_{n}(E)$ denotes the set of all subset of $E$ with $n$ elements. For $j\in \mathbb{Z}^d$, $$e_j(x) = (\ell-\varepsilon)^{-d/2} e^{i \frac{2\pi }{\ell-\varepsilon} j \cdot x}$$ are the eigenfunctions of the periodic Laplacian in $\widetilde{\Lambda}_z$ and $\lambda_k$ the eigenvalues of $\widetilde{\Gamma}_z$ associated with $e_k= e_{k_1}\wedge ... \wedge e_{k_{N_z}}$. Note that we omit the $z$ dependence of $\lambda_k$ and $e_k$. We now regularize these functions and construct a state in the slightly larger cube $\Lambda_z$ with Dirichlet boundary condition. Let $\chi \in C^\infty(\mathbb{R}^d)$ such that $\chi \equiv 0$ in $\mathbb{R}^d \setminus B(0,1)$, $\chi \geq 0$ and $\int_{\mathbb{R}^d} \chi = 1$, denote $\chi_\varepsilon = \varepsilon^{-d} \chi(\varepsilon^{-1}\cdot)$ and define for $j\in \mathbb{Z}^d$
\begin{equation*}
f_j := e_j \sqrt{\mathds{1}_{\widetilde{\Lambda}_z} \ast \chi_\varepsilon}.
\end{equation*}
Note that
\begin{align*}
\int_{\Lambda_z} f_j \overline{f_k} 
&= \int e_j \overline{e_k} \, (\mathds{1}_{\widetilde{\Lambda}_z} \ast \chi_\varepsilon) = \int \int_{\widetilde{\Lambda}_z}  e_j(x) \overline{e_k}(x) \chi_\varepsilon(y-x) \mathrm{d} y \mathrm{d} x \\
&=  \int_{\widetilde{\Lambda}_z}  e_j \overline{e_k} \int_{\mathbb{R}^d} \chi = \delta_{j,k}.
\end{align*}
Hence the family $(f_j)_j$ is still orthonormal and one can check that it satisfies $f_j \equiv e_j$ in $ [-(\ell-2\varepsilon)/2,(\ell-2\varepsilon)/2)^d$ and as well as the Dirichlet boundary condition on $\Lambda_z$. Besides from having a state satisfying the Dirichlet boundary condition, we also want to add correlations in order to deal with the $d\eta > 1$ case. Let $\varphi \in C^\infty_c(\mathbb{R}^d)$ such that $\varphi \equiv 0$ in $B(0,1)$, $\varphi \equiv 1$ in $B(0,2)^c$ and $\varphi \leq 1$ almost everywhere and for $s>0$ denote $\varphi_s = \varphi(s^{-1}\cdot)$. Following \cite{Seiringer-06b}, we define the correlation function $F(x_1,...,x_{N_z}) = \prod_{i<j} \varphi_s(x_i-x_j)$ and the state
\begin{equation*}
\Gamma_z = \sum_{k\in \mathcal{P}_{N_z}(\mathbb{Z}^d)} \lambda_k Z_k^{-1}  \ket{F f_{k_1}\wedge ... \wedge f_{k_{N_z}}}\bra{F f_{k_1}\wedge ... \wedge f_{k_{N_z}}},
\end{equation*}
where $Z_k = \|F f_{k_1}\wedge ... \wedge f_{k_{N_z}} \|_{L^2(\Lambda_z^{N_z})}^2$ are normalization factors. Now consider the state 
$$\Gamma := \bigwedge_{z} \Gamma_z.$$
We will show that $\Gamma$ satisfies the three limits \cref{eq:trial_state_cv_energy},~\eqref{eq:trial_state_cv_onebody} and~\eqref{eq:trial_state_cv_twobody}. This state does not have the exact number of particle $N$ but satisfies $\sum_z N_z = N - \mathcal{O}(\ell N)$. Hence we will only have to correct the particle number by adding $\mathcal{O}(\ell N)$ uncorrelated particles of low energy, for instance outside the support of $\rho_0$. This will not modify the validity of the three limits. Now we focus on $\Gamma$ and compute its free energy. 

In the case $\eta d < 1$, we choose the following regime for the parameters introduced above.
\begin{equation*}
s \ll \hbar \ll \varepsilon \ll \ell \ll N^{-\eta} \textrm{ and } s \ell \ll \hbar^2.
\end{equation*}
One could in fact take $\Gamma_{F=1}$ (removing the factor $F$, see below) and remove the dependence in $s$. In the case $\eta d > 1$, the convergence holds in the regime
\begin{equation*}
N^{-\eta} \ll s \ll \hbar \ll \varepsilon \ll \ell \textrm{ and } s \ell \ll \hbar^2.
\end{equation*}

\subsection*{Step 2. Verification of (\ref{eq:trial_state_cv_energy})}

We fix $z$ and work in the cube $\Lambda_z$. Let us first compute the kinetic energy of the correlated Slater determinants appearing in the definition of $\Gamma_z$ (note that this is not a eigenfunction decomposition due of the lack of orthogonality). Let us denote $X = (\sqrt{\mathds{1}_{\widetilde{\Lambda}_z} \ast \chi_\varepsilon})^{\otimes N_z}$ so that $\Psi_k^z := f_{k_1}\wedge ... \wedge f_{k_{N_z}} = X e_{k_1}\wedge ... \wedge e_{k_{N_z}}$ (we will omit the superscript $z$ when there is no ambiguity) and denote $\nabla, -\Delta$ the gradient and the Laplacian for all coordinates $x_1,...,x_{N_z}$ in the box $\Lambda_z$ with Dirichlet boundary condition, we can check that $$\nabla \left(F X e_{k_1}\wedge ... \wedge e_{k_{N_z}} \right) = \left(X \nabla F + F \nabla X + i FX \sum_{j=1}^{N_z} \frac{2\pi k_j}{\ell-\varepsilon} \right)e_{k_1}\wedge ... \wedge e_{k_{N_z}}.$$
Hence,
\begin{equation*}
\tr (-\Delta) \Gamma_z = \sum_{k\in \mathcal{P}_{N_z}(\mathbb{Z}^d)} \frac{\lambda_k}{Z_k} \left(\epsilon_k+ \|(X \nabla F + F \nabla X)e_{k_1}\wedge ... \wedge e_{k_{N_z}}\|_{L^2(\Lambda_z^{N_z})}^2\right)
\end{equation*}
where $$\varepsilon_k := \left|2\pi (\ell-\varepsilon)^{-1} \sum_{j=1}^{N_z} k_j \right|^2$$ is the eigenvalue of $-\Delta^{\per}$ associated with the eigenfunction $e_k$. Note that $\lambda_k \propto e^{-\beta \hbar^2 \epsilon_k}$. We will show that $\sum_{k} \lambda_k Z_k^{-1} \epsilon_k \simeq \sum_{k} \lambda_k\epsilon_k = \tr (-\Delta^{\per}) \widetilde{\Gamma}$ and that  the second summand above is an error term. For that we first need to estimate the normalization factors $Z_k$ and then bound the factor with the $\nabla F$ and $\nabla X$. We will use several times that for any sequence $a_1,...,a_p >0$ we have
\begin{equation}
	\label{ineq:prod_sum}
1 \geq \prod_{n=1}^p(1-a_n) \geq 1 -  \sum_{n=1}^p a_n.
\end{equation}
Hence,
\begin{align}
	\label{eq:estimate_Zk}
Z_k &= \int_{\Lambda_z^{ N_z}} \prod_{1\leq n < m \leq N_z} \varphi_s(x_n-x_m)^2 |\Psi_k|^2 dX \nn \\
	&\geq 1 - \int_{\Lambda_z^{N_z}} \sum_{1\leq n < m \leq N_z}(1-\varphi_s(x_n-x_m)^2)  |\Psi_k|^2 dX \nn  \\
	&\geq 1 - \int_{\Lambda_z^{2}}(1-\varphi_s(x_1-x_2)^2) \rho^{(1)}_{\Psi_k}(x_1)\rho^{(1)}_{\Psi_k}(x_2) dx_1dx_2 \nn \\
	&\geq 1 - C s^d \ell^d \hbar^{-2d},
\end{align}
where we used that $\rho^{(2)}_{\Psi_k}(x,y) \leq \rho^{(1)}_{\Psi_k}(x)\rho^{(1)}_{\Psi_k}(y)$ because $\Psi_k$ is a Slater determinant, and that $\rho^{(1)}_{\Psi_k} = N_z \ell^{-d} \mathds{1}_{\widetilde{\Lambda}_z} \ast \chi_\varepsilon \leq C \hbar^{-d}$.

Then we compute 
\begin{equation*}
|\nabla_{x_1} F|^2 = \sum_{\substack{m\neq n \\ m,n\geq 2}}^{N_z} \frac{\nabla \varphi_s(x_1 - x_m)\cdot \nabla \varphi_s(x_1 - x_n)}{\varphi_s(x_1- x_m) \varphi_s(x_1- x_m)} F^2 + \sum_{m\geq 2}^{N_z} \frac{|\nabla \varphi_s(x_1 - x_m)|^2}{\varphi_s(x_1- x_m)^2} F^2
\end{equation*}
and obtain 
\begin{align*}
\| \nabla F \Psi_k \|_{L^2(\Lambda_z^{N_z})}^2 &\leq C \int_{\Lambda_z^{3d}} |\nabla \varphi_s(x_1 - x_2)|| \nabla \varphi_s(x_1 - x_3)| \times \\
&\quad\quad\quad\quad\quad\quad \times \rho^{(1)}_{\Psi_k}(x_1)\rho^{(1)}_{\Psi_k}(x_2) \rho^{(1)}_{\Psi_k}(x_3) dx_1 dx_2 dx_3 \\
& \quad + C \int_{\Lambda_z^{2d}} |\nabla \varphi_s(x_1 - x_2)|^2  \rho^{(1)}_{\Psi_k}(x_1)\rho^{(1)}_{\Psi_k}(x_2) dx_1 dx_2 \\
&\leq C s^{-2}\left(s^{2d} \ell^d \hbar^{-3d} + s^d \ell^d \hbar^{-2d} \right).
\end{align*}
Now we turn to the $\nabla X$ part. We have $$\nabla_{x_1} X(x_1,...,x_{N_z}) = \frac{\nabla\sqrt{\mathds{1}_{\widetilde{\Lambda}_z} \ast \chi_\varepsilon}(x_1)}{\sqrt{\mathds{1}_{\widetilde{\Lambda}_z} \ast \chi_\varepsilon(x_1)}} X(x_1,...,x_{N_z})$$ and
\begin{align*}
\int_{\Lambda_z^{N_z}} |\nabla X|^2 |e_{k_1}\wedge ... \wedge e_{k_{N_z}}|^2 
	&= \int_{\Lambda_z^{N_z}}\sum_{j=1}^{N_z}  \left|\frac{\nabla\sqrt{\mathds{1}_{\widetilde{\Lambda}_z} \ast \chi_\varepsilon}(x_1)}{\sqrt{\mathds{1}_{\widetilde{\Lambda}_z} \ast \chi_\varepsilon(x_1)}}\right|^2  |\Psi_k|^2 \\
	&= \int_{\Lambda_z} \left|\frac{\nabla\sqrt{\mathds{1}_{\widetilde{\Lambda}_z} \ast \chi_\varepsilon}(x_1)}{\sqrt{\mathds{1}_{\widetilde{\Lambda}_z} \ast \chi_\varepsilon(x_1)}}\right|^2 \rho^{(1)}_{\Psi_k} \\
	&\leq C  \int_{\Lambda_z} \left| \nabla\sqrt{\mathds{1}_{\widetilde{\Lambda}_z} \ast \chi_\varepsilon}(x_1)\right|^2 N_z \ell^{-d}  \\
	& \leq C N_z \ell^{-d} \int_{\Lambda_z} \int |\nabla \sqrt{\chi_\varepsilon}|^2 \leq C \ell^d \hbar^{-d} \varepsilon^{-2},
\end{align*}
where we used the pointwise bound $ \left| \nabla\sqrt{\mathds{1}_{\widetilde{\Lambda}_z} \ast \chi_\varepsilon}(x_1)\right|^2 \leq  \int |\nabla \sqrt{\chi_\varepsilon}|^2 $.
Since $X$ and $F$ are both bounded by $1$ we obtain
\begin{multline*}
\tr (-\Delta) \Gamma_z = \tr (-\Delta^{\per}) \widetilde{\Gamma}_z + \mathcal{O} \bigg(\frac{ s^{2(d-1)} \ell^d \hbar^{-3d} + s^{d-2} \ell^d \hbar^{-2d} +\ell^d \hbar^{-d} \varepsilon^{-2}}{1 - C s^d \ell^d \hbar^{-2d}} \\ + N_z^{1+2/d} s^d \ell^d \hbar^{-2d} \bigg).
\end{multline*}

We proceed with estimating the entropy of $\Gamma_z$. Thanks to \cite[Lemma 2]{Seiringer-06b} we have
\begin{align*}
\tr \Gamma_z \log \Gamma_z 
	&\leq \tr \widetilde{\Gamma}_z \log \widetilde{\Gamma}_z - \log \min_{k} Z_k \\
	&=\tr \widetilde{\Gamma}_z \log \widetilde{\Gamma}_z  +  \mathcal{O}\left( s^d \ell^d \hbar^{-2d}\right),
\end{align*}
where we used the estimate (\ref{eq:estimate_Zk}) on $Z_k$. Combining the last two estimates gives
\begin{align*}
&\tr (-\hbar^2 \Delta) \Gamma + \tr \Gamma \log \Gamma \\
	&\quad = \sum_{z}\tr (-\hbar^2 \Delta) \Gamma_z + \tr \Gamma \log \Gamma_z \\
	&\quad \leq \sum_{z} e_{\mathrm{Can}}^{\beta,\per}(\widetilde{\Lambda}_{z}, \hbar, N_z) + \ell^{-d}  \mathcal{O}\left( s^d \ell^d \hbar^{-2d}\right)
	+ \mathcal{O}\bigg( (\hbar^{-d}\ell^d)^{1+2/d} s^d \ell^d \hbar^{-2d} \bigg) \\
	&\quad \qquad \qquad \qquad + \hbar^2 \ell^{-d}  \mathcal{O} \left(\frac{ s^{2(d-1)} \ell^d \hbar^{-3d} + s^{d-2} \ell^d \hbar^{-2d} +\ell^d \hbar^{-d} \varepsilon^{-2}}{1 - C s^d \ell^d \hbar^{-2d}}\right),
\end{align*}
where we used that $N_z \leq \|\rho_0\|_{L^\infty(\mathbb{R}^d)} \hbar^{-d}\ell^d$. It is a known fact \cite{Robinson-71,Ruelle} (see also \cite{MadThese,TriThese} for more details) that 
\begin{equation}
e_{\mathrm{Can}}^{\beta,\per}(\widetilde{\Lambda}_{z}, \hbar, N_z) = \hbar^{-d}\ell^d F_\beta(N_z/(\hbar^{-d}\ell^d)) + o(\hbar^{-d}\ell^d)
\end{equation}
locally uniformly in $\rho_z := N_z \hbar^d \ell^{-d}$ as $\hbar \to 0$ under the condition $\hbar \ll \ell$. This is the thermodynamic limit of the free Fermi gas. By the continuity of $F_\beta$ and the estimate $N_z / (\hbar^{-d}(\ell-\varepsilon)^d) = \rho(z) + \mathcal{O}(\varepsilon \ell^{-1})$ we obtain
\begin{multline*}
\hbar^d \big(\tr(- \hbar^2 \Delta) \Gamma + \tr \Gamma \log \Gamma\big) 
	\leq \ell^d \sum_{z\in \mathbb{Z}^d} F_\beta(\rho(z)) + o(1) + \mathcal{O}(\varepsilon/ \ell) \\
	 +\mathcal{O}\left( (s\ell / \hbar^2)^{d} \ell^2 \right) + \mathcal{O}\left(s/\hbar\right)^d +  \mathcal{O} \left(\frac{ (s/\hbar)^{2(d-1)} + (s/\hbar)^{d-2} + (\hbar/\varepsilon)^{2}}{1 - C (s \ell /\hbar^2)^d}\right).
\end{multline*}
If $s \ll \hbar \ll \varepsilon \ll \ell $ with the extra condition that $s \ell \ll \hbar^2$ we obtain the upper bound in (\ref{eq:trial_state_cv_energy}) by passing to the limit and by identifying the first term above as a Riemann sum. The lower bound is obtained in the same fashion by seeing $\Gamma_z$ as a trial state for the periodic case.

\subsection*{Step 3. Verification of (\ref{eq:trial_state_cv_onebody})}
Let us recall that $\Gamma_{F=1}$ is the uncorrelated version of the trial state (which corresponds to taking $\varphi \equiv 1$) and that we denote by $\rho_{F=1}^{(k)}$ its $k$-particle density, for  $k\geq 1$. From (\ref{ineq:prod_sum}) and using that $\Gamma_{F=1}$ is a sum of Slater determinants we have
\begin{align*}
N^{-1}&\|\rho_{\Gamma}^{(1)}-\rho_{F=1}^{(1)}\|_{L^1(\mathbb{R}^d)} \\
	&\leq N^{-1} \sum_{z\in \mathbb{Z}^d} \sum_{k\in \mathcal{P}_{N_z}(\mathbb{Z}^d)}  \lambda_k^z \iint_{\mathbb{R}^{2d}} (1-\varphi_s(x_1-x_2)^2)\rho^{(2)}_{\Psi_k^z}(x_1,x_2) dx_1 dx_2 \\
	&\leq C N^{-1} \sum_{z\in \mathbb{Z}^d} \iint_{\Lambda_z^2} (1-\varphi_s(x_1-x_2)^2) N_z^2 \ell^{-2d} dx_1 dx_2 \\
	&\leq C N^{-1} \sum_{z\in \mathbb{Z}^d} s^d \ell^d N_z^2 \ell^{-2d} \\
	&\leq C (s/\hbar)^d.
\end{align*}
We also used that $\rho^{(2)}_{\Psi_k^z}\leq \rho^{(1)}_{\Psi_k^z}\otimes \rho^{(1)}_{\Psi_k^z} \leq \|\rho_0\|_{L^\infty(\mathbb{R}^d)} N_z^2 \ell^{-2d}$. Finally, denoting by $\Gamma_{z,F=1}$ the uncorrelated version of $\Gamma_z$ and by $\rho_{z,F=1}^{(1)}$ its one-body density we have
\begin{align*}
\|N^{-1} \rho_{F=1}^{(1)} - \rho_0\|_{L^1(\mathbb{R}^d)} 
	&\leq \sum_{z} \|N^{-1}\rho_{z,F=1}^{(1)} - \rho_0\mathds{1}_{\Lambda_z}\|_{L^1(\mathbb{R}^d)} \\
	&\leq C \sum_{z} \left(\|\nabla\rho_0\|_{L^\infty(\mathbb{R}^d)} \ell^{d+1} + \|\rho_0\|_{L^\infty(\mathbb{R}^d)} \ell^{d-1} \varepsilon\right) \\
	&\leq C (\ell + \varepsilon/\ell).
\end{align*}
We have used that in $z\ell + [-(\ell-2\varepsilon)/2, (\ell - 2 \varepsilon)/2)^d$, 
\begin{equation*}
	N^{-1}\rho_{z,F=1}^{(1)} = N^{-1}\ell^{-d} \floor{\hbar^{-d} \ell^d \min_{\Lambda_z}\rho_0} = \rho_0 + \mathcal{O}(\hbar^d \ell^d) + \mathcal{O}(\|\nabla\rho_0\|_{L^\infty(\mathbb{R}^d)} \ell)
\end{equation*}
and that 
\begin{equation*}
	\| N^{-1}\rho_{z,F=1}^{(1)} - \rho_0\mathds{1}_{\Lambda_z}\|_{L^\infty(\Lambda_z)} \leq C \|\rho_0\|_{L^\infty(\mathbb{R}^d)}.
\end{equation*}
Under the stated conditions on $\hbar,\ell,s$ and $\varepsilon$ we have $N^{-1} \rho_{\Gamma}^{(1)} \to \rho_0$ in $L^1(\mathbb{R}^d)$.

\subsection*{Step 4. Verification of (\ref{eq:trial_state_cv_twobody})}
Let us first turn to the case $0 \leq \eta d < 1$. Note that the two-particle density matrices satisfy
\begin{align}
\Gamma^{(2)} &= \sum_{z\in \mathbb{Z}^d} \Gamma_z^{(2)}+ \sum_{z \neq z'}  \Gamma_z^{(1)}\otimes \Gamma_{z'}^{(1)}  \nn \\
	&=  \Gamma^{(1)} \otimes \Gamma^{(1)} + \sum_{z\in \mathbb{Z}^d} \Gamma_z^{(2)} -  \Gamma_z^{(1)}\otimes \Gamma_{z}^{(1)}.
	\label{eq:2_pdm_Gamma}
\end{align}
In particular we obtain for the two-particle reduced density
\begin{equation}
\rho_{\Gamma}^{(2)} =  \rho_{\Gamma}^{(1)} \otimes \rho_{\Gamma}^{(1)} + \sum_{z\in \mathbb{Z}^d} \rho_{\Gamma_z}^{(2)} -  \rho_{\Gamma_z}^{(1)}\otimes \rho_{\Gamma_{z}}^{(1)}.
	\label{eq:2_pdm}
\end{equation}
The second term above is negligible in our regime. Indeed, using the triangle inequality, the Lieb-Thirring inequality~\cite{LieThi-75,LieThi-76} and Young's inequality we obtain
\begin{align*}
N^{-2} \bigg| \sum_{z\in \mathbb{Z}^d} &\int_{\mathbb{R}^d} w_N(x-y) \left(\rho_{\Gamma_z}^{(2)} -  \rho_{\Gamma_z}^{(1)}\otimes \rho_{\Gamma_{z}}^{(1)}\right) \bigg| \\
	&\leq CN^{-2} \|w_N\|_{L^{1+d/2}(\mathbb{R}^d)} \sum_{z\in \mathbb{Z}^d} \Bigg\{ \|\rho_{\Gamma_z}^{(1)}\|_{L^{1+2/d}(\mathbb{R}^d)}\|\rho_{\Gamma_z}^{(1)}\|_{L^{1}(\mathbb{R}^d)} \\
	&\qquad \qquad \qquad \qquad \qquad \qquad + N_z^2 \left(\frac{\tr \Gamma_z \left(\sum_{j=1}^{N_z} -\Delta_{z_j}\right)}{ N_z^{1+2/d}}\right)^{\frac{1}{1+2/d}}  \Bigg\} \\
	&\leq C N^{-2} N^{d\eta \frac{d}{d+2}} \sum_{z\in \mathbb{Z}^d}  N_z^2 \ell^{-\frac{2}{1+2/d}}  \\
	&\leq C (N^{\eta} \ell)^d \ell^{d(1-\frac{1}{d+2})}
\end{align*}
where we used that $\rho_{\Gamma_z}^{(1)} \leq C N_z \ell^{-d} \leq C \|\rho_0\|_{L^\infty(\mathbb{R}^d)} \hbar^{-d}$ almost everywhere and the estimate on the kinetic energy of $\Gamma_z$ computed before. Hence, if $N^{-1}\rho_{\Gamma}^{(1)} \to \rho_0$ in $L^1(\mathbb{R}^d)$ and if $\ell = o(N^{-\eta})$,  since both $N^{-1}\rho_{\Gamma}^{(1)}$ and $\rho_0$ are bounded (uniformly in $N$) in $L^\infty(\mathbb{R}^d)$, by (\ref{eq:2_pdm}) and the use of Young's inequality we obtain (\ref{eq:trial_state_cv_twobody}) for $0\leq \eta d < 1$.

The case $\eta d > 1$ is easier to handle since in this case $N^{-\eta} = o (s)$. Indeed, due to the correlation factor $F$ and because $w$ is compactly supported we will have $\tr w_N(x-y) \Gamma = 0$ for $N$ sufficiently large. 
\end{proof}

%%%%%%%%%%%%%%%%%%%%%%%%%%%%%%%%%%%%%%%%%%%%%%%%%%%%%
%%%%%%%%%%%%%%%%%%%%%%%%%%%%%%%%%%%%%%%%%%%%%%%%%%%%%
\section{Proof of Theorem~\ref{theo:TL_Can_Sc} in the non-interacting case $w\equiv0$}\label{sec:proof_main_thm}
%%%%%%%%%%%%%%%%%%%%%%%%%%%%%%%%%%%%%%%%%%%%%%%%%%%%%
%%%%%%%%%%%%%%%%%%%%%%%%%%%%%%%%%%%%%%%%%%%%%%%%%%%%%
\label{sec:proof_non_interacting}

In this section we prove the convergence~\eqref{eq:limit} of the free energy in \cref{theo:TL_Can_Sc} in the case where the interaction is dropped, that is $w\equiv0$. We study the interacting case later in Section~\ref{sec:proof_interacting}. The convergence of states will be discussed in Section~\ref{sec:CV_states}. 

The non-interacting case is well understood since the Hamiltonian is quadratic in creation and annihilation operators in the grand canonical picture. The minimizers are known to be the so-called quasi-free states~\cite{BacLieSol-94}. For those we have an explicit formula and the argument of the proof is reduced to a usual semi-classical limit. The upper bound on the free energy is a consequence of Proposition~\ref{prop:trial_states} from the previous section. The proof of the lower bound relies on localization and the use of coherent states. 

We start with the following well-known lemma, the proof of which can for instance rely on Klein's inequality and the convexity of the fermionic entropy~$s$ \cite{Thirring}.
\begin{lem}[The minimal free energy of quasi-free states]\label{lem:value_energy_QF-states}
	\label{lem:unrestricted_oneb}
	Let $ \beta > 0 $, and let $ H $ be a self-adjoint operator on a Hilbert space $ \mathfrak{H} $ such that $\tr e^{- \beta H} < \infty $. Then 
	\begin{equation*}
	\min_{\substack{0 \leq \gamma \leq 1 \\ \gamma \in \mathfrak{S}_1 \myp{\mathfrak{H}}} } \myp{\tr H \gamma + \frac{1}{\beta} \tr s \myp{\gamma}} 
	= - \frac{1}{\beta} \tr \log \myp{1+e^{-\beta H}},
	\end{equation*}
	with the unique minimizer being $ \gamma_0 = \frac{1}{1+ e^{\beta H}} $.
\end{lem}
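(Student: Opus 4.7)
The plan is the classical variational argument based on Klein's inequality for the strictly convex fermionic entropy $s(t) = t\log t + (1-t)\log(1-t)$ on $[0,1]$. First I would check that $\gamma_0 = (1+e^{\beta H})^{-1}$ is admissible: functional calculus gives $0\leq \gamma_0\leq 1$, while the scalar bound $(1+e^{\beta t})^{-1}\leq e^{-\beta t}$ yields $\gamma_0 \leq e^{-\beta H}$ as operators, so $\gamma_0\in\gS_1(\mathfrak{H})$ by the hypothesis $\tr e^{-\beta H}<\infty$. Note that this same hypothesis forces $H$ to be bounded below with purely discrete spectrum accumulating only at $+\infty$, which will be enough to make all the traces below unambiguous.

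Next I would evaluate $\mathcal{G}(\gamma_0) := \tr(H\gamma_0) + \beta^{-1}\tr s(\gamma_0)$ using the functional calculus identities $\log(1-\gamma_0) = -\log(1+e^{-\beta H})$ and $\log\gamma_0 = -\beta H + \log(1-\gamma_0)$. These give
\[ s(\gamma_0) = \gamma_0\log\gamma_0 + (1-\gamma_0)\log(1-\gamma_0) = -\beta H\gamma_0 - \log(1+e^{-\beta H}), \]
so the $\tr(H\gamma_0)$ contributions cancel in $\mathcal{G}(\gamma_0)$, leaving exactly $-\beta^{-1}\tr\log(1+e^{-\beta H})$, which is finite since $\log(1+e^{-\beta H})\leq e^{-\beta H}$.

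For the lower bound I would use that $s'(t) = \log(t/(1-t))$, so $s'(\gamma_0) = -\beta H$ by the same computation. Klein's inequality applied to $s$ then yields, for any $0\leq\gamma\leq 1$ in $\gS_1(\mathfrak{H})$,
\[ \tr\bigl(s(\gamma) - s(\gamma_0)\bigr) \geq \tr\bigl(s'(\gamma_0)(\gamma - \gamma_0)\bigr) = -\beta\,\tr\bigl(H(\gamma - \gamma_0)\bigr), \]
which rearranges to exactly $\mathcal{G}(\gamma) \geq \mathcal{G}(\gamma_0)$. Strict convexity of $s$ then forces equality only when $\gamma = \gamma_0$, giving uniqueness. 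The only real obstacle is to justify Klein's inequality when $H$ is unbounded and $s'(\gamma_0) = -\beta H$ is itself not trace class: the standard remedy is to set $\mathcal{G}(\gamma) = +\infty$ whenever $\tr(H_+\gamma) = +\infty$ (in which case the inequality is trivial), and otherwise to truncate $H$ via its spectral projections $H\mathds{1}_{\{|H|\leq n\}}$, apply Klein's inequality in the reduced finite-dimensional setting where every manipulation is legitimate, and pass to the limit $n\to\infty$ using monotone/dominated convergence, as carried out in Thirring's textbook.
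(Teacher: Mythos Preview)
Your argument is correct and is precisely the approach the paper indicates: the paper does not spell out a proof but states that it ``can for instance rely on Klein's inequality and the convexity of the fermionic entropy~$s$'' with a reference to Thirring, which is exactly what you do. Your handling of the admissibility of $\gamma_0$, the evaluation of $\mathcal{G}(\gamma_0)$ via the identity $s(\gamma_0)=-\beta H\gamma_0-\log(1+e^{-\beta H})$, the Klein inequality step using $s'(\gamma_0)=-\beta H$, and the spectral truncation to justify the unbounded case are all sound.
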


With Lemma~\ref{lem:value_energy_QF-states} at hand we are able to provide the 

\begin{proof}[Proof of \cref{theo:TL_Can_Sc} in the non-interacting case.]
	Suppose that $ w =0 $.
	We start out by proving the upper bound on the energy, using the trial states constructed in the previous section.
	Let $ \rho > 0 $ and $ 0 \leq \nu \in C_c^\infty \myp{\mathbb{R}^d} $ with $ \int_{\mathbb{R}^d} \nu \myp{x} \, \mathrm{d} x = \rho $.
	By \cref{prop:trial_states} we then have a sequence $ \myp{\Gamma_N} $ of canonical $ N $-particle states satisfying
	\begin{equation*}
	\hbar^d \tr \myp{ \sum_{j=1}^N |i\hbar \nabla_{x_j} + A(x_j)|^2 \Gamma_N } + \frac{\hbar^d}{\beta} \tr \Gamma_N \log \Gamma_N
	\to \int_{\mathbb{R}^d} F_{\beta} \myp{\nu \myp{x}} \, \mathrm{d} x.
	\end{equation*}
	The one-particle densities $ \hbar^d \rho_{\Gamma_N}^{\myp{1}} $ converge to $ \nu $ strongly in $ L^{1} \myp{\mathbb{R}^d} $ and are uniformly bounded in $L^\infty(\mathbb{R}^d)$. Hence they converge strongly in all $L^p(\mathbb{R}^d)$ for $p\in [1,\infty)$.
	Since $ V \in L_{\loc}^{1+d/2} \myp{\mathbb{R}^d} $ and $ \rho_{\Gamma_N}^{\myp{1}} $ are, by construction, supported in a fixed compact set, we have
	\begin{equation*}
	\hbar^d \tr V \myp{x} \Gamma_N^{\myp{1}} = \hbar^d \int_{\mathbb{R}^d} V\myp{x} \rho_{\Gamma_N}^{\myp{1}} \myp{x} \, \mathrm{d} x
	\to \int_{\mathbb{R}^d} V\myp{x} \nu \myp{x} \, \mathrm{d} x.
	\end{equation*}
	This means that
	\begin{equation*}
	\hbar^d \ecan \myp{\hbar,N} \leq \hbar^d \mathcal{E}_{\mathrm{Can}}^{N,\hbar} \myp{\Gamma_N} 
	\to \int_{\mathbb{R}^d} F_{\beta} \myp{\nu \myp{x}} \, \mathrm{d} x + \int_{\mathbb{R}^d} V\myp{x} \nu \myp{x} \, \mathrm{d} x,
	\end{equation*}
	and, since $ \nu $ is arbitrary, we have shown that 
	\begin{equation*}
	\limsup_{\substack{N \to \infty \\ \hbar^d N \to \rho}} \hbar^d \ecan \myp{\hbar,N} \leq e_{\mathrm{Vla}}^{\beta} \myp{\rho}.
	\end{equation*}
	
	To prove the lower bound, we use the following bound \cite{BacLieSol-94,Thirring} on the entropy
	\begin{equation*}
		\tr \Gamma \log \Gamma 
		\geq \tr\myp{ \Gamma^{\myp{1}} \log \Gamma^{\myp{1}} + \myp{1-\Gamma^{\myp{1}}} \log \myp{1-\Gamma^{\myp{1}}} }
		= \tr s \myp{\Gamma^{\myp{1}}}
	\end{equation*}
	which follows from the fact that quasi-free states maximize the entropy at given one-particle density matrix $\Gamma^{(1)}$. The bound applies to any $N$-particle state $\Gamma$ whose one-particle density is $\Gamma^{(1)}$.
	Applying \cref{lem:unrestricted_oneb} above, we have for any $ \mu \in \mathbb{R} $ and any $ N $-body state $ \Gamma $
	\begin{align*}
		\Ecan^{N,\hbar} \myp{\Gamma} 
		&\geq \tr \myp{|i\hbar \nabla + A(x)|^2 + V \myp{x} - \mu} \Gamma^{\myp{1}} + \frac{1}{\beta} \tr s \myp{\Gamma^{\myp{1}}} + \mu N \\
		&\geq - \frac{1}{\beta} \tr \log \myp{1 + e^{-\beta \myp{|i\hbar \nabla + A(x)|^2 + V \myp{x} - \mu}}} + \mu N.
	\end{align*}
	Thus, we are left to using the known semi-classical convergence (whose proof is recalled below in \cref{prop:gcsemilimit})
	\begin{multline}
		\liminf_{\hbar \to 0} - \frac{\hbar^d}{\beta} \tr \log \myp{1 + e^{-\beta \myp{|i\hbar \nabla + A(x)|^2 + V \myp{x} - \mu}}} \\
		\geq -\frac{1}{\myp{2 \pi}^d \beta} \iint_{\mathbb{R}^{2d}} \log \myp{1+ e^{-\beta\myp{p^2+V \myp{x} - \mu}}} \, \mathrm{d} x \, \mathrm{d} p,
		\label{eq:known_sc_limit}
	\end{multline}
	and to take $ \mu = \mu_{\mathrm{Vla}} \myp{\rho} $. Recognizing the expression of the Vlasov free energy on the right-hand side we appeal to \cref{theo:min_Vlasov} and immediately obtain
	\begin{equation*}
		\liminf_{\substack{N \to \infty \\ \hbar^d N \to \rho}} \hbar^d \ecan \myp{\hbar,N} \geq e_{\mathrm{Vla}}^{\beta} \myp{\rho},
	\end{equation*}
	concluding the proof of \eqref{eq:limit} in the non-interacting case.
\end{proof}

In~\eqref{eq:known_sc_limit} we have used the following well-known fact, which we prove for completeness. 

\begin{prop}[Semi-classical limit]
	\label{prop:gcsemilimit}
	Let $ \beta_0> 0 $, we assume that $|A|^2\in L^{1+d/2}(\mathbb{R}^d) + L^\infty_\varepsilon(\mathbb{R}^d)$, $ V \in L_{\loc}^{1+d/2} \myp{\mathbb{R}^d} $ is such that $ V \myp{x} \to \infty $ at infinity and that $ \int e^{-\beta_0 V_+\myp{x}} \, \mathrm{d} x < \infty $.	
	Then for any chemical potential $ \mu \in \mathbb{R} $ and all $ \beta > \beta_0 $,
	\begin{multline}
		\limsup_{\hbar \to 0} \frac{\hbar^d}{\beta} \tr \log \myp{1+ e^{-\beta \myp{(|i\hbar \nabla + A|^2 + V-\mu}}} \\
		\leq \frac{1}{\myp{2 \pi}^d \beta} \iint_{\mathbb{R}^{2d}} \log \myp{1+ e^{-\beta \myp{p^2 + V\myp{x} - \mu}}} \, \mathrm{d} x \, \mathrm{d} p.
	\label{eq:semi-classics}
	\end{multline}
\end{prop}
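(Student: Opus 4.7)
This statement is a standard semi-classical asymptotic for Fermi-Dirac grand-canonical pressures. I would combine the Gibbs variational principle with coherent-state techniques in the spirit of~\cite{FouLewSol-18}. By \cref{lem:unrestricted_oneb} applied to the one-body operator $H_\hbar := |i\hbar\nabla+A|^2 + V - \mu$ on $L^2(\R^d)$, the left-hand side of~\eqref{eq:semi-classics} equals $-\hbar^d \min_{0\le\gamma\le 1} F_\hbar(\gamma)$, where $F_\hbar(\gamma) := \tr H_\hbar\,\gamma + \beta^{-1} \tr s(\gamma)$. It thus suffices to produce a lower bound on $F_\hbar(\gamma)$ valid simultaneously for every admissible $\gamma$.

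Pick $f \in C_c^\infty(\R^d)$ with $\|f\|_{L^2}=1$ and set $\phi_{x,p}^\hbar(y) := \hbar^{-d/4} f(\hbar^{-1/2}(y-x)) e^{ip\cdot y/\hbar}$, giving the resolution of identity $\iint|\phi_{x,p}^\hbar\rangle\langle \phi_{x,p}^\hbar|\,\frac{dx\,dp}{(2\pi\hbar)^d}=I$. For $0\le\gamma\le 1$, the associated Husimi function $m_\gamma(x,p) := \langle \phi_{x,p}^\hbar,\gamma\,\phi_{x,p}^\hbar\rangle$ satisfies $0\le m_\gamma\le 1$ and $\iint m_\gamma\,\frac{dx\,dp}{(2\pi\hbar)^d}=\tr\gamma$. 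A direct computation of $\langle \phi_{x,p}^\hbar, H_\hbar\,\phi_{x,p}^\hbar\rangle$ (using a real $f$), combined with the regularity $V_- \in L^{d/2}\cap L^{1+d/2}$ and $|A|^2\in L^{1+d/2}+L_\varepsilon^\infty$ together with a mollification argument replacing $V\ast|f_\hbar|^2$ by $V$, yields
\begin{equation*}
\tr H_\hbar\,\gamma \ \geq \ \iint_{\R^{2d}} \bigl(|p-A(x)|^2+V(x)-\mu\bigr) m_\gamma(x,p)\,\frac{dx\,dp}{(2\pi\hbar)^d} - o(\hbar^{-d}).
\end{equation*}
For the entropy, the Berezin-Lieb inequality applied to the convex function $s$ (with $m_\gamma$ as lower symbol) gives $\tr s(\gamma) \geq \iint s(m_\gamma)\,\frac{dx\,dp}{(2\pi\hbar)^d}$.

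Combining these bounds and minimizing pointwise in $u\in[0,1]$ via the elementary identity $\inf_{u\in[0,1]}\{Eu+\beta^{-1}s(u)\}=-\beta^{-1}\log(1+e^{-\beta E})$, applied with $E=|p-A(x)|^2+V(x)-\mu$, we obtain
\begin{equation*}
F_\hbar(\gamma) \ \geq\ -\frac{1}{\beta(2\pi\hbar)^d}\iint_{\R^{2d}} \log\bigl(1+e^{-\beta(|p-A(x)|^2+V(x)-\mu)}\bigr)\,dx\,dp \ -\ o(\hbar^{-d}).
\end{equation*}
The translation $p\mapsto p+A(x)$ in the inner $p$-integral erases the magnetic potential, and taking the infimum over $\gamma$, then multiplying by $\hbar^d$, delivers~\eqref{eq:semi-classics}. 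The main subtle point is uniform-in-$\gamma$ control of the remainder terms in the kinetic/potential lower bound; this is standard semi-classical analysis, using a Lieb-Thirring-type bound to absorb $V_-$ and the integrability $\int e^{-\beta_0 V_+}<\infty$ to ensure finiteness of the limiting integral and $o(\hbar^{-d})$ control of all error terms.
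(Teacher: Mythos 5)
Your overall strategy is sound and in the spirit of the paper's proof (Lemma~\ref{lem:value_energy_QF-states}, coherent states, Berezin--Lieb/Jensen for the entropy, pointwise minimization in $u\in[0,1]$). However, there is a genuine gap in the claim that the kinetic/potential lower bound
\begin{equation*}
\tr H_\hbar\,\gamma \ \geq \ \iint_{\R^{2d}} \bigl(|p-A(x)|^2+V(x)-\mu\bigr) m_\gamma(x,p)\,\frac{dx\,dp}{(2\pi\hbar)^d} - o(\hbar^{-d})
\end{equation*}
holds ``uniformly in $\gamma$'' on all of $\R^d$. The coherent-state dequantization replaces $V$ by $V\ast|f^\hbar|^2$, and the resulting error involves $\int \rho_\gamma\,(V-V\ast|f^\hbar|^2)$. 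Since the confining potential $V_+$ is only assumed to lie in $L^{1+d/2}_{\loc}$ and diverges at infinity, the convergence $V_+\ast|f^\hbar|^2\to V_+$ holds only \emph{locally} in $L^{1+d/2}$. Without a spatial localization, a Lieb--Thirring bound $\|\hbar^d\rho_\gamma\|_{L^{1+2/d}}\le C$ does not give enough tightness to control this error term on the unbounded region where $V_+$ is large, even for the actual minimizer. The paper deals with this precisely by an IMS localization to a ball $B(0,R)$, together with the Brown--Kosaki inequality $\tr s(\gamma)\geq\tr s(\chi_R\gamma\chi_R)+\tr s(\eta_R\gamma\eta_R)$ and a Golden--Thompson estimate for the exterior contribution; the localized density is then compactly supported and uniformly bounded in $L^{1+2/d}$, which is what makes the mollification error $o(\hbar^{-d})$. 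You cannot simply assert this as ``standard semi-classical analysis''; it is the heart of the argument.

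A second, smaller omission: the paper first proves the result under the extra hypothesis $V_-\in L^\infty$, and only afterwards removes it by a truncation $V_K=V\mathds{1}_{\{V\geq-K\}}$ combined with a Lieb--Thirring estimate on $\tr(\varepsilon|i\hbar\nabla+A|^2+V-V_K)_-$. Your proposal mentions absorbing $V_-$ via Lieb--Thirring but does not spell out this two-step cutoff, which is needed to treat $V_-\in L^{d/2}\cap L^{1+d/2}$ possibly unbounded below. Filling these two gaps would essentially reconstruct the paper's proof.
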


This result is known \cite{Thirring} and the proof we provide here is essentially the one in \cite{Simon-80}, where however the von Neumann entropy $x\log(x)$ was used instead of the Fermi-Dirac entropy $x\log(x)+(1-x)\log(1-x)$. In fact, \cref{theo:TL_Can_Sc} shows that the inequality~\eqref{eq:semi-classics} is indeed an equality.

\begin{proof}[Proof of \cref{prop:gcsemilimit}]
	Without loss of generality we may assume that $ \mu = 0 $. We also assume in a first step that $V_- \in L^{\infty}(\mathbb{R}^d)$ and then remove this assumption at the end of the proof.
	Due to technical issues involving the potential $V$, we need to localize the minimization problem on some bounded set.
	Let $ \chi, \eta \in C^{\infty} \myp{\mathbb{R}^d} $ satisfy $ \chi^2 + \eta^2 = 1 $, $ \supp \chi \subseteq B \myp{0,1} $ and $ \supp \eta \subseteq B \myp{0, \tfrac{1}{2}}^c $.
	For $R>0$, denote $ \chi_R = \chi \myp{\frac{\cdot}{R}} $ and $ \eta_R = \eta \myp{\frac{\cdot}{R}} $.
	Let $ H_{\hbar} = |i\hbar \nabla + A|^2 + V $ and take $ \gamma^{\hbar} = \frac{1}{1+ e^{\beta H_{\hbar}}} $ as in \cref{lem:unrestricted_oneb}.
	By the IMS localization formula we have
	\begin{equation}
	\label{eq:ims}
	\tr H_{\hbar} \gamma^{\hbar} = \tr \myp{H_{\hbar} \chi_R \gamma^{\hbar} \chi_R} + \tr \myp{H_{\hbar} \eta_R \gamma^{\hbar} \eta_R} - \hbar^2 \tr \myp{ \left\lvert \nabla \chi_R \right\rvert^2 + \left\lvert \nabla \eta_R \right\rvert^2}\gamma^{\hbar},
	\end{equation}
	and using the convexity of $s$ and \cite[Theorem 14]{BroKos-90},
	\begin{align}
	\tr s \myp{\gamma^{\hbar}} &= \tr \chi_R s \myp{\gamma^{\hbar}} \chi_R + \tr \eta_R s \myp{\gamma^{\hbar}} \eta_R\nn\\ 
	&\geq \tr s \myp{\chi_R \gamma^{\hbar} \chi_R} + \tr s \myp{\eta_R \gamma^{\hbar} \eta_R}.\label{eq:entropybound}
	\end{align}
	We first deal with the localization outside the ball. 
	The operators we consider in $ B \myp{0, \frac{R}{2} }^c $ are the ones with Dirichlet boundary condition. We obtain by \cref{lem:unrestricted_oneb} that the remainder terms are bounded by
	\begin{align}
	&\tr \myp{H_{\hbar} \eta_R \gamma^{\hbar} \eta_R} + \frac{1}{\beta} \tr s \myp{\eta_R \gamma^{\hbar} \eta_R} \nn \\
	&\qquad\qquad \geq - \frac{1}{\beta} \tr_{L^2( B \myp{0, \frac{R}{2} }^c)} \log \myp{1+e^{-\beta \myp{|i\hbar \nabla + A|^2 + V - C}}} \nonumber \\
	&\qquad\qquad \geq - \frac{C}{\beta} \tr_{L^2( B \myp{0, \frac{R}{2} }^c)} e^{-\beta \myp{|i\hbar \nabla + A|^2  + V}} \nn \\
	&\qquad\qquad \geq - \frac{C}{\beta} \tr_{L^2( B \myp{0, \frac{R}{2} }^c)} e^{-\beta \myp{-\hbar^2\Delta^D  + V}}\label{eq:minmax_2} \\
	& \qquad\qquad \geq - \frac{C}{\beta} \tr_{L^2(\mathbb{R}^d)} e^{-\beta \myp{-\hbar^2 \Delta + (1-\alpha)V + \alpha \inf_{B(0,R)^c}V}}  \label{eq:minmax_3} \\
	&\qquad\qquad\geq -\frac{C e^{-\alpha \inf_{B(0,R)^c}V}}{\myp{2 \pi \hbar}^d} \iint_{\mathbb{R}^{2d}} e^{-\beta \myp{p^2 + (1-\alpha) V \myp{x}}} \, \mathrm{d} x \, \mathrm{d} p,
	\label{eq:gcnonlocal}
	\end{align}
	where $\alpha> 0$ is such that $\beta(1-\alpha)> \beta_0$. The inequality (\ref{eq:minmax_2}) comes from the diamagnetic inequality \cite{CycFroKirSim-87} and (\ref{eq:minmax_3}) is obtained by the min-max characterization of the eigenvalues. The last inequality follows from Golden-Thompson's formula \cite[Theorem VIII.30]{ReeSim1}. 
	
	The error term in the IMS formula can be estimated by 
	\begin{align}
	- \tr \myp{ \left\lvert \nabla \chi_R \right\rvert^2 + \left\lvert \nabla \eta_R \right\rvert^2}\gamma^{\hbar}\nn
	&\geq - \frac{C}{R} \tr \gamma^{\hbar}\nn\\
	&\geq - \frac{C}{R} \tr e^{-\beta H_{\hbar}} \nonumber \\
	&\geq - \frac{C}{R \myp{2 \pi \hbar}^d} \iint_{\mathbb{R}^{2d}} e^{-\beta \myp{p^2 + V \myp{x}}} \, \mathrm{d} x \, \mathrm{d} p,
	\label{eq:gcerror}
	\end{align}
	where we used again the diamagnetic and Golden-Thompson inequalities.

	Next we derive a bound on the densities $ \rho_{\gamma_R^{\hbar}} $, where $ \gamma_R^{\hbar} = \chi_R \gamma^{\hbar} \chi_R $, using the Lieb-Thirring inequality~\cite{LieThi-75,LieThi-76}.
	Combining \labelcref{eq:ims,eq:entropybound,eq:gcerror,eq:gcnonlocal} we have shown
	\begin{multline}
		\tr H_{\hbar} \gamma_R^{\hbar} + \frac{1}{\beta} \tr s \myp{\gamma_R^{\hbar}} - \frac{\varepsilon\myp{R}}{\hbar^d}
		\leq \tr H_{\hbar} \gamma^{\hbar} + \frac{1}{\beta} \tr s\myp{\gamma^{\hbar}} \\
		= -\frac{1}{\beta} \tr \log \myp{ 1 + e^{-\beta H_{\hbar}}} 
		\leq 0
	\label{eq:locerror}
	\end{multline}
	where $ \varepsilon \myp{R} \to 0 $ when $ R \to \infty $.
	By \cref{lem:unrestricted_oneb} we have
	\begin{multline*}
	\tr H_{\hbar} \gamma_R^{\hbar} + \frac{1}{\beta} \tr s \myp{\gamma_R^{\hbar}}
	\geq \frac{1}{2} \tr \myp{-\hbar^2 \Delta} \gamma_R^{\hbar} \\ - \frac{1}{\beta} \tr \log \myp{1 + e^{-\beta \myp{|i\hbar \nabla + A|^2/2 + V}}} - \frac{C}{\hbar^d}
	\end{multline*}
	where, as in \eqref{eq:gcnonlocal},
	\begin{align*}
	\tr \log \myp{1 + e^{-\beta \myp{|i\hbar \nabla + A|^2/2 + V}}}
	\leq \frac{C e^{-\alpha \inf V}}{\myp{2 \pi \hbar}^d} \iint_{\mathbb{R}^{2d}} e^{-\beta\myp{p^2/2 + (1-\alpha) V \myp{x}}} \, \mathrm{d} x \, \mathrm{d} p.
	\end{align*}
	This implies the following bound on the kinetic energy
	\begin{equation}
		\label{eq:bound_kinetic_energy}
	\tr \myp{- \hbar^2 \Delta} \gamma_R^{\hbar} \leq \frac{C}{\hbar^d}.
	\end{equation}
	By the Lieb-Thirring inequality, we obtain
	\begin{equation}
	\label{eq:densitybound}
	\int_{\mathbb{R}^d} \rho_{\gamma_R^{\hbar}} \myp{x}^{1+\frac{2}{d}} \, \mathrm{d} x
	\leq C \tr \myp{- \Delta \gamma_R^{\hbar}}
	\leq \frac{1}{\hbar^{d+2}} C.
	\end{equation}
	
	We return to the estimate on the localized terms in \eqref{eq:ims} and \eqref{eq:entropybound}, using coherent states.
	Let $ f \in C_c^{\infty} \myp{\mathbb{R}^d} $ be a real-valued and even function, and consider the coherent state $ f_{x,p}^{\hbar} \myp{y} = \hbar^{-\frac{d}{4}} f \big( \hbar^{-\frac{1}{2}} \myp{y-x} \big) e^{i \frac{p\cdot y}{\hbar}} $. The projections $ \ket{f_{x,p}^{\hbar}} \bra{f_{x,p}^{\hbar}} $ give rise to a resolution of the identity on $ L^2 \myp{\mathbb{R}^d} $:
	\begin{equation}
	\frac{1}{(2\pi \hbar)^d} \int_{\mathbb{R}^{2d}}\ket{f_{x,p}^{\hbar}} \bra{f_{x,p}^{\hbar}} = \mathrm{Id}_{L^2(\mathbb{R}^d)}.
	\label{eq:coherent_state_repres}
	\end{equation}
	Using this in combination with Jensen's inequality and the spectral theorem, we obtain
	\begin{align}
		\tr s \myp{\chi_R \gamma^{\hbar} \chi_R}
		&= \frac{1}{\myp{2 \pi \hbar}^d} \iint_{\mathbb{R}^{2d}} \left\langle f_{x,p}^{\hbar} , s \myp{\gamma_R^{\hbar}} f_{x,p}^{\hbar} \right\rangle \, \mathrm{d} x \, \mathrm{d} p \nonumber \\
		&\geq \frac{1}{\myp{2 \pi \hbar}^d} \iint_{\mathbb{R}^{2d}} s \myp{ \left\langle f_{x,p}^{\hbar}, \gamma_R^{\hbar} f_{x,p}^{\hbar} \right\rangle } \, \mathrm{d} x \, \mathrm{d} p.
	\label{eq:gclocenergy}
	\end{align}
	On the other hand, applying \cite[Corollary 2.5]{FouLewSol-18} we have
	\begin{align}
		\tr H_{\hbar} &\chi_R \gamma^{\hbar} \chi_R \nn \\
		&= \frac{1}{\myp{2 \pi \hbar}^d} \iint_{\mathbb{R}^{2d}} \left\langle f_{x,p}^{\hbar}, H_{\hbar} \gamma_R^{\hbar} f_{x,p}^{\hbar} \right\rangle \, \mathrm{d} x \, \mathrm{d} p \nonumber \\
		&= \frac{1}{\myp{2 \pi \hbar}^d} \iint_{\mathbb{R}^{2d}} \myp{|p+A|^2 + V \myp{x}} \left\langle f_{x,p}^{\hbar}, \gamma_R^{\hbar} f_{x,p}^{\hbar} \right\rangle \, \mathrm{d} x \, \mathrm{d} p \nonumber \\
		& +  \int_{\mathbb{R}^{d}} \rho_{\gamma_R^{\hbar}} \myp{ A^2 - A^2 \ast \left\lvert f^{\hbar} \right\rvert^2 } \nn  - 2 \Re \tr \myp{ A - A \ast \left\lvert f^{\hbar} \right\rvert^2 } \cdot i\hbar \nabla \gamma_R ^\hbar \nn \\
		& \qquad \qquad \qquad \quad - \hbar \int_{\mathbb{R}^d} \left\lvert \nabla f \right\rvert^2 + \int_{\mathbb{R}^{d}} \rho_{\gamma_R^{\hbar}} \myp{ V - V \ast \left\lvert f^{\hbar} \right\rvert^2 } 
	\label{eq:gclocentropy}
	\end{align}
	Since $ \hbar^d \rho_{\gamma_R^{\hbar}} $ is supported in $ B \myp{0,R} $ and is uniformly bounded in $ L^{1+2/d} \myp{\mathbb{R}^d} $ by \eqref{eq:densitybound}, and $ V \ast \left\lvert f^{\hbar} \right\rvert^2 $ converges to $ V $ locally in $ L^{1+ d/2} \myp{\mathbb{R}^d} $. The same argument applied to $A$ and $|A|^2$ combined with H\"older's inequality, the Lieb-Thirring inequality and (\ref{eq:bound_kinetic_energy}) shows that the remainder terms above are $ o \myp{\hbar^{-d}} $.
	At last, combining \labelcref{eq:locerror,eq:gclocenergy,eq:gclocentropy} as well as a simple adaptation of Proposition~\ref{lem:functmin} to finite domains (\cref{rem:semiclfunct}) yields
	\begin{multline*}
	\limsup_{\hbar \to 0} \hbar^d \tr \log \myp{1+ e^{-\beta \myp{|i\hbar \nabla + A|^2 + V}}}\\
	\leq \frac{1}{\myp{2 \pi}^d} \iint_{\mathbb{R}^{2d}} \log \myp{1+ e^{-\beta \myp{p^2 + V\myp{x}}}} \, \mathrm{d} x \, \mathrm{d} p +\varepsilon \myp{R},		
	\end{multline*}
	where $ \varepsilon \myp{R} \to 0 $ when $ R \to \infty $. This concludes the proof in the case $V_-\in L^\infty(\mathbb{R}^d)$. We now remove this unnecessary assumption: let us consider a potential $ V $ satisfying the assumptions of \cref{prop:gcsemilimit} (possibly unbounded below). For $ K > 0 $, we take the cut off potential $ V_K = V \mathds{1}_{\myt{V \geq -K}} $ and for any $ 0 < \varepsilon < 1 $ we obtain using \cref{lem:unrestricted_oneb}
	\begin{align*}
		- \frac{1}{\beta} \tr \log \bigg( 1+ &e^{-\beta \myp{|i\hbar \nabla + A|^2 + V}} \bigg) \\
		&\qquad \geq \min_{0\leq \gamma \leq 1} \myp{\tr \myp{ \myp{1-\varepsilon} |i\hbar \nabla + A|^2 + V_K} \gamma + \frac{1}{\beta} \tr s \myp{\gamma} }\\
		&\qquad \qquad \qquad \qquad+ \min_{0\leq \gamma\leq 1} \tr \myp{\varepsilon |i\hbar \nabla + A|^2 + V-V_K}\gamma \\
		&\qquad = - \frac{1}{\beta} \tr \log \myp{1+ e^{-\beta \myp{\myp{1-\varepsilon}|i\hbar \nabla + A|^2 + V_K}}} \\
		&\qquad \qquad \qquad \qquad \qquad \qquad- \tr \myp{\varepsilon|i\hbar \nabla + A|^2 + V-V_K}_-.
	\end{align*}
	Applying the Lieb-Thirring inequality, we obtain
	\begin{equation*}
		\tr \myp{\varepsilon|i\hbar \nabla + A|^2 + V-V_K}_- \leq C \hbar^{-d} \varepsilon^{-d/2} \int_{\mathbb{R}^d} \myp{V-V_K}_-^{1+d/2} \, \mathrm{d} x.
	\end{equation*}
	This means that for any $ K $ and $ \varepsilon $
	\begin{align*}
		\limsup_{\hbar \to 0} \hbar^d \tr \log &\myp{1+ e^{-\beta \myp{|i\hbar \nabla + A|^2 + V}}} \\
		&\leq \frac{1}{\myp{2 \pi}^d} \iint_{\mathbb{R}^{2d}} \log \myp{1+ e^{-\beta \myp{\myp{1-\varepsilon} p^2 + V_K \myp{x}}}} \, \mathrm{d} x \, \mathrm{d} p \\
		&\qquad \qquad \qquad\qquad \qquad+ \varepsilon^{-d/2} C \int_{\mathbb{R}^d} \myp{V-V_K}_-^{1+d/2} \, \mathrm{d} x.
	\end{align*}
	First taking $ K \to \infty $ and afterwards $ \varepsilon \to 0 $, the result follows using the monotone convergence theorem.
\end{proof}

%%%%%%%%%%%%%%%%%%%%%%%%%%%%%%%%%%%%%%%%%%%%%%%%%%%%%
%%%%%%%%%%%%%%%%%%%%%%%%%%%%%%%%%%%%%%%%%%%%%%%%%%%%%
\section{Proof of Theorem~\ref{theo:TL_Can_Sc} in the general case}
%%%%%%%%%%%%%%%%%%%%%%%%%%%%%%%%%%%%%%%%%%%%%%%%%%%%%
%%%%%%%%%%%%%%%%%%%%%%%%%%%%%%%%%%%%%%%%%%%%%%%%%%%%%
\label{sec:proof_interacting}

In this section we deal with the interacting case $w\neq0$. We first focus on the proof of Theorem~\ref{theo:TL_Can_Sc} (mean-field limit) before proving Theorem~\ref{theo:dilute_limit} (dilute limit).

\subsection{Convergence of the energy in the mean-field limit $\eta=0$}
Here we prove \eqref{eq:limit} in the case of general $ w \in L^{1+d/2} \myp{\mathbb{R}^d} + L_{\varepsilon}^{\infty} \myp{\mathbb{R}^d} $. The upper bound on the canonical energy follows immediately from the trial states constructed in Proposition~\ref{prop:trial_states}, so we concentrate on proving the lower bound. This is the content of the following proposition.

\begin{prop}\label{prop:lower_bound_canonical_by_sc}
	Let $ \beta_0, \rho > 0 $, $ V \in L_{\loc}^{1+{d}/{2}} \myp{\mathbb{R}^d} $ such that $ V \myp{x} \to \infty $ when $ \left\lvert x \right\rvert \to \infty $ and $ \int e^{-\beta_0V_+\myp{x}} \, \mathrm{d} x < \infty $.
	Furthermore, let $ |A|^2,w \in L^{1+{d}/{2}} \myp{\mathbb{R}^d} + L_{\varepsilon}^{\infty} \myp{\mathbb{R}^d} $, $w$ be even and satisfy $\widehat{w} \geq 0$.
	Then we have
	\begin{equation*}
		\liminf_{\substack{N\to\infty \\ \hbar^d N \to \rho}} \hbar^d \ecan(\hbar,N) \geq \esc(\rho).
	\end{equation*}
\end{prop}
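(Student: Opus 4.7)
The plan is to reduce the lower bound to the non-interacting case proved in Section~\ref{sec:proof_non_interacting} by trading the pair interaction for a one-body effective potential, using the positive-type assumption $\widehat{w}\geq 0$. Let $m_0$ denote the unique Vlasov minimizer from Theorem~\ref{theo:min_Vlasov}, with spatial density $\rho_{m_0}$. Expanding the non-negative quantity $\int \widehat{w}(k)\,|\sum_j e^{-ikx_j} - \widehat{\mu}(k)|^2\,dk\geq 0$ yields the pointwise (and hence operator) bound
\begin{equation*}
\sum_{1\leq i<j\leq N} w(x_i-x_j) \geq \sum_{j=1}^N (w*\mu)(x_j) - \frac{1}{2}\iint w(x-y)\mu(x)\mu(y)\,dx\,dy - \frac{N\,w(0)}{2}
\end{equation*}
valid for any smooth $\mu$. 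Plugging in $\mu = (N/\rho)\rho_{m_0}$ and dividing by $N$, one obtains on $\bigwedge^N L^2(\R^d)$ the operator inequality
\begin{equation*}
H_{N,\hbar} \geq \sum_{j=1}^N h_\hbar(x_j) - \frac{N}{2\rho^2}\iint w\,\rho_{m_0}\rho_{m_0} - \frac{w(0)}{2}
\end{equation*}
with effective one-body operator $h_\hbar := |i\hbar\nabla + A|^2 + V + \rho^{-1}(w*\rho_{m_0})$. Because $w\in L^{1+d/2}+L^\infty_\varepsilon$ and $\rho_{m_0}\in L^1(\R^d)$, Young's inequality gives $w*\rho_{m_0}\in L^{1+d/2}(\R^d) + L^\infty(\R^d)$, so $V + \rho^{-1}w*\rho_{m_0}$ still satisfies the hypotheses of Proposition~\ref{prop:gcsemilimit}.

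Next I follow the non-interacting argument of Section~\ref{sec:proof_non_interacting}. For any fermionic $N$-state $\Gamma$, the entropy inequality $\tr(\Gamma\log\Gamma)\geq \tr s(\Gamma^{(1)})$ (quasi-free states maximize entropy at fixed $\Gamma^{(1)}$) combined with Lemma~\ref{lem:value_energy_QF-states} applied to $h_\hbar - \mu$, with $\mu\in\R$ a Lagrange multiplier for $\tr\Gamma^{(1)} = N$, gives
\begin{equation*}
\Ecan^{N,\hbar}(\Gamma) \geq -\frac{1}{\beta}\tr\log\big(1+e^{-\beta(h_\hbar - \mu)}\big) + \mu N - \frac{N}{2\rho^2}\iint w\,\rho_{m_0}\rho_{m_0} - \frac{w(0)}{2}.
\end{equation*}
Fixing $\mu = \mu_{\mathrm{Vla}}$ (the chemical potential from Theorem~\ref{theo:min_Vlasov}), multiplying by $\hbar^d$, and invoking Proposition~\ref{prop:gcsemilimit} with the modified potential $V + \rho^{-1}w*\rho_{m_0}$, I obtain
\begin{equation*}
\liminf_{\substack{N\to\infty\\\hbar^d N\to\rho}}\hbar^d \ecan(\hbar,N) \geq -\frac{1}{(2\pi)^d\beta}\iint_{\R^{2d}}\log\big(1+e^{-\beta(p^2+V+\rho^{-1}w*\rho_{m_0}-\mu)}\big)\,dx\,dp + \mu\rho - \frac{1}{2\rho}\iint w\,\rho_{m_0}\rho_{m_0}.
\end{equation*}
Here I used $\hbar^d N\to\rho$ and, crucially, that $\hbar^d w(0)/2 \to 0$: the scalar $w(0)$ is $N$-independent and is absorbed by the vanishing prefactor $\hbar^d$. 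The right-hand side is exactly $\esc(\rho)$ by formula~\eqref{eq:functminimum}, which finishes the proof.

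The main technical obstacle is the interpretation of $w(0)$ when $w\in L^{1+d/2}+L^\infty_\varepsilon$ is singular at the origin (for instance $w(x)=|x|^{-1}$ in $d=3$ gives $w(0)=+\infty$), in which case the operator inequality above is meaningless as stated. To handle this I would regularize $w$ via $w_\eta := \chi_\eta * w * \chi_\eta$ with $\chi_\eta$ a smooth non-negative mollifier whose Fourier transform is non-negative (for instance a Gaussian), so that $\widehat{w_\eta} = |\widehat{\chi_\eta}|^2\widehat{w}\geq 0$ and $w_\eta\in L^\infty(\R^d)$; the argument above then applies verbatim to $w_\eta$. One then controls $\tr\big(\frac{1}{N}\sum_{i<j}(w-w_\eta)(x_i-x_j)\Gamma_N\big)$ by Lieb--Thirring type estimates on the two-particle density of any near-minimizing $\Gamma_N$ (analogous to those in the proof of Proposition~\ref{prop:trial_states}), to show it is $o_{\eta\to 0}(1)$ uniformly in $N$, then lets $\eta\to 0$ after the semiclassical limit and uses continuity of $\esc$ with respect to the interaction to conclude.
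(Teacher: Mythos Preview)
Your proposal is correct and follows essentially the same route as the paper: linearize the two-body interaction via the positive-type inequality with $\mu=(N/\rho)\rho_{m_0}$, then invoke the non-interacting lower bound (entropy inequality, Lemma~\ref{lem:value_energy_QF-states}, Proposition~\ref{prop:gcsemilimit}) with the effective potential $V+\rho^{-1}w\ast\rho_{m_0}$, and recognize the outcome as~\eqref{eq:functminimum}. The only organizational difference is that the paper performs the regularization $w\mapsto w\ast\chi_\varepsilon$ (with $\chi=\varphi\ast\varphi$, $\varphi\in C^\infty_c$) at the outset and controls the replacement error via Lieb--Thirring (\cite[Lemma~3.4]{FouLewSol-18}), whereas you postpone this step to the end; both orderings are fine and the ingredients are identical.
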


\begin{proof}
	The main idea of the proof is to replace $ w $ by an effective one-body potential, and then use the lower bound in the non-interacting case.	
	
	We begin by regularizing the interaction potential: let $\varphi \in C^\infty_c(\mathbb{R}^d)$ even and real-valued, define $\chi = \varphi \ast \varphi$ and $w_\varepsilon = w \ast \chi_\varepsilon$ with $\chi_\varepsilon = \varepsilon^{-d} \chi(\varepsilon^{-1} \cdot)$ for $\varepsilon >0$. Note that $\widehat{w_\varepsilon} \geq 0$. Moreover, if $\alpha >0$ and $w=w_1 + w_2$ with $w_1 \in L^{1+\frac{d}{2}}(\mathbb{R}^d)$ and $\|w_2\|_{L^\infty(\mathbb{R}^d)}\leq \alpha$ then $w_{1,\varepsilon} := w_1 \ast \chi_{\varepsilon}$ satisfies $\widehat{w_{1,\varepsilon} } \in L^1(\mathbb{R}^d)$ and $w_{2,\varepsilon} := w_2 \ast \chi_{\varepsilon}$ satisfies $\|w_{2,\varepsilon}\|_{L^\infty(\mathbb{R}^d)} \leq \alpha $. Then, using the Lieb-Thirring inequality, we can replace $w$ by $w_\varepsilon$ up to an error of order ${\|w_1-w_{1,\varepsilon}\|_{L^{1+d/2}(\mathbb{R}^d)}} + C \alpha$, see for instance \cite[Lemma $ 3.4 $]{FouLewSol-18}. It remains to let $\varepsilon$ tend to zero and then let $\alpha$ tend to zero. We therefore assume for the rest of the proof that $w$ satisfies $\widehat{w}\in L^1(\mathbb{R}^d)$.
	
	Now, with $ 0\leq \widehat{w} \in L^1(\mathbb{R}^d) $, it is classical that we can bound $ w $ from below by a one-body potential, see, e.g.,~\cite[Lem.~3.6]{FouLewSol-18}.
	More precisely, we have for all $x_1,\hdots, x_N \in \mathbb{R}^d$ and $\varphi \in C^\infty_c(\mathbb{R}^d)$
	\begin{equation*}
	\int_{\mathbb{R}^d} \widehat{w} \left| \widehat{ \sum_{i=1}^N \delta_{x_i} - \varphi }\right|^2 \geq 0,
	\end{equation*}
	which after expanding is the same as
	\begin{equation}
	\label{ineq:w_hat_pos}
	\sum_{1\leq i<j \leq N} w(x_i -x_j) \geq \sum_{i=1}^N w\ast \varphi (x_i) - \frac{1}{2} \int_{\mathbb{R}^d} (\varphi \ast w) \varphi - \frac{N}{2} w(0).
	\end{equation}
%	Let $\Gamma$ be the minimizer of canonical problem with $N$ particles and 
	Let $m_0$ be the minimizer of the semiclassical problem with density $\rho$, whose existence is guaranteed by \cref{theo:min_Vlasov}.
	For any $ N $-body trial state $ \Gamma $ we obtain from (\ref{ineq:w_hat_pos})
	\begin{align*}
		\tr H_{N,\hbar} \Gamma 
		&\geq \tr \left((i\hbar \nabla + A(x))^2 + V(x) + \rho^{-1}w \ast \rho_{m_0}(x) \right) \Gamma^{(1)} \\
		&\qquad- \frac{N}{2\rho^2} \int_{\mathbb{R}^d} (\rho_{m_0} \ast w) \rho_{m_0} - \frac{1}{2} w(0),
	\end{align*}
	where $\Gamma^{(1)}$ is the $1$-particle reduced density matrix of $\Gamma$.
	Let $ \Musc(\rho) $ be the chemical potential corresponding to the minimizer $ m_0 $ and define $ V^{\mathrm{eff}} = V + \rho^{-1}w \ast \rho_{m_0}(x) - \Musc(\rho)$.
	Denoting by $ e_{\mathrm{Can}}^{\beta,\mathrm{eff}} \myp{\hbar,N} $ the minimum of the canonical energy with potential $ V^{\mathrm{eff}} $ and with no interaction, we obtain using the convergence shown for the non-interacting case in Section~\ref{sec:proof_main_thm},
	\begin{align*}
		\hbar^d \ecan \myp{\hbar,N}
		&\geq \hbar^d e_{\mathrm{Can}}^{\beta,\mathrm{eff}} \myp{\hbar,N} - \frac{\hbar^d N}{2\rho^2} \int_{\mathbb{R}^d} (\rho_{m_0} \ast w) \rho_{m_0} + \Musc(\rho) \hbar^d N \\
		&\!\!\underset{\substack{N\to\infty\\\hbar^d N \to \rho}}{\longrightarrow} -\frac{1}{\beta(2\pi)^d} \iint_{\mathbb{R}^{2d}} \log(1+ e^{-\beta(p^2+ V^{\mathrm{eff}}(x))}) \, \mathrm{d} x \, \mathrm{d} p \\
		&\qquad\qquad - \frac{1}{2\rho} \int_{\mathbb{R}^d} (\rho_{m_0} \ast w) \rho_{m_0} + \Musc(\rho) \rho \\
		&= \esc(\rho),
	\end{align*}
	where the last equality is due to \cref{theo:min_Vlasov}.
	This concludes the proof of the convergence of energy in \cref{theo:TL_Can_Sc}.
\end{proof}

%%%%%%%%%%%%%%%%%%%%%%%%%%%%%%%%%%%%%%%%%%%%%%%%%%%%%

\subsection{Convergence of the energy in the dilute limit $\eta>0$}

Here we prove the convergence of the energy in Theorem~\ref{theo:dilute_limit} where $\eta > 0$. We first state a lemma about the regularity of the minimizers of (\ref{eq:def_min_TF}) when the interaction has a Dirac component. It will be needed in the proof of the convergence of the energy in \cref{theo:dilute_limit} below.

\begin{lem}
	\label{lem:rho_regularity}
	Let $ \beta,a,\rho > 0 $, let $A,V$ satisfy the assumptions of \cref{theo:min_Vlasov}, let $w = a \delta_0$ for some $a>0$. If $ m \in L^{1}(\mathbb{R}^{2d}) $ satisfies the non-linear equation (\ref{eq:m0def}), then $ \rho_m \in L^{1+d/2} \myp{\mathbb{R}^d} $.
\end{lem}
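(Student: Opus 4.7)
The plan is to exploit the self-consistency of the fixed-point equation~\eqref{eq:m0def}. With $w = a \delta_0$ one has $w \ast \rho_m = a \rho_m$ pointwise, and integrating~\eqref{eq:m0def} over $p$ (after the change of variables $q = p + A(x)$, which makes $A$ drop out) yields
\begin{equation*}
\rho_m(x) = g\bigl(W(x)\bigr), \qquad W(x) := V(x) + a\rho^{-1}\rho_m(x) - \mu,
\end{equation*}
where $g(u) := (2\pi)^{-d} \int_{\R^d} (1 + e^{\beta(q^2+u)})^{-1}\,\mathrm{d} q$. A direct splitting of this integral over $\{q^2 \leq |u|\}$ and its complement shows that $g$ is strictly decreasing and satisfies, for constants $c,C>0$ depending only on $\beta$ and $d$,
\begin{equation*}
c\, u_-^{d/2} \leq g(u) \leq C\bigl(1 + u_-^{d/2}\bigr), \qquad g(u) \leq C e^{-\beta u} \text{ for } u \geq 0.
\end{equation*}

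The crucial point is that, because $a > 0$, the self-interaction $a\rho^{-1}\rho_m$ is nonnegative and, by the lower bound on $g$, grows like $|W|^{d/2}$ whenever $W<0$; this prevents $W$ from being much more negative than $V$. Concretely, where $W(x) \geq 0$ monotonicity gives $\rho_m(x) \leq g(0)$. Where $W(x) < 0$, the identity $-V(x) + \mu = |W(x)| + a\rho^{-1}\rho_m(x)$ combined with $\rho_m = g(W) \geq c|W|^{d/2}$ gives
\begin{equation*}
V_-(x) + |\mu| \;\geq\; -V(x) + \mu \;\geq\; |W(x)| + c a\rho^{-1} |W(x)|^{d/2}.
\end{equation*}
When $|W(x)| \geq 1$ this forces $|W(x)|^{d/2} \leq C'(V_-(x) + |\mu|)$, hence $\rho_m(x) \leq C(1+|W(x)|^{d/2}) \leq C(1 + V_-(x))$; when $|W(x)| \leq 1$ monotonicity gives $\rho_m(x) \leq g(-1)$. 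Combining the cases yields the pointwise bound
\begin{equation*}
\rho_m(x) \leq C\bigl(1 + V_-(x)\bigr) \quad \text{for a.e. } x \in \R^d.
\end{equation*}

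To conclude $\rho_m \in L^{1+d/2}(\R^d)$, split the integral according to whether $V_-(x) \leq 1$ or $V_-(x) > 1$. On the first set $\rho_m$ is bounded by a universal constant, so $\rho_m^{1+d/2} \leq C^{d/2}\rho_m$, and the integral is controlled by $\|\rho_m\|_{L^1} < \infty$, which is finite because $m \in L^1(\R^{2d})$ by hypothesis. On the second set $\rho_m \leq 2C V_-$, so $\rho_m^{1+d/2} \leq (2C)^{1+d/2} V_-^{1+d/2}$, which is integrable by the hypothesis $V_- \in L^{1+d/2}(\R^d)$ inherited from \cref{theo:min_Vlasov}. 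The main subtlety of the proof is the self-consistent case analysis producing the pointwise bound $\rho_m \lesssim 1 + V_-$—it is here that the sign condition $a > 0$ plays an essential role, preventing the naive bound $\rho_m \lesssim V_-^{d/2}$ (which would be insufficient) by using the defocusing interaction to absorb one factor of $|W|^{d/2}$. The final $L^p$ splitting is then routine.
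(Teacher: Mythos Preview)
Your proof is correct and follows essentially the same strategy as the paper: both arguments exploit the positive self-interaction term $a\rho^{-1}\rho_m$ in the effective potential to obtain the pointwise bound $\rho_m \lesssim 1 + V_-$ (rather than the naive $\rho_m \lesssim V_-^{d/2}$), and then conclude via $V_- \in L^{1+d/2}$ and $\rho_m \in L^1$. The only difference is in execution: the paper uses the cruder exponential bound $\rho_m \leq C e^{-\beta(V+\rho_m)}$ and takes a logarithm to get $\rho_m \leq V_- + \text{const}$ on $\{\rho_m \geq 1\}$ in one line, whereas you work with two-sided polynomial bounds on $g$ and a case split on the sign and size of $W$. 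The paper's route is a little shorter, but your version makes the mechanism (the $|W|^{d/2}$ term being absorbed by the self-interaction) more transparent.
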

\begin{proof}
	For simplicity and without loss of generality, we assume that $ a=\rho=1$, $\mu = 0$ and we take $w = \delta_0$ and $A=0$.
	Since $ \rho_m \in L^1 \myp{\mathbb{R}^d} $, it is sufficient to show that $ \rho_m \mathds{1}_{\{\rho_m\myp{x}\geq 1\}} $ is in $ L^{1+d/2} \myp{\mathbb{R}^d} $.
	Recalling that $ m $ satisfies the equation
	\begin{equation}
		\label{eq:m0def2}
		m \myp{x,p} = \frac{1}{1+ e^{\beta \myp{p^2 + V\myp{x} + \rho_m \myp{x}}}},
	\end{equation}
	we immediately have
	\begin{equation*}
		\rho_m \myp{x} \leq \frac{e^{-\beta \myp{V\myp{x}+\rho_m \myp{x} }} }{\myp{2 \pi}^d} \int_{\mathbb{R}^d} e^{-\beta p^2} \, \mathrm{d} p
		= C_{d,\beta} e^{-\beta \myp{V\myp{x}+\rho_m \myp{x} }},
	\end{equation*}
	implying that
	\begin{equation*}
		\rho_{m}(x)e^{\beta \rho_m \myp{x} } \leq C_{d,\beta} e^{\beta V_- \myp{x}}.
	\end{equation*}
	Hence
	\begin{equation*}
	\rho_{m}\mathds{1}_{\left\{\rho_m \geq 1\right\}} 
	\leq \left(V_- + \log C_{d,\beta} \right) \mathds{1}_{\left\{\rho_m \geq 1\right\}} 
	\in L^{1+\frac{d}{2}}(\mathbb{R}^d),
	\end{equation*}
	since $ V_- \in L^{1+\frac{d}{2}} \myp{\mathbb{R}^d} $ and $ \{\rho_m \geq 1 \} $ has finite measure by Markov's inequality.
\end{proof}

\begin{rem}
If $w=0$ then $\rho_{m}$ behaves like $V_-^{d/2}$, it can be seen by doing the same computation as in (\ref{eq:computation_rho}). Therefore, without other assumptions than $w \in L^{1+\frac{d}{2}}(\mathbb{R}^d)$, we cannot expect more from $\rho_{m}$.
\end{rem}

\subsubsection{Case $ 0 < \eta < 1/d $}
We assume that $ 0 < d \eta < 1 $ and take $ w \in L^1 \myp{\mathbb{R}^d} $ with $ 0 \leq \widehat{w} \in L^1 \myp{\mathbb{R}^d} $.
Take $ w_N = N^{d \eta} w \myp{N^{\eta} \cdot} $ and consider the canonical model with this interaction. Denoting $ a = \int_{\mathbb{R}^d} w \myp{x} \, \mathrm{d} x $, \cref{prop:trial_states} implies that
\begin{equation*}
	\limsup_{\substack{N \to \infty \\ \hbar^d N \to \rho}} \hbar^d \ecan \myp{N,\hbar}
	\leq e_{\mathrm{Vla}}^{\beta,w=a \delta_0} \myp{\rho}.
\end{equation*}
To show the lower bound, we follow the argument of \cref{prop:lower_bound_canonical_by_sc}.
Denote by $ m_0 $ the minimizer of the Vlasov functional with the delta interaction $ a \delta_0 $, and let $ \Gamma_N $ be the Gibbs state minimizing the canonical free energy functional.
Applying \eqref{ineq:w_hat_pos} with $ \varphi = \frac{N}{\rho} \rho_{m_0} $, we obtain
\begin{align}
	\tr H_{N,\hbar} \Gamma_N 
	&\geq \tr \myp{ \myp{i \hbar \nabla + A}^2 + \Veff} \Gamma_N^{\myp{1}} + \frac{1}{\rho} \tr \myp{ w_N \ast \rho_{m_0} - a \rho_{m_0}} \Gamma_N^{\myp{1}} \nonumber \\
	&\qquad - \frac{N}{2 \rho^2} \int_{\mathbb{R}^d} \myp{\rho_{m_0} \ast w_N} \rho_{m_0} + \Musc^{w = a \delta_0} \myp{\rho} N + o \myp{\hbar^{-d}},
\label{eq:dilutebound}
\end{align}
where $ \Veff = V + \frac{a}{\rho} \rho_{m_0} - \Musc^{w = a \delta_0} \myp{\rho} $.
Here, by H\"older's inequality, we have
\begin{align*}
	\hbar^d \tr (w_N \ast \rho_{m_0} - a \rho_{m_0})& \Gamma_N^{\myp{1}} \\
	&= \hbar^d \int_{\mathbb{R}^d} \left( w_N \ast \rho_{m_0} - a \rho_{m_0}\right) \rho_{\Gamma_N^{\myp{1}}} \\
	&\leq \left\| \hbar^d \rho_{\Gamma_N^{\myp{1}}} \right\|_{L^{1+2/d}(\mathbb{R}^d)} \left\| w_N \ast \rho_{m_0} - a \rho_{m_0} \right\|_{L^{1+d/2}(\mathbb{R}^d)},
\end{align*}
which tends to $0$ since $ \| \hbar^d \rho_{\Gamma_N^{\myp{1}}}\|_{L^{1+{2}/{d}}(\mathbb{R}^d)} $ is bounded, by the Lieb-Thirring inequality, and since $ \rho_{m_0} \in L^{1+\frac{d}{2}} \myp{\mathbb{R}^d} $ by \cref{lem:rho_regularity}. Finally we have,
\begin{equation*}
	\int_{\mathbb{R}^d} \myp{\rho_{m_0} \ast w_N } \rho_{m_0} \longrightarrow a \int_{\mathbb{R}^d} \rho_{m_0}^2.
\end{equation*}
Hence, continuing from \eqref{eq:dilutebound}, we conclude that
\begin{align*}
	\liminf_{\substack{N \to \infty \\ \hbar^d N \to \rho}} \hbar^d \ecan \myp{N,\hbar}
	&\geq - \frac{1}{\myp{2 \pi}^d \beta} \iint_{\mathbb{R}^{2d}} \log \myp{1+ e^{-\beta \myp{p^2 + \Veff \myp{x}}}} \, \mathrm{d} x \, \mathrm{d} p \\
	&\qquad \qquad \qquad \qquad \qquad \quad + \Musc^{w = a \delta_0} \myp{\rho} \rho -\frac{a}{2 \rho} \int_{\mathbb{R}^d} \rho_{m_0}^2 \\
	&= e_{\mathrm{Vla}}^{\beta,w=a \delta_0} \myp{\rho}.
\end{align*}

\subsubsection{Case $ \eta > 1/d $}
Here we treat the dilute limit. Assume that $d\geq 3$, $ 0 \leq w \in L^1 \myp{\mathbb{R}^d} $, and that $ w $ is compactly supported. Then, since $ w \geq 0 $, we have the immediate lower bound
\begin{equation*}
	\liminf_{\substack{N \to \infty \\ \hbar^d N \to \rho}} \hbar^d \ecan \myp{N,\hbar}
	\geq \liminf_{\substack{N \to \infty \\ \hbar^d N \to \rho}} \hbar^d e_{\mathrm{Can}}^{\beta,w = 0} \myp{N,\hbar}
	= e_{\mathrm{Vla}}^{\beta,w=0} \myp{\rho}.
\end{equation*}
On the other hand, it follows from \cref{prop:trial_states} that we also have the corresponding upper bound, so
\begin{equation*}
	\lim_{\substack{N \to \infty \\ \hbar^d N \to \rho}} \hbar^d \ecan \myp{N,\hbar} = e_{\mathrm{Vla}}^{\beta,w=0} \myp{\rho}.
\end{equation*}
This finishes the proof of the convergence of the energy in the dilute limit.

%%%%%%%%%%%%%%%%%%%%%%%%%%%%%%%%%%%%%%%%%%%%%%%%%%%%%

\subsection{Convergence of states}\label{sec:CV_states}

%%%%%%%%%%%%%%%%%%%%%%%%%%%%%%%%%%%%%%%%%%%%%%%%%%%%%
% Without loss of generality, we take again $\rho = 1$.

\subsubsection{Strong convergence of the one-particle Husimi and Wigner measures}
Here we concentrate on proving the limits~\eqref{eq:limit_1PDM_a} an~\eqref{eq:limit_1PDM_b} for the one-particle Husimi measure and the associated density. 
We start by briefly recalling the definitions.

For $f\in L^2(\mathbb{R}^d)$ a normalized, real-valued function and $(x,p)\in \mathbb{R}^{2d}$, $\hbar >0$, we define $f_{x,p}^\hbar(y) = \hbar^{-d/4} f((x-y)/\hbar^{1/2})e^{ip\cdot y / \hbar}$ and denote by $P_{x,p}^\hbar = \ket{f_{x,p}^\hbar}\bra{f_{x,p}^\hbar}$ the orthogonal projection onto $f_{x,p}^\hbar$. For $k\geq 1$, we introduce the $k$-particle Husimi measure of a state $\Gamma$
\begin{equation*}
m^{(k)}_{f,\Gamma}(x_1,p_1,...,x_k,p_k) = \frac{N!}{(N-k)!} \tr\left( P_{x_1,p_1}^\hbar \otimes \cdots  \otimes P_{x_k,p_k}^\hbar \otimes \mathds{1}_{N-k} \Gamma\right),
\end{equation*}
for $x_1,p_1,...,x_k,p_k \in \mathbb{R}^{2dk}$. See~\cite{FouLewSol-18} for alternative formulas of $m^{(k)}_{f,\Gamma}$. We also recall the definition of the Wigner measure,
\begin{equation*}
\begin{gathered}
\mathcal{W}_{\Gamma}^{(k)}(x_1,..., p_k) = \int_{\mathbb{R}^{dk}} \int_{\mathbb{R}^{d(N-k)}} e^{-i\sum_{\ell=1}^k p_\ell \cdot y_\ell} \times \\
\times \Gamma(x_1  + \hbar y_1/2,...,x_k  + \hbar y_k /2, z_{k+1},...,z_N) \, dy_1...dy_k dz_{k+1}...dz_{N},
\end{gathered}
\end{equation*}
for $x_1,p_1,...,x_k,p_k \in \mathbb{R}^{2dk}$, where $\Gamma(\cdot,\cdot)$ is the kernel of the operator $\Gamma$.

Using \cite[Theorem 2.7]{FouLewSol-18} and the fact that the Husimi measures are bounded both in the $x$ and $p$ variables, we obtain the existence of a Borel probability measure $\mathcal{P}$ on 
$$\mathcal{S} =\left\{ \mu \in L^1(\mathbb{R}^{2d}),\quad 0\leq \mu \leq 1,\quad \int_{\mathbb{R}^{2d}} \mu = \rho \right\}$$ 
such that, up to a subsequence, we have
\begin{equation*}
\int_{\mathbb{R}^{2dk}} m_{f,\Gamma_N}^{(k)} \varphi  \to \int_{\mathcal{S}}\left(\int_{\mathbb{R}^{2dk}} m^{\otimes k} \varphi \right)d\mathcal{P}(m),
\end{equation*}
for any $\varphi \in L^{1}(\mathbb{R}^{2dk}) + L^{\infty}(\mathbb{R}^{2dk})$ and similarly for the Wigner measures. There is no loss of mass in the limit due to the confining potential $V$. Our goal is to show that $\mathcal{P}=\delta_{m_0}$, where $m_0$ is the Vlasov minimizer from Theorem~\ref{theo:min_Vlasov}.

We begin with the case $\eta = 0$. Using coherent states, the tightness of $(m_{f,\Gamma_N}^{(1)})_N$ and a finite volume approximation we obtain
\begin{align}
&\lim_{\substack{N_j\to\infty \\ \hbar^d N_j \to \rho}} \hbar^d \ecan(\hbar,N_j)  \geq  \frac{1}{(2\pi)^d}\int_{\mathcal{S}}\left(\int_{\mathbb{R}^{2dk}}(p^2 + V(x)) m(x,p)  \right)d\mathcal{P}(m)\nn  \\
&\quad +  \frac{1}{2\rho}\int_{\mathcal{S}}\left(\int_{\mathbb{R}^{2dk}} (w\ast \rho_m) \rho_m \right)d\mathcal{P}(m) + \frac{1}{(2\pi)^d} \int_{\mathbb{R}^{2d}} s\left( \int_{\mathcal{S}} m\, d\mathcal{P}(m)\right).\label{eq:pass_to_the_limit}
\end{align}
The lower semi-continuity of the entropy term can be justified as in the proof of \cref{lem:functminexist}. The case $0<\eta < 1/d$ can be adapted using (\ref{ineq:w_hat_pos}) with $\varphi= N \rho_{m_0}$ and the case $\eta > 1/d$ is even easier since the interaction is assumed non-negative and can therefore be dropped. 

If we denote $\overline{m} =\int_{\mathcal{S}} m\, d\mathcal{P}(m)$, the right side of~\eqref{eq:pass_to_the_limit} is not exactly $\Ecan(\overline{m})$ because of the interaction term. In the case $0 \leq \eta < 1/d$ we assumed $\widehat{w}\geq 0$, hence the following inequality follows from convexity: 
\begin{equation}
\int_{\mathcal{S}}\left(\int_{\mathbb{R}^{2dk}} w\ast \rho_m \rho_m \right)d\mathcal{P}(m) \geq \int_{\mathbb{R}^{2d}} w\ast \rho_{\overline{m}}\rho_{\overline{m}}.
\label{eq:convex_interaction}
\end{equation}
The case $1/d < \eta $ is immediate since we assumed $w\geq 0$ and the limiting energy has no interaction term.
Gathering the above inequalities we have
\begin{equation*}
\lim_{\substack{N_j\to\infty \\ \hbar^d N_j \to \rho}} \hbar^d \ecan(\hbar,N_j) \geq \mathcal{E}_{\rm Vla}^{\beta,\rho,\bullet}(\overline{m})\geq e_{\mathrm{Vla}}^{\beta,\bullet} \myp{\rho},
\end{equation*}
where $\mathcal{E}_{\rm Vla}^{\beta,\rho,\bullet}$ and $ e_{\mathrm{Vla}}^{\beta,\bullet} \myp{\rho}$ are the appropriate limiting functional and energy: i.e. $\bullet= w$ if $\eta = 0$, $\bullet= (\int_{\mathbb{R}^d} w)\delta_0$ if $0<d \eta < 1$ and $\bullet = 0$ if $d \eta \geq 1$ and $d \geq 3$. Now, the equality in this series of inequalities forces $\bar m$ to be equal to $m_0$. And since this limit does not depend on the subsequence we have taken, we conclude that the whole sequence $m^{(1)}_{f,\Gamma_N}$ converges weakly to $m_0$, and similarly for the Wigner measure.

Note that, when $\widehat{w}>0$ and $0<d \eta < 1$, the equality in~\eqref{eq:convex_interaction} gives that $\mathcal{P}$ is concentrated on functions $m$ which all share the same density $\rho_{m_0}$, by strict convexity. But this is the only information that we have obtained so far on $\mathcal{P}$. If the conjectured entropy inequality~\eqref{eq:Fatou_entropy} was valid, then we would conclude immediately that $\mathcal{P}=\delta_{m_0}$. Since we do not have this inequality, we will have to go back later to the proof that $\mathcal{P}=\delta_{m_0}$. 

So far the convergence of $m^{(1)}_{f,\Gamma_N}$ is only weak but it can be improved using the (one-particle) entropy. Going back to the previous estimates we now have 
\begin{equation}
\hbar^d \ecan(\hbar,N) 
% 	&\geq \frac{1}{(2\pi)^d} \iint_{\mathbb{R}^{2d}} \bigg( p^2 + V(x) + \frac{1}{2 \rho} w_N\ast \rho_{m_0}(x) \bigg)m_{f,\Gamma_N}^{(1)}(x,p) dx dp \nn \\
% 	& \quad + \frac{1}{(2\pi)^d \beta}\iint_{\mathbb{R}^{2d}} s(m_{f,\Gamma_N}^{(1)}) + o(1) \nn \\
= e_{\mathrm{Vla}}^{\beta,\bullet} \myp{\rho}+ \frac{1}{(2\pi)^d \beta}\iint_{\mathbb{R}^{2d}} (s(m_{f,\Gamma_N}^{(1)}) - s(m_0))+ o(1) \label{eq:entropy_terms}
\end{equation}
As before we denote by $e_{\mathrm{Vla}}^{\beta,\bullet} \myp{\rho}$ the appropriate limiting energy, depending on the choice of $\eta$. Recall that in the case $\eta > 1/d$, the interaction potential is assumed to be non negative, so the interaction term is just dropped. We now focus on the second term in (\ref{eq:entropy_terms}). Let us remark that
\begin{align*}
&s(m_{f,\Gamma_N}^{(1)}) - s(m_0) \\
&\ = m_{f,\Gamma_N}^{(1)} \log\left(\frac{m_{f,\Gamma_N}^{(1)}}{m_0}\right) + (1-m_{f,\Gamma_N}^{(1)}) \log\left(\frac{1-m_{f,\Gamma_N}^{(1)}}{1-m_0}\right) \\
	& \qquad \qquad \qquad \qquad \qquad \qquad \qquad \qquad \qquad \quad 
	+ (m_0 - m_{f,\Gamma_N}^{(1)}) \log\left(\frac{1-m_0}{m_0}\right) \\
	&\  \geq m_0 \log\left(\frac{m_{f,\Gamma_N}^{(1)}}{m_0}\right) +  \beta (m_0 - m_{f,\Gamma_N}^{(1)}) \left(p^2 + V+ \frac{1}{\rho} w_N\ast \rho_{m_0} - \mu + \beta^{-1}\right),
\end{align*}
where we used the expression of $m_0$ (\ref{eq:m0def}) and the pointwise inequality \\ $x \log(x/y) + (y-x) \geq 0$ for any $x,y > 0$. Integrating over $x$ and $p$, we obtain on the right side the sum of the relative von Neumann entropy of $m_{f,\Gamma_N}^{(1)}$ and $m_0$, and a term which tends to zero, due  to the weak convergence we have proven. By Pinsker's inequality and (\ref{eq:entropy_terms}) we obtain
\begin{align*}
\hbar^d \ecan(\hbar,N) - e_{\mathrm{Vla}}^{\beta,\bullet} \myp{\rho}
      \geq \frac{1}{2 (2\pi)^d \beta} \bigg( &\int_{\mathbb{R}^{2d}} |m_{f,\Gamma_N}^{(1)} - m_0|\bigg)^2 + o(1).
\end{align*}
The convergence of the energies gives the strong convergence in $L^1(\mathbb{R}^{2d})$ of $m_{f,\Gamma_N}^{(1)}$ towards the Vlasov minimizer $m_0$, hence in $L^p(\R^{2d})$ for all $1\leq p<\ii$ since the Husimi measures are bounded by 1. This automatically gives that $\rho_{m_{f,\Gamma_N}^{(1)}} \to \rho_{m_0}$ strongly in $L^1(\mathbb{R}^d)$. The weak convergence in $L^{1+2/d}(\mathbb{R}^d)$ follows from the (classical) Lieb-Thirring inequality
\begin{equation*}
\|\rho_m\|_{L^{1+d/2}(\mathbb{R}^d)} \leq C \|m\|_{L^1(\mathbb{R}^{2d},p^2dxdp)}^{\frac{d}{d+2}} \|m\|_{L^\infty(\mathbb{R}^{2d})}^{\frac{2}{d+2}}
\end{equation*}
for any $m$ in $L^1(\mathbb{R}^{2d})$.

Finally, by the Lieb-Thirring inequality $ \hbar^d \rho_{\Gamma_N}^{(1)}$ is bounded in $L^{1}(\mathbb{R}^d) \cap L^{1+d/2}(\mathbb{R}^d)$, hence this sequence is weakly precompact in those spaces. On the other hand, for any $\varphi \in C^{\infty}_c(\mathbb{R}^d)$ we have by~\cite[Lemma 2.4]{FouLewSol-18}
\begin{equation*}
\int_{\mathbb{R}^d} \rho_{m_{f,\Gamma_N}^{(1)}} \varphi =  \int_{\mathbb{R}^d} \hbar^d \rho_{\Gamma_N}^{(1)} \varphi \ast |f^{\hbar}|^2.
\end{equation*}
Let $\widetilde{\rho}$ be an accumulation point for $ \hbar^d\rho_{\Gamma_N}^{(1)}$. By passing to the limit in both sides we obtain
\begin{equation*}
\int_{\mathbb{R}^d}  \rho_{m_0} \varphi =\int_{\mathbb{R}^d} \widetilde{\rho} \varphi.
\end{equation*}
The test function $\varphi$ being arbitrary, we conclude that $\hbar^d \rho_{\Gamma_N}^{(1)}$ has a single accumulation point and therefore converges weakly in $L^1(\mathbb{R}^d) \cap L^{1+d/2}(\mathbb{R}^d)$ towards $\rho_{m_0}$.

\subsubsection{Weak convergence of the $k$-particle Husimi and Wigner measures}

At this point we have proved the strong convergence of $m^{(1)}_{f,\Gamma_N}$ towards $m_0$ in $L^p(\R^{2d})$ for all $1\leq p<\ii$. Our argument works for any sequence of approximate Gibbs states $(\Gamma_N)$ in the sense that 
$$\mathcal{E}_{\mathrm{Can}}^{N,\hbar} \myp{\Gamma_N}=e_{\mathrm{Can}}^{\beta} \myp{\hbar,N} +o(N).$$ 
Here we discuss the weak convergence of the higher order Husimi functions. This is not an easy fact in the canonical ensemble case. For instance, when $w\equiv0$ one can use Wick's formula in the grand canonical case but there is no such formula in the canonical ensemble~\cite{Schonhammer-17,GraMajSchTex-18}. Here we will use a Feynman-Hellmann-type argument, which forces us to consider the exact Gibbs state, and not only an approximate equilibrium state. We will come back to approximate Gibbs states at the end of the proof but our argument will require that they approach the right energy with an error of order $o(1)$ instead of $o(N)$. 

In order to access the two-particle Husimi function, the usual Feynman-Hellmann argument is to perturb the $N$-body Hamiltonian by a positive two-body term of order $N$, multiplied by a small parameter $\eps$. This modifies the effective Vlasov energy and, after taking the limit, one then look at the derivative at $\eps=0$. The problem here is to control negative values of $\eps$. For atoms one can use the strong repulsion at the origin of the Coulomb interaction to control a negative two-body term, as was done in~\cite{NarThi-81}.\footnote{After inspection one sees that the argument used in~\cite{NarThi-81} works under the condition that $\widehat{w}(p)\geq a|p|^{-a}$ for some $a>0$ for large $p$. Not all interaction potentials can therefore be covered.} For a general interaction or even when $w\equiv0$, such an argument fails. Another difficulty is the need to re-prove the existence of the limit with the perturbation, since in the canonical ensemble trial states are not so easy to construct. 

We follow a different route and use instead an argument inspired of a new technique recently introduced in~\cite{LewNamRou-18c}. The idea is to perturb the energy by a one body term of order $1$. This will not modify the leading order in the limit and will force us to look at the next order. Since we are only interested in deviations in $\eps$, the existence of the limit for the one-particle Husimi measure will help us to identify the deviation. Then, in order to access the two-body Husimi measure, we look at the second derivative at $\eps=0$ instead of the first derivative.

Let us detail the argument. Let $b\in C^\ii_c(\R^d\times \R^d,\R_+)$ be a non-negative function on the phase space and introduce its coherent state quantization
$$B_\hbar:=\frac{1}{(2\pi)^d}\iint_{\R^d\times\R^d}b(x,p)\,P_{x,p}^\hbar\,dx\,dp,$$
where we recall that $P_{x,p}^\hbar=\ket{f_{x,p}^\hbar}\bra{f_{x,p}^\hbar}$ is the orthogonal projection onto $f_{x,p}^\hbar$. We then consider the one-particle operator
\begin{equation}
B_{N,\hbar}:=\sum_{j=1}^N(B_\hbar)_j
\label{eq:def_B_N}
\end{equation}
in the $N$-particle space. Note that $B_\hbar$ is a bounded self-adjoint operator with 
$$0\leq B_\hbar\leq \|b\|_{L^\ii(\R^{2d})}\hbar^d$$
due to the coherent state representation~\eqref{eq:coherent_state_repres} and that it is trace-class with
\begin{equation}
\tr(B_\hbar)\leq \frac{1}{(2\pi)^d}\iint_{\R^d\times\R^d}b(x,p)\,dx\,dp.
\label{eq:estim_trace_B}
\end{equation}
In particular $B_{N,\hbar}$ is bounded uniformly in $N$, with 
\begin{equation}
\norm{B_{N,\hbar}}\leq \min\bigg(N\hbar^d\|b\|_{L^\ii(\R^{2d})},(2\pi)^{-d}\|b\|_{L^1(\R^{2d})}\bigg).
\label{eq:estim_norm_B}
\end{equation}
This is because 
\begin{equation}
 \norm{\sum_{j=1}^NC_j}\leq \norm{C}_{\gS^1}
 \label{eq:norm_many-body}
\end{equation}
in the fermionic $N$-particle space.
We introduce the perturbed Hamiltonian
$$H_{N,\hbar}(\eps):=H_{N,\hbar}+\eps B_{N,\hbar},$$
for $\eps\in\R$. The perturbation is uniformly bounded, hence will not affect the limit $N\to\ii$ for fixed $\eps$. More precisely, let us call 
$$\Gamma_{N,\hbar,\beta}(\eps):=\frac{e^{-\beta H_{N,\hbar}(\eps)}}{\tr e^{-\beta H_{N,\hbar}(\eps)}}$$
the associated Gibbs state and
$$F_{N,\hbar,\beta}(\eps):=-\frac{\log\tr(e^{-\beta H_{N,\hbar}(\eps)})}{\beta}$$
the corresponding free energy. Everywhere we assume that $\hbar N^{1/d}\to\rho$ and $\beta>0$ is fixed. By plugging $\Gamma_{N,\hbar,\beta}(\eps)$ into the variational principle at $\eps=0$ and conversely, we obtain immediately that
\begin{multline}
F_{N,\hbar,\beta}(0)+\frac{\eps}{(2\pi)^d}\iint_{\R^{2d}} b\, m^{(1)}_{f,\Gamma_{N,\hbar,\beta}(\eps)}\\
\leq F_{N,\hbar,\beta}(\eps)\leq F_{N,\hbar,\beta}(0)+\frac{\eps}{(2\pi)^d}\iint_{\R^{2d}} b\, m^{(1)}_{f,\Gamma_{N,\hbar,\beta}(0)}. 
\label{eq:compare_eps}
\end{multline}
We have used here that $\tr(B_\hbar \Gamma^{(1)})=(2\pi)^{-d}\iint_{\R^{2d}}b\,m^{(1)}_{f,\Gamma}$ for all states $\Gamma$. 
Since $0\leq m^{(1)}_{f,\Gamma}\leq1$, this proves that 
$$F_{N,\hbar,\beta}(\eps)=\mathcal{E}_{\mathrm{Can}}^{N,\hbar} \myp{\Gamma_{N,\hbar,\beta}(\eps)}=e_{\mathrm{Can}}^{\beta} \myp{\hbar,N} +O(\eps)$$
where $O(\eps)$ is even uniform in $N$. Hence from the analysis in the previous section, we deduce immediately that 
$$m^{(1)}_{\Gamma_{N,\hbar,\beta}(\eps)}\longrightarrow m_0$$
strongly in $L^1(\R^{2d})$ for any fixed $\eps$. Going back to~\eqref{eq:compare_eps} we infer that
$$F_{N,\hbar,\beta}(\eps)= F_{N,\hbar,\beta}(0)+\frac{\eps}{(2\pi)^d}\iint_{\R^{2d}} b m_0+o(1).$$
A different way to state the same limit is
\begin{equation}
f_N(\eps):=\frac{\tr e^{-\beta H_{N,\hbar}-\beta \eps B_{N,\hbar}}}{\tr e^{-\beta H_{N,\hbar}}}\longrightarrow \exp\left(-\frac{\eps\beta}{(2\pi)^d}\iint_{\R^{2d}} b m_0\right).
\label{eq:limit_f_N}
\end{equation}
It turns out that the so-defined function $f_N$ is $C^\ii$ on $\R$ (even real-analytic) with all its derivatives locally uniformly bounded in $N$. This follows from the following general fact.

\begin{lem}
Let $A$ be a self-adjoint operator such that $\tr(e^A)<\ii$ and let $B$ be a bounded self-adjoint operator, on a Hilbert space $\mathfrak{H}$. Then the function 
$$\eps\in\R\mapsto \frac{\tr(e^{A+\eps B})}{\tr (e^{A})}$$
is $C^\ii$ and its derivatives are bounded by
$$\left|\frac{d^k}{d\eps^k} \frac{\tr(e^{A+\eps B})}{\tr (e^{A})}\right|\leq \|B\|^k \frac{\tr(e^{A+\eps B})}{\tr (e^{A})}\leq \|B\|^ke^{|\eps|\|B\|}$$
for $k\geq0$. 
\end{lem}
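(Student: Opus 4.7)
I would reduce the lemma to two classical ingredients: the Duhamel expansion of the perturbed exponential and the non-commutative Hölder inequality for Schatten norms.

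First I would establish the $k$-th order Duhamel formula. Iterating the identity $\frac{\mathrm{d}}{\mathrm{d}\eps}e^{A+\eps B}=\int_0^1 e^{(1-t)(A+\eps B)}Be^{t(A+\eps B)}\,\mathrm{d}t$, a standard induction gives
\begin{equation*}
\frac{\mathrm{d}^k}{\mathrm{d}\eps^k}e^{A+\eps B}=k!\int_{\Delta_k} e^{s_0 K_\eps}\,B\,e^{s_1 K_\eps}\,B\cdots B\,e^{s_k K_\eps}\,\mathrm{d}s,
\end{equation*}
with $K_\eps:=A+\eps B$ and $\Delta_k:=\{(s_0,\ldots,s_k):s_i\geq 0,\ \sum_j s_j=1\}$ endowed with its canonical surface measure of total mass $1/k!$. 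Trace-class convergence of each integrand is ensured by the hypothesis $\tr(e^A)<\infty$ together with $\|B\|<\infty$, and will be controlled quantitatively by the estimate proved next.

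The decisive bound comes from applying the non-commutative Hölder inequality to the trace of the integrand, assigning the Schatten exponent $p_j=1/s_j$ to each positive factor $e^{s_j K_\eps}$ and the exponent $\infty$ to each factor $B$. Since $\|e^{s_j K_\eps}\|_{\gS^{1/s_j}}=(\tr e^{K_\eps})^{s_j}$ and $\sum_j 1/p_j=\sum_j s_j=1$, I obtain the pointwise estimate
\begin{equation*}
\left|\tr\bigl(e^{s_0 K_\eps}B\,e^{s_1 K_\eps}B\cdots B\,e^{s_k K_\eps}\bigr)\right|\leq \|B\|^k\prod_{j=0}^k(\tr e^{K_\eps})^{s_j}=\|B\|^k\tr(e^{A+\eps B}).
\end{equation*}
Boundary points of $\Delta_k$ where some $s_j$ vanish pose no difficulty: the corresponding factor is then the identity, the Hölder exponent is taken to be $\infty$, and $\|I\|=1=(\tr e^{K_\eps})^0$. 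Integrating this pointwise bound over $\Delta_k$ and multiplying by $k!$ (which cancels the simplex volume $1/k!$) yields the first claimed inequality.

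The remaining statements follow quickly. Setting $f(\eps):=\tr(e^{A+\eps B})$ and applying the estimate above with $k=1$ gives $|(\log f)'(\eps)|\leq \|B\|$, so integration from $0$ to $\eps$ yields $f(\eps)\leq e^{|\eps|\|B\|}f(0)$; combining this with the first bound gives the second. The $C^\infty$ (in fact real-analytic) character of $\eps\mapsto f(\eps)/f(0)$ then follows from absolute convergence of the Dyson series on all of $\R$, guaranteed by $|f^{(k)}(0)|/k!\leq \|B\|^k f(0)/k!$. The main technical subtlety I anticipate is justifying the Duhamel expansion and the exchange of derivative and trace when $A$ is unbounded; this is however standard once the uniform integrability bound just obtained is in hand, as it provides the dominating function needed for dominated convergence.
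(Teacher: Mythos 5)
Your proof is correct and follows essentially the same route as the paper's: Duhamel's formula for the derivatives of $\eps\mapsto e^{A+\eps B}$ combined with Hölder's inequality in Schatten spaces. The only difference is that you carry out the iterated Duhamel expansion explicitly for general $k$, whereas the paper treats $k=1$ and $k=2$ and asserts that higher orders are handled identically; you also obtain the comparison $\tr(e^{A+\eps B})\leq e^{|\eps|\|B\|}\tr(e^A)$ by integrating the $k=1$ bound rather than reading it off directly from the operator inequality $A+\eps B\leq A+|\eps|\|B\|$.
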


\begin{proof}
Note that $\tr(e^{A+\eps B})\leq e^{\eps\|B\|}\tr(e^{A})$ since $A+\eps B\leq A+|\eps|\|B\|$. We have for the first derivative
$$\frac{d}{d\eps} \frac{\tr(e^{A+\eps B})}{\tr (e^{A})}=\frac{\tr(Be^{A+\eps B})}{\tr (e^{A})}$$
which is then clearly bounded by $\|B\|$. The second derivative is given by Duhamel's formula
\begin{equation}
\frac{d^2}{d\eps^2} \frac{\tr(e^{A+\eps B})}{\tr (e^{A})}=\int_0^1 \frac{\tr(Be^{t(A+\eps B)}Be^{(1-t)(A+\eps B)})}{\tr (e^{A})}\,dt
\label{eq:formula_2nd_derivative_exp}
\end{equation}
and we have by H\"older's inequality in Schatten spaces
\begin{align*}
\left|\tr(Be^{t(A+\eps B)}Be^{(1-t)(A+\eps B)})\right|&\leq \|B\|^2\norm{e^{t(A+\eps B)}}_{\mathfrak{S}^{\frac1t}}\norm{e^{(1-t)(A+\eps B)}}_{\mathfrak{S}^{\frac1{1-t}}}\\
&=\|B\|^2\tr(e^{A+\eps B})\leq \|B\|^2e^{|\eps|\|B\|}\tr(e^{A}),
\end{align*}
as claimed.
The argument is the same for the higher order derivatives.  The function is indeed real-analytic on $\R$ but this fact is not needed in our argument.
\end{proof}

Since in $B_{N,\hbar}$ is bounded uniformly in $N$ and $\hbar$, we conclude from the lemma that $f_N$ is bounded in $W^{k,\ii}_{\rm loc}$ for all $k$. This implies that $f_N^{(k)}$ converges locally uniformly to the $k$th derivative of the right side of~\eqref{eq:limit_f_N} for all $k$. In particular, we have
\begin{equation}
 f_N''(0)\longrightarrow\left(\frac{\beta}{(2\pi)^d}\iint_{\R^{2d}} b m_0\right)^2.
 \label{eq:value_limit_f_N_dd_1}
\end{equation}
On the other hand, we can compute the second derivative $f_N''(0)$ explicitly, using~\eqref{eq:formula_2nd_derivative_exp}:
\begin{equation}
 f_N''(0)=\beta^2\int_0^1 \frac{\tr\left(B_{N,\hbar}\,e^{-t\beta H_{N,\hbar}}B_{N,\hbar}\,e^{-(1-t)\beta H_{N,\hbar}}\right)}{\tr (e^{-\beta H_{N,\hbar}})}\,dt.
 \label{eq:formula_f_N_dd}
\end{equation}
We claim that this indeed behaves as
\begin{equation}
 f_N''(0)=\frac{\beta^2}{(2\pi)^{2d}}\iint_{\R^{4d}} b\otimes b \,m^{(2)}_{f,\Gamma_{N,\hbar,\beta}}+o(1)
 \label{eq:to_be_proven}
\end{equation}
and first explain why this is useful before justifying~\eqref{eq:to_be_proven}. From the weak convergence of $m^{(2)}_{f,\Gamma_{N,\hbar,\beta}}$ mentioned in the previous section, we obtain
\begin{equation}
 \lim_{\substack{N\to\ii\\ N\hbar^d\to\rho}}f_N''(0)=\frac{\beta^2}{(2\pi)^{2d}}\int_{\mathcal{S}} \left(\int_{\R^{2d}}bm\right)^2\,d\mathcal{P}
 \label{eq:value_limit_f_N_dd_2}
\end{equation}
with the de Finetti measure $\mathcal{P}$. Comparing~\eqref{eq:value_limit_f_N_dd_1} with~\eqref{eq:value_limit_f_N_dd_2} and using $m_0=\int_{\mathcal{S}} m\,d\mathcal{P}$, we conclude  that 
$$\int_{\mathcal{S}} \left(\int_{\R^{2d}}bm\right)^2\,d\mathcal{P}(m)=\left(\int_{\mathcal{S}}\int_{\R^{2d}}bm\,d\mathcal{P}(m)\right)^2$$
for every non-negative $b\in C^\ii_c(\R^{2d})$. This proves that $\mathcal{P}=\delta_{m_0}$ as desired. The limits~\eqref{eq:cv_of_states_husimi} and~\eqref{eq:cv_of_states_wiener} then follow for all $k\geq2$. Therefore, it only remains to prove~\eqref{eq:to_be_proven}.

The idea of the proof of~\eqref{eq:to_be_proven} is simple. Since we are in a semi-classical regime, the order of the operators in the trace~\eqref{eq:formula_f_N_dd} should not matter. If we put the two $B_{N,\hbar}$ together, we obtain after a calculation
\begin{equation*}
\tr\left(\big(B_{N,\hbar}\big)^2\Gamma_{N,\hbar,\beta}\right)=\tr\left((B_\hbar)^2\Gamma_{N,\hbar}^{(1)}\right)
+\frac{1}{(2\pi)^{2d}}\iint_{\R^{4d}}b\otimes b\, m^{(2)}_{f,\Gamma_{N,\hbar,\beta}}.
\end{equation*}
The first term tends to zero since 
$$\tr\left((B_\hbar)^2\Gamma_{N,\hbar}^{(1)}\right)\leq N\norm{B_\hbar}^2\leq \|b\|_{L^\ii(\R^{2d})}^2N\hbar^{2d},$$
whereas the second term converges to $(2\pi)^{-2d}\int_{\mathcal{S}}\left(\iint_{\R^{2d}}bm\right)^2d\mathcal{P}(m)$ due to the weak convergence of $m^{(2)}_{f,\Gamma_{N,\hbar,\beta}}$. Therefore we have to compare $f_N''(0)$ with $\tr(B_{N,\hbar})^2\Gamma_{N,\hbar,\beta}$. 

In~\cite{LewNamRou-18c}, it is proven that the function
$$t\mapsto \tr\left(B_{N,\hbar}\,e^{-t\beta H_{N,\hbar}}B_{N,\hbar}\,e^{-(1-t)\beta H_{N,\hbar}}\right)$$
is convex on $[0,1]$, non-increasing on $[0,1/2]$ and non-decreasing on $[1/2,1]$. Using that the function is minimal at $t=1/2$ and above its tangent at $t=0$ provides the bound
\begin{align*}
&\tr\left(B_{N,\hbar}\,e^{-t\beta H_{N,\hbar}}B_{N,\hbar}\,e^{-(1-t)\beta H_{N,\hbar}}\right)\\
&\qquad\geq \tr\left(B_{N,\hbar}\,e^{-\frac\beta2 H_{N,\hbar}}B_{N,\hbar}\,e^{-\frac\beta2 H_{N,\hbar}}\right)\\ 
&\qquad \geq \tr\left(\big(B_{N,\hbar}\big)^2e^{-\beta H_{N,\hbar}}\right)
+\frac{\beta}{4}\tr\left(\Big[\big[H_{N,\hbar},B_{N,\hbar}\big],B_{N,\hbar}\Big]\,e^{-\beta H_{N,\hbar}}\right)
\end{align*}
for all $t\in[0;1]$, see~\cite{LewNamRou-18c}. Inserting in~\eqref{eq:formula_f_N_dd}, we find that 
\begin{equation}
f_N''(0)\geq \beta^2\tr\left(\big(B_{N,\hbar}\big)^2\Gamma_{N,\hbar,\beta}\right)+\frac{\beta^3}4\tr\left(\Big[\big[H_{N,\hbar},B_{N,\hbar}\big],B_{N,\hbar}\Big]\,\Gamma_{N,\hbar,\beta}\right).
\label{eq:estim_from_below_F_N_dd}
\end{equation}
Hence~\eqref{eq:to_be_proven} readily follows from the following result.

\begin{lem}[Convergence of the double commutator]\label{lem:commutators}
With $B_{N,\hbar}$ as in~\eqref{eq:def_B_N}, we have
\begin{equation}
\lim_{\substack{N\to\ii\\ N\hbar^d\to\rho}}\tr\left(\Big[\big[H_{N,\hbar},B_{N,\hbar}\big],B_{N,\hbar}\Big]\,\Gamma_{N,\hbar,\beta}\right)
=0.
\label{eq:commutator_to_0}
\end{equation}
\end{lem}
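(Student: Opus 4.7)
My plan is to split the commutator according to the one-body and two-body parts of $H_{N,\hbar}$ and to control each via the semi-classical scaling of $B_\hbar$. Writing $h = |i\hbar\nabla+A|^2 + V$ and using that $[(B_\hbar)_j,(B_\hbar)_k]=0$ for all $j,k$, that $[h_j,(B_\hbar)_k]=0$ for $j\neq k$, and that $[w(x_j-x_k),(B_\hbar)_\ell]=0$ for $\ell\notin\{j,k\}$, one has the decomposition
\begin{equation*}
\bigl[[H_{N,\hbar},B_{N,\hbar}], B_{N,\hbar}\bigr] = \sum_{j=1}^N D_j + \frac{1}{N}\sum_{1\leq j<k\leq N} C_{jk},
\end{equation*}
with $D:=[[h,B_\hbar],B_\hbar]$ a bounded one-body operator and $C_{jk}:=[[w(x_j-x_k),(B_\hbar)_j+(B_\hbar)_k],(B_\hbar)_j+(B_\hbar)_k]$ a symmetric two-body operator. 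Tracing against $\Gamma_{N,\hbar,\beta}$ and using the normalizations $\tr\Gamma^{(1)}_{N,\hbar,\beta}=N$, $\tr\Gamma^{(2)}_{N,\hbar,\beta}=\binom{N}{2}$ together with the positivity of the reduced density matrices, one obtains
\begin{equation*}
\Bigl|\tr\bigl([[H_{N,\hbar},B_{N,\hbar}],B_{N,\hbar}]\,\Gamma_{N,\hbar,\beta}\bigr)\Bigr| \leq N\,\|D\| + \tfrac{N-1}{2}\,\|C_{12}\|,
\end{equation*}
where $\|\cdot\|$ denotes the operator norm. Hence it is enough to prove $\|D\|,\|C_{12}\|=O(\hbar^{2d+2})$, which gives a total bound of order $\hbar^{d+2}\to 0$ since $N\hbar^d\to\rho$.

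The operator-norm estimate comes from the semi-classical scaling of $B_\hbar$. Factoring $B_\hbar = \hbar^d\,\widetilde B_\hbar$ with $\widetilde B_\hbar = (2\pi\hbar)^{-d}\iint b(x,p)\,P_{x,p}^\hbar\,dx\,dp$ the anti-Wick quantization of $b$, we have $\|\widetilde B_\hbar\|\leq \|b\|_\infty$ uniformly in $\hbar$, and for smooth coefficients each commutator with $\widetilde B_\hbar$ produces an extra factor of $\hbar$ in the operator norm by the standard semi-classical symbol calculus. Iterated, this gives $\|[[h,\widetilde B_\hbar],\widetilde B_\hbar]\|=O(\hbar^2)$ and hence $\|D\|=\hbar^{2d}O(\hbar^2)=O(\hbar^{2d+2})$; the same reasoning, viewing $w(x_1-x_2)$ as a smooth multiplication operator on each fiber, yields $\|C_{12}\|=O(\hbar^{2d+2})$.

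To remove the smoothness assumption on $V$, $|A|^2$ and $w$, I approximate each of them by functions in $C^\infty_c$ with a small remainder $R$ in $L^{1+d/2}$. The smooth parts are controlled by the previous paragraph; for the remainder, cyclicity of the trace gives
\begin{equation*}
\tr\bigl([[R,B_\hbar],B_\hbar]\,\Gamma^{(1)}_{N,\hbar,\beta}\bigr) = \int_{\R^d} R(x)\,\rho_{T_\hbar}(x)\,dx,
\end{equation*}
where $T_\hbar := [B_\hbar,[B_\hbar,\Gamma^{(1)}_{N,\hbar,\beta}]]$ is self-adjoint and trace class, with an analogous identity in the two-body case paired against the density of the corresponding two-body double commutator. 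By H\"older's inequality the right-hand side is bounded by $\|R\|_{L^{1+d/2}}\,\|\rho_{T_\hbar}\|_{L^{1+2/d}}$, and the Lieb-Thirring inequality together with the smoothness and compact support of the kernel of $B_\hbar$ yield a uniform-in-$\hbar$ bound on $\|\rho_{T_\hbar}\|_{L^{1+2/d}}$. Sending first $\hbar\to 0$ and then the regularization parameter to zero concludes the argument. The main obstacle is the rigorous verification of the $O(\hbar^2)$ commutator bound in the presence of the unbounded kinetic operator $|i\hbar\nabla+A|^2$, and of the uniform $L^{1+2/d}$ estimate on the density of the double commutator with $\Gamma^{(1)}_{N,\hbar,\beta}$; both rely crucially on the coherent-state smoothing structure built into $B_\hbar$.
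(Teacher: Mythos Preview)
Your overall strategy---decompose into one-body and two-body pieces and estimate each via operator norm using the semi-classical scaling $B_\hbar=\hbar^d\widetilde B_\hbar$---is different from the paper's, which never invokes symbol calculus but instead expands each commutator and estimates the individual terms $B_\hbar(-\hbar^2\Delta)B_\hbar$, $VB_\hbar^2$, $(B_\hbar)_jw_{jk}(B_\hbar)_k$, etc.\ in Schatten norms, combining the trace-class bound $\|B_\hbar\|_{\gS^1}\leq C$ with the energy bounds $\tr(-\hbar^2\Delta)\Gamma^{(1)}=O(N)$ and $\tr|V|\Gamma^{(1)}=O(N)$ for the Gibbs state. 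The paper's route is more elementary and, crucially, robust under the low regularity of $V$ and $w$ assumed in the theorem.

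There are two genuine gaps in your proposal. First, the remainder argument: you write $\tr([[R,B_\hbar],B_\hbar]\Gamma^{(1)})=\int R\,\rho_{T_\hbar}$ with $T_\hbar=[B_\hbar,[B_\hbar,\Gamma^{(1)}]]$ and then invoke Lieb--Thirring for a uniform bound on $\|\rho_{T_\hbar}\|_{L^{1+2/d}}$. But $T_\hbar$ is neither positive nor bounded by $1$, so Lieb--Thirring does not apply to it. One can recover control via a pointwise Cauchy--Schwarz on the kernel (e.g.\ $|\rho_{B_\hbar^2\Gamma^{(1)}}(x)|\leq \rho_{B_\hbar^2\Gamma^{(1)}B_\hbar^2}(x)^{1/2}\rho_{\Gamma^{(1)}}(x)^{1/2}$) together with $B_\hbar^2\Gamma^{(1)}B_\hbar^2\leq \|B_\hbar\|^3 B_\hbar$ and the essentially compact support of $B_\hbar$, but this is not the Lieb--Thirring inequality and is not what you wrote. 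The two-body analogue is more involved and you have not indicated how it goes.

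Second, and more seriously, the lemma in the paper is used for both Theorems~\ref{theo:TL_Can_Sc} and~\ref{theo:dilute_limit}, so the interaction is $w_N(x)=N^{d\eta}w(N^\eta x)$. Your symbol-calculus bound $\|[[w_N(\cdot-x_2),\widetilde B_\hbar],\widetilde B_\hbar]\|=O(\hbar^2)$ has a constant depending on $\|\nabla^2 w_N\|_\infty\sim N^{(d+2)\eta}$, so your estimate for the two-body term becomes $(N-1)\hbar^{2d+2}N^{(d+2)\eta}\sim N^{(d+2)(\eta-1/d)}$, which diverges when $\eta>1/d$. The paper avoids this entirely by never differentiating $w_N$: it bounds $\|B_\hbar|w_N(x_1-\cdot)|B_\hbar\|_{\gS^1}$ by $\|B_\hbar\|\cdot\sup_{x_1}\int_{B(x_1,R)}|w_N|$, which stays bounded since $\int|w_N|=\int|w|$. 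Your approach, as written, does not cover the dilute regime.
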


\begin{proof}
We have 
\begin{multline}
\Big[\big[H_{N,\hbar},B_{N,\hbar}\big],B_{N,\hbar}\Big]=\sum_{j=1}^N\big[\big[H_{1,\hbar},B_{\hbar}\big],B_\hbar\big]_j\\
+\frac{1}{N}\sum_{1\leq j\neq  k\leq N}\big[\big[w_{jk},(B_\hbar)_j\big],(B_\hbar)_j+(B_\hbar)_k\big]
\label{eq:commutator}
\end{multline}
with $H_{1,\hbar}=|i\hbar\nabla +A|^2+V$ the one-particle operator and $w_{jk}$ the multiplication operator by $w_N(x_j-x_k)$. The commutators have been used to dramatically reduce the number of terms, but will not play any role anymore. We will estimate separately the terms $(B_{\hbar}H_{1,\hbar}B_\hbar)_j$, $(H_{1,\hbar}B_\hbar^2)_j$, $w_{jk}(B_\hbar)_j(B_\hbar)_{j'}$ and $(B_\hbar)_jw_{jk}(B_\hbar)_{j'}$ with $j'\in\{k,j\}$.

First we deal with the kinetic energy. For instance we can bound, by H\"older's inequality in Schatten spaces,
\begin{multline*}
\norm{B_\hbar (-\hbar^2\Delta) B_\hbar}_{\gS^1}+\norm{(-\hbar^2\Delta) (B_\hbar)^2}_{\gS^1}\\
\leq 2\norm{B_\hbar}\norm{B_\hbar}_{\gS^1}^{\frac12}\norm{(-\hbar^2\Delta) B_\hbar^{\frac12}}_{\gS^2}\leq \frac{C}{N}\norm{(-\hbar^2\Delta) B_\hbar^{\frac12}}_{\gS^2}.
\end{multline*}
We have used here our estimates~\eqref{eq:estim_trace_B} and~\eqref{eq:estim_norm_B} on the trace and norm of the non-negative operator $B_\hbar$. 
The last Hilbert-Schmidt norm is equal to
\begin{align*}
\norm{(-\hbar^2\Delta) B_\hbar^{\frac12}}_{\gS^2}^2&=\tr\big((-\hbar^2\Delta) B_\hbar(-\hbar^2\Delta) \big)\\
&=\frac{1}{(2\pi)^d}\iint_{\R^{2d}}b(x,p)\norm{\hbar^2\Delta f_{x,p}^\hbar}^2\,dx\,dp.
\end{align*}
Using that
\begin{multline*}
\hbar^2\Delta f_{x,p}^\hbar(y)=\hbar\;\hbar^{-d/4} (\Delta f)\left(\frac{x-y}{\sqrt{\hbar}}\right)e^{ip\cdot y / \hbar}-|p|^2f_{x,p}^\hbar(y)\\+2i\sqrt{\hbar}\;\hbar^{-d/4} p\cdot (\nabla f)\left(\frac{x-y}{\sqrt{\hbar}}\right)e^{ip\cdot y / \hbar},
\end{multline*}
we find that 
$$\norm{(-\hbar^2\Delta) B_\hbar^{\frac12}}_{\gS^2}\leq C\iint_{\R^{2d}}(|p|^4+\hbar|p|^2+\hbar^2)b(x,p)\,dx\,dp.$$
This is uniformly bounded since $b$ has a compact support in the phase space. 
Using~\eqref{eq:norm_many-body}, we conclude as we wanted that 
$$\tr\bigg(\Gamma_{N,\hbar,\beta}\sum_{j=1}^N\big[\big[-\hbar^2\Delta,B_{\hbar}\big],B_\hbar\big]_j\bigg)=O(N^{-1}).$$

For the potential term we have to use more information on the state $\Gamma_{N,\hbar,\beta}$. We first estimate
$$\tr\left(\Gamma^{(1)}_{N,\hbar,\beta}VB_\hbar^2\right)\leq\|B_\hbar\|^{\frac32}\norm{(\Gamma^{(1)}_{N,\hbar,\beta})^{\frac12}|V|^{\frac12}}_{\gS^2}\norm{|V|^{\frac12}B_\hbar^{\frac12}}_{\gS^2}.$$
Using the Lieb-Thirring inequality for $V_-$ and that the energy is $O(N)$ for $V_+$, we see that 
$$\norm{(\Gamma^{(1)}_{N,\hbar,\beta})^{1/2}|V|^{1/2}}^2_{\gS^2}= \tr\Gamma^{(1)}_{N,\hbar,\beta}|V|=O(N).$$ 
Hence we can deduce that 
$$\tr\left(\Gamma^{(1)}_{N,\hbar,\beta}VB_\hbar^2\right)\leq \frac{C}{N}\norm{|V|^{\frac12}B_\hbar^{\frac12}}.$$
Like for the kinetic energy, we compute the Hilbert-Schmidt norm
\begin{align}
\norm{|V|^{\frac12}B_\hbar^{\frac12}}_{\gS^2}^2&=\tr |V|^{\frac12}B_\hbar|V|^{\frac12}\nn\\
&=\frac{1}{(2\pi)^d}\iint_{\R^d\times\R^d}b(x,p)\norm{|V|^{\frac12}f_{x,p}^\hbar}^2_{L^2(\R^d)}\,dx\,dp\nn\\
&=\frac{1}{(2\pi)^d}\iint_{\R^{2d}}b(x,p)\;|V|\ast|f_{0,0}^\hbar|^2(x)\,dx\,dp\nn\\
&\leq C\int_{B_R}|V(x)|\,dx\label{eq:estim_trace_BVB}
\end{align}
where $B_R$ is a fixed large ball, chosen large enough such that ${\rm supp}(b)\subset B_{R-1}$. We are using here that $f_{0,0}^\hbar$ has compact support, hence ${\rm supp}(f_{0,0}^\hbar)\subset B_1$, for $\hbar$ small enough. Since $V\in L^1_{\rm loc}(\R^d)$ by assumption, this proves that
$$\tr\left(\Gamma^{(1)}_{N,\hbar,\beta}VB_\hbar^2\right)=O(N^{-1}).$$
The argument is similar for $\tr\left(\Gamma^{(1)}_{N,\hbar,\beta}B_\hbar^2V\right)$.
Finally, we also have 
\begin{equation}
\tr\left(\Gamma^{(1)}_{N,\hbar,\beta}B_\hbar V B_\hbar\right)\leq \|B_\hbar\|\norm{B_\hbar^{\frac12} |V|^{\frac12}}_{\gS^2}^2=O(N^{-1}) 
 \label{eq:estim_BVB}
\end{equation}
by~\eqref{eq:estim_trace_BVB}. This concludes the proof that the potential terms tend to 0. The argument is exactly the same for $|A|^2$. For $i\hbar\nabla\cdot A+A\cdot i\hbar\nabla$, we argue similarly, using that 
$$\norm{|A|B_\hbar^{\frac12}}_{\gS^2}+\norm{A\cdot (i\hbar\nabla)B_\hbar^{\frac12}}_{\gS^2}\leq C\int_{B_R}|A|$$
and
$$\norm{|i\hbar\nabla|\sqrt{\Gamma_{N,\hbar,\beta}^{(1)}}}_{\gS^2}^2=\tr(-\hbar^2\Delta)\Gamma_{N,\hbar,\beta}^{(1)}=O(N).$$

Let us finally turn to the interaction. First we look at
$$\tr\bigg(\Gamma_{N,\hbar,\beta}\frac{1}{N}\sum_{1\leq j\neq   k\leq N}(B_\hbar)_jw_{jk}(B_\hbar)_k\bigg)=(N-1)\tr\bigg(\Gamma_{N,\hbar,\beta}(B_\hbar)_1w_{12}(B_\hbar)_2\bigg)$$
and use the Cauchy-Schwarz inequality to estimate 
$$\pm (B_\hbar)_1w_{12}(B_\hbar)_2\leq (B_\hbar)_1|w_{12}|(B_\hbar)_1+(B_\hbar)_2|w_{12}|(B_\hbar)_2.$$
We look for instance at
$$(N-1)\tr\bigg(\Gamma_{N,\hbar,\beta}(B_\hbar)_2w_{12}(B_\hbar)_2\bigg)=\tr\bigg(\Gamma_{N,\hbar,\beta}\sum_{j=2}^N(B_\hbar)_jw_{1j}(B_\hbar)_j\bigg).$$
For fixed $x_1$, the operator $\sum_{j=2}^N(B_\hbar)_jw_{1j}(B_\hbar)_j$ (acting on the remaining $N-1$ variables) is estimated as in~\eqref{eq:norm_many-body} by
\begin{equation*}
\norm{\sum_{j=2}^N(B_\hbar)_jw_{1j}(B_\hbar)_j }\leq \Big\|B_\hbar |w_N(x_1-\cdot)|B_\hbar\Big\|_{\gS^1}
\leq \frac{C}{N}\sup_{x_1\in\R^d}\int_{B(x_1,R)}|w_N|.
\end{equation*}
When $\eta>0$ the supremum can be bounded by $\int_{\R^d}|w_N|=\int_{\R^d}|w|$, since we assume that $w\in L^1( \R^d)$ in this case. When $\eta=0$ (hence $w_N=w$) this can be controlled by
$$\sup_{x_1\in\R^d}\int_{B(x_1,R)}|w|\leq |B_R|\|w_2\|_{L^\ii(\R^d)}+|B_R|^{1+\frac2d}\norm{w_1}_{L^{1+\frac{d}2}(\R^d)}$$
since $w=w_1+w_2\in L^{1+d/2}+L^\ii(\R^d)$. In all cases, we have proved that 
$$\tr\bigg(\Gamma_{N,\hbar,\beta}\frac{1}{N}\sum_{1\leq j\neq   k\leq N}(B_\hbar)_jw_{jk}\big((B_\hbar)_j+(B_\hbar)_k\big)\bigg)=O(N^{-1}).$$
It then remains to look at 
\begin{multline*}
\left|(N-1)\tr\bigg(\Gamma_{N,\hbar,\beta}(B_\hbar)_{j'}(B_\hbar)_2w_{12}\bigg)\right|\nn\\
\leq  (N-1)\sqrt{\tr\bigg(\Gamma_{N,\hbar,\beta}(B_\hbar)_{j'}(B_\hbar)_2|w_{12}|(B_\hbar)_2(B_\hbar)_{j'}\bigg)}\sqrt{\tr(\Gamma_{N,\hbar,\beta}|w_{12}|)},
\end{multline*}
where $j'\in\{1,2\}$. The first term is estimated as before by
$$\tr\bigg(\Gamma_{N,\hbar,\beta}(B_\hbar)_{j'}(B_\hbar)_2|w_{12}|(B_\hbar)_2(B_\hbar)_{j'}\bigg)\leq \frac{C}{N^3}\sup_{x\in\R^d}\int_{B(x,R)}|w_N|.$$
The supremum is uniformly bounded. Hence
\begin{equation}
 \left|(N-1)\tr\bigg(\Gamma_{N,\hbar,\beta}(B_\hbar)_{j'}(B_\hbar)_2w_{12}\bigg)\right|\leq \frac{C}{N^{1/2}}\sqrt{\tr(\Gamma_{N,\hbar,\beta}|w_{12}|)}.
 \label{eq:estim_interaction_commutator}
\end{equation}
The estimate on $\tr(\Gamma_{N,\hbar,\beta}|w_{12}|)$ depends on the value of $\eta$. If $\eta=0$, then $w_N=w$ and we have $\tr(\Gamma_{N,\hbar,\beta}|w_{12}|)=O(1)$ by the Lieb-Thirring inequality. If $\eta>1/d$, we have assumed that $w\geq0$, hence $\tr(\Gamma_{N,\hbar,\beta}|w_{12}|)=\tr(\Gamma_{N,\hbar,\beta}w_{12})$ is uniformly bounded since this term appears in the energy. Finally, when $0<\eta<1/d$, the Lieb-Thirring inequality implies 
$$\tr(\Gamma_{N,\hbar,\beta}|w_{12}|)\leq C\norm{w_N}_{L^{1+d/2}(\R^d)}=CN^{d\eta \frac{d/2}{1+d/2}}\norm{w}_{L^{1+d/2}(\R^d)}.$$
When inserted in~\eqref{eq:estim_interaction_commutator}, we obtain an error of the order $N^{-\frac12+\frac{d\eta}{2}\frac{1}{1+d/2}}\to0$.
This concludes the proof of Lemma~\ref{lem:commutators}.
\end{proof}

At this point we have finished the proof of Theorems~\ref{theo:TL_Can_Sc} and~\ref{theo:dilute_limit} for the exact $N$-particle Gibbs states $\Gamma_{N,\hbar,\beta}$. It is possible to handle approximate Gibbs states using the relative entropy and Pinsker's inequality as we did for the one-particle Husimi functions. Indeed, consider a sequence of states $\Gamma_N$ such that 
$$\mathcal{E}_{\mathrm{Can}}^{N,\hbar} \myp{\Gamma_N}=e_{\mathrm{Can}}^{\beta} \myp{\hbar,N} +o(1).$$
We can write
$$\mathcal{E}_{\mathrm{Can}}^{N,\hbar} \myp{\Gamma_N}-e_{\mathrm{Can}}^{\beta} \myp{\hbar,N}= \frac{1}{\beta}\,\mathcal{H}(\Gamma_N,\Gamma_{N,\hbar,\beta})$$
where
$\mathcal{H}(A,B)=\tr(A(\log A-\log B)$
if the relative entropy. From the quantum Pinsker inequality $\mathcal{H}(A,B)\geq \norm{A-B}^2_{\mathfrak{S}^1}/2$ we infer that 
$$\tr\big|\Gamma_N-\Gamma_{N,\hbar,\beta}\big|\longrightarrow0$$
in trace norm. Since $\|m^{(k)}_{f,\Gamma}\|_{L^\ii(\R^{2dk})}\leq \tr|\Gamma|$ by~\cite[Eq.~(1.15)]{FouLewSol-18}, we conclude that 
$$\norm{m^{(k)}_{f,\Gamma_N}-m^{(k)}_{f,\Gamma_{N,\hbar,\beta}}}_{L^\ii(\R^{2dk})}\longrightarrow0.$$
Therefore $m^{(k)}_{f,\Gamma_N}$ has the same weak limit as the exact Gibbs state. The proof of Theorems~\ref{theo:TL_Can_Sc} and~\ref{theo:dilute_limit} is now complete.

%%%%%%%%%%%%%%%%%%%%%%%%%%%%%%%%%%%%%%%%%%%%%%%%%%%%%
%%%%%%%%%%%%%%%%%%%%%%%%%%%%%%%%%%%%%%%%%%%%%%%%%%%%%

\section{Proof of Theorem~\ref{theo:min_Vlasov}: study of the semiclassical functional} 
\label{sec:proof_theo_Vlasov_func}

%%%%%%%%%%%%%%%%%%%%%%%%%%%%%%%%%%%%%%%%%%%%%%%%%%%%%
%%%%%%%%%%%%%%%%%%%%%%%%%%%%%%%%%%%%%%%%%%%%%%%%%%%%%
This section is devoted to the proof of \cref{theo:min_Vlasov} and some auxiliary results on the semiclassical functional. We begin our analysis with the free particle case ($w=0$)  and then generalize to systems with pair interaction. We recall that the magnetic potential does not affect the energy, only the minimizer, and can be removed by a change of variables so we do not consider it here. For this section and for $\rho >0$ we denote by
\begin{equation*}
	S_{\mathrm{Vla}} \myp{\rho} = \bigg\{ m \in L^1 \big( \mathbb{R}^{2d} \big) \: \Big\rvert \: 0 \leq m \leq 1, \ \tfrac{1}{\myp{2 \pi}^d} \int_{\R^{2d}} m = \rho \bigg\}.
\end{equation*}
the set of admissible semi-classical measures.
%%%%%%%%%%%%%%%%%%%%%%%%%%%%%%%%%%%%%%%%%%%%%%%%%%%%%

\subsection{The free gas}

%%%%%%%%%%%%%%%%%%%%%%%%%%%%%%%%%%%%%%%%%%%%%%%%%%%%%

\begin{prop}[Minimizing the free semi-classical energy]
	\label{lem:functmin}
	Suppose that $ w = 0 $, and that $ V_+ \in L_{\loc}^1 \myp{\mathbb{R}^d} $ satisfies $ \int_{\mathbb{R}^d} e^{-\beta V_+\myp{x}} \, \mathrm{d} x < \infty $ for some $\beta >0$ and $V_- \in L^{d/2}(\mathbb{R}^d)\cap L^{1+d/2}(\mathbb{R}^d)$.
	Fix $ \rho > 0 $ and define $ m_0 \in \Ssc \myp{\rho} $ by
	\begin{equation*}
		m_0 \myp{x,p} := \frac{1}{1+ e^{\beta \myp{p^2 + V\myp{x}-\mu}}},
	\end{equation*}
	where $\mu$ is the unique chemical potential such that 
	$$\frac1{(2\pi)^{d}}\iint_{\R^{2d}}m_0(x,p)\,dx\,dp=\rho.$$
	Then
	\begin{align}
		e_{\mathrm{Vla}}^{\beta,w=0} \myp{\rho}
		&= \mathcal{E}_{\mathrm{Vla}}^{\beta,\rho,w=0} \myp{m_0} \nn \\
		&= - \frac{1}{\myp{2 \pi}^d \beta} \int_{\mathbb{R}^{2d}} \log \myp{1 + e^{- \beta \myp{p^2 + V\myp{x} - \mu}}} \, \mathrm{d} x \, \mathrm{d} p + \mu \rho.
	\label{eq:freefunctminimum}
	\end{align}
\end{prop}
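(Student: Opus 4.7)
The plan is to reduce the minimization to a pointwise variational problem on $[0,1]$, exploiting the fact that the functional is an integral of a strictly convex function of $m(x,p)$. For fixed $\mu, \epsilon \in \mathbb{R}$, the map $m \in [0,1] \mapsto (\epsilon-\mu)m + \beta^{-1} s(m)$ is strictly convex with derivative $(\epsilon-\mu) + \beta^{-1}\log(m/(1-m))$, so it attains its unique minimum at the interior critical point $m_\ast(\epsilon) := (1+e^{\beta(\epsilon-\mu)})^{-1}$. A direct computation gives the minimal value $-\beta^{-1}\log(1+e^{-\beta(\epsilon-\mu)})$, which yields the pointwise Gibbs inequality
\begin{equation*}
(\epsilon-\mu)\,m + \beta^{-1} s(m) \;\geq\; -\beta^{-1}\log\bigl(1+e^{-\beta(\epsilon-\mu)}\bigr), \qquad m\in[0,1],
\end{equation*}
with equality if and only if $m = m_\ast(\epsilon)$.

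Applying this with $\epsilon(x,p) = p^2 + V(x)$, integrating against $(2\pi)^{-d}\,dx\,dp$, and using the mass constraint $(2\pi)^{-d}\iint m = \rho$ to absorb the linear $-\mu m$ contribution, I obtain, for every admissible $m$ and every $\mu\in\mathbb{R}$,
\begin{equation*}
\mathcal{E}_{\mathrm{Vla}}^{\beta,\rho,w=0}(m) \;\geq\; -\frac{1}{(2\pi)^d\beta}\iint_{\mathbb{R}^{2d}} \log\bigl(1+e^{-\beta(p^2+V(x)-\mu)}\bigr)\,dx\,dp \;+\; \mu\rho,
\end{equation*}
with equality if and only if $m(x,p)$ coincides almost everywhere with the Fermi--Dirac profile corresponding to the given $\mu$. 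This both produces the announced formula~\eqref{eq:freefunctminimum} for the minimum and forces uniqueness of the minimizer, provided $\mu$ can be chosen so that $m_\ast$ has the prescribed mass~$\rho$.

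The remaining step is to establish the existence and uniqueness of such a $\mu$. The function
\begin{equation*}
\mu \;\longmapsto\; \mathcal{M}(\mu) \;:=\; \frac{1}{(2\pi)^d}\iint_{\mathbb{R}^{2d}} \frac{dx\,dp}{1+e^{\beta(p^2+V(x)-\mu)}}
\end{equation*}
is continuous and strictly increasing by dominated convergence; it tends to $0$ as $\mu\to-\infty$ thanks to the pointwise bound employed in~\eqref{eq:computation_rho}, which controls the integrand by $e^{\beta\mu}e^{-\beta(p^2/2+V_+)} + \mathds{1}_{\{p^2\leq 2V_-\}}$, both pieces being integrable under the assumptions $\int e^{-\beta V_+}<\infty$ and $V_-\in L^{d/2}(\mathbb{R}^d)$; and it tends to $+\infty$ as $\mu\to+\infty$, since $|\{V_+\leq 1\}|>0$ (otherwise $\int e^{-\beta V_+}=\infty$) forces $\mathcal{M}(\mu)\gtrsim \mu^{d/2}$. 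Hence a unique chemical potential $\mu$ satisfies $\mathcal{M}(\mu)=\rho$, and the corresponding $m_0$ is the unique minimizer, with the value~\eqref{eq:freefunctminimum} obtained by direct substitution. Finiteness of all terms follows from the two estimates displayed right after \cref{theo:min_Vlasov} in the excerpt, which is why the hypothesis $V_-\in L^{d/2}\cap L^{1+d/2}$ enters. The only mildly delicate point is this chemical-potential bookkeeping; the core of the argument is the elementary pointwise Gibbs inequality above.
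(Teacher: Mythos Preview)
Your proof is correct and follows essentially the same route as the paper's: both arguments rest on the strict convexity of $s$, yielding the pointwise Gibbs inequality (you state it directly, the paper writes the equivalent tangent-line inequality $s(m)\geq s(m_0)+s'(m_0)(m-m_0)$), and both establish existence and uniqueness of $\mu$ by the same monotonicity/continuity argument for the mass function. One small correction: your parenthetical ``otherwise $\int e^{-\beta V_+}=\infty$'' is not the right reason for $|\{V_+\leq M\}|>0$; the correct justification is simply that $V_+\in L^1_{\loc}$ forces $V_+<\infty$ a.e., so some sublevel set has positive measure.
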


\begin{proof}
The map 
$$R:=\mu\mapsto (2\pi)^{-d}\iint_{\R^{2d}}m_0(x,p)\,dx\,dp$$
is well-defined on $\R$, using that 
$$\frac{1}{1+ e^{\beta \myp{p^2 + V\myp{x}-\mu}}}\leq \frac{\max(1,e^{\beta\mu})}{1+e^{\beta \myp{p^2 + V\myp{x}}}}$$
which is integrable under our conditions on $V$, by the remarks after Theorem~\ref{theo:min_Vlasov}.
In addition, $R$ is increasing and continuous with 
$$\lim_{\mu\to-\ii}R(\mu)=0,\qquad \lim_{\mu\to+\ii}R(\mu)=+\ii.$$
Therefore we can always find $\mu$ so that the density of $m_0$ equals the given $\rho$. Note then that 
	\begin{equation*}
		1 - m_0 \myp{x,p} = e^{\beta\myp{p^2 + V\myp{x} -\mu}} m_0 \myp{x,p} = \frac{1}{1 + e^{-\beta\myp{p^2 + V\myp{x} -\mu}}},
	\end{equation*}
	so that
	\begin{align*}
		\mathcal{E}_{\mathrm{Vla}}^{\beta,\rho, w=0} \myp{m_0}
		&= \frac{1}{\myp{2 \pi}^d \beta} \int_{\mathbb{R}^{2d}}\bigg\{ \beta \myp{p^2 + V\myp{x} - \mu} m_0 + m_0 \log m_0\\
		&\qquad\qquad\qquad - m_0 \log \myp{e^{\beta \myp{p^2 + V\myp{x} -\mu}} m_0}\bigg\} \, \mathrm{d} x \, \mathrm{d} p \nonumber \\
		&\qquad+ \frac{1}{\myp{2 \pi}^{d} \beta} \int_{\mathbb{R}^{2d}}\left( \log\myp{1-m_0} + \beta \mu \, m_0\right) \, \mathrm{d} x \, \mathrm{d} p \nonumber \\
		&= - \frac{1}{\myp{2\pi}^d \beta} \int_{\mathbb{R}^{2d}} \log \myp{1 + e^{- \beta \myp{p^2 + V\myp{x} - \mu}}} \, \mathrm{d} x \, \mathrm{d} p + \mu \rho,
	\end{align*}
	showing the second equality in \eqref{eq:freefunctminimum}.
	That $ m_0 $ is the minimizer follows from the fact that the free energy is strictly convex. For instance, for any other $ m \in  S_{\mathrm{Vla}} \myp{\rho} $, since the function $ s \myp{t} = t \log t + \myp{1-t} \log \myp{1-t} $ is convex on $ \myp{0,1} $ with derivative $ s' \myp{t} = \log \big( \frac{t}{1-t} \big) $, we have pointwise
	\begin{align}
		s \myp{m} &\geq s \myp{m_0} + s' \myp{m_0} \myp{m-m_0} \nonumber \\
		&= -\beta \myp{p^2 + V\myp{x} -\mu} m + \beta \myp{p^2 + V\myp{x} -\mu} m_0 + s\myp{m_0},
	\label{eq:sbound}
	\end{align}
	replacing $m_0$ by its expression implies that $ \mathcal{E}_{\mathrm{Vla}}^{\beta,\rho,w=0} \myp{m} \geq \mathcal{E}_{\mathrm{Vla}}^{\beta,\rho,w=0} \myp{m_0} $. 
	That $ m_0 $ is the unique minimizer follows from the fact that $ \mathcal{E}_{\mathrm{Vla}}^{\beta,\rho,w=0} $ is a strictly convex functional.
\end{proof}

\begin{rem}
	\label{rem:semiclfunct}
	For an arbitrary domain $ \Omega \subseteq \mathbb{R}^{2d} $, we have by the very same arguments that 
	\begin{multline*}
\min_{\substack{m \in L^1 \myp{\Omega}\\ 0 \leq m \leq 1}}\bigg\{\frac{1}{\myp{2 \pi}^d} \int_{\Omega} \bigg( \myp{p^2+V\myp{x}} m \myp{x,p}  \, \mathrm{d} x + \frac{1}{\beta} s \myp{m \myp{x,p}} \bigg) \, \mathrm{d} x \, \mathrm{d} p\bigg\}\\
= - \frac{1}{\myp{2 \pi}^d \beta} \int_{\Omega} \log \myp{1+ e^{-\beta \myp{p^2 + V \myp{x}}} } \, \mathrm{d} x \, \mathrm{d} p.
	\end{multline*}
with the unique minimizer $ \widetilde{m_0} \myp{x,p} = (1+e^{\beta (p^2 + V\myp{x})})^{-1} $ and no chemical potential since we have dropped the mass constraint.
\end{rem}

%%%%%%%%%%%%%%%%%%%%%%%%%%%%%%%%%%%%%%%%%%%%%%%%%%%%%

\subsection{The interacting gas}

%%%%%%%%%%%%%%%%%%%%%%%%%%%%%%%%%%%%%%%%%%%%%%%%%%%%%

We now deal with the interacting case. When $w \neq 0$, to retrieve the existence of minimizers as well as their expression, we need to use compactness techniques and compute the Euler-Lagrange equation. We divide the proof in several lemmas. We start by proving the semi-continuity of the functional in \cref{lem:strongcont} and then prove the existence of minimizers on $\Ssc(\rho)$ in \cref{lem:functminexist}. To obtain the form of the minimizers we compute the Euler-Lagrange equation but because the entropy $s$ is not differentiable in $0$ and $1$ we first need to prove in \cref{lem:minimizerbounds} that minimizers cannot be equal to $0$ nor $1$ in sets of non zero measure. The proof of \cref{theo:min_Vlasov} is given at the end of this subsection.
\begin{lem}
\label{lem:strongcont}
	Fix $ \rho,\beta_0  > 0 $.
	Suppose that $ w =0 $, and that $ V_+ \in L_{\loc}^{1} \myp{\mathbb{R}^d}$, $V_{-} \in L^{d/2}(\mathbb{R}^d)\cap L^{1+d/2}(\mathbb{R}^d) $ satisfies $ \int_{\mathbb{R}^d} e^{-\beta_0 V_+\myp{x}} \id x < \infty $.
	Then for all $\beta > \beta_0$, $ \Esc^{\beta,\rho,w=0} $ is $L^1$-strongly lower semi-continuous on $ \Ssc \myp{\rho} $.
\end{lem}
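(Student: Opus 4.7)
The plan is to rewrite $\mathcal{E}_{\rm Vla}^{\beta,\rho,w=0}$ in terms of a pointwise non-negative relative entropy with respect to the explicit free minimizer $m_0$ provided by \cref{lem:functmin}, and then apply Fatou's lemma. Write $m_0(x,p)=(1+e^{\beta(p^2+V(x)-\mu)})^{-1}$, with $\mu$ the unique chemical potential fixing the mass $\rho$. Using $\beta(p^2+V-\mu)=\log((1-m_0)/m_0)$, a direct algebraic manipulation yields the pointwise identity
\begin{equation*}
(p^2+V-\mu)\,m+\beta^{-1}s(m)+\beta^{-1}\log\!\bigl(1+e^{-\beta(p^2+V-\mu)}\bigr)=\beta^{-1}\mathcal{D}(m,m_0),
\end{equation*}
where $\mathcal{D}(m,m_0):=m\log(m/m_0)+(1-m)\log((1-m)/(1-m_0))\ge 0$ is the fermionic relative entropy (continuous on $[0,1]$ with the convention $0\log 0=0$). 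Integrating over $\mathbb{R}^{2d}$, using the mass constraint $\iint m=\iint m_0=(2\pi)^d\rho$ and the formula \eqref{eq:freefunctminimum} for $\mathcal{E}_{\rm Vla}^{\beta,\rho,w=0}(m_0)$, one obtains the key identity
\begin{equation*}
\mathcal{E}_{\rm Vla}^{\beta,\rho,w=0}(m)=\mathcal{E}_{\rm Vla}^{\beta,\rho,w=0}(m_0)+\frac{1}{(2\pi)^d\beta}\iint_{\mathbb{R}^{2d}}\mathcal{D}(m,m_0)\,dx\,dp
\end{equation*}
for every $m\in S_{\rm Vla}(\rho)$, the right-hand side being well defined in the interval $[\mathcal{E}_{\rm Vla}^{\beta,\rho,w=0}(m_0),+\infty]$ by the pointwise non-negativity of $\mathcal{D}$.

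Once this reduction is in place, lower semi-continuity is almost immediate. Given $m_n\to m$ strongly in $L^1(\mathbb{R}^{2d})$ with $m_n,m\in S_{\rm Vla}(\rho)$, I would extract a subsequence along which $m_n\to m$ almost everywhere. Since $t\mapsto\mathcal{D}(t,m_0(x,p))$ is continuous on $[0,1]$ for each fixed $(x,p)$, pointwise convergence $\mathcal{D}(m_n,m_0)\to\mathcal{D}(m,m_0)$ a.e. follows, and Fatou's lemma applied to the non-negative integrands gives
\begin{equation*}
\iint_{\mathbb{R}^{2d}}\mathcal{D}(m,m_0)\,dx\,dp\le \liminf_{n\to\infty}\iint_{\mathbb{R}^{2d}}\mathcal{D}(m_n,m_0)\,dx\,dp.
\end{equation*}
The identity above then translates this bound into $\mathcal{E}_{\rm Vla}^{\beta,\rho,w=0}(m)\le \liminf_n\mathcal{E}_{\rm Vla}^{\beta,\rho,w=0}(m_n)$ along the subsequence, and a standard subsequence-of-subsequence argument upgrades this to the full sequence.

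The only real subtlety is ensuring that $m_0$ belongs to $S_{\rm Vla}(\rho)$ with finite energy, so that the rewriting actually makes sense; this is exactly where the assumptions $\beta>\beta_0$, $V_-\in L^{d/2}\cap L^{1+d/2}(\mathbb{R}^d)$ and $\int e^{-\beta_0 V_+}<\infty$ are used, via \cref{lem:functmin} together with the integrability bounds of the type \eqref{eq:computation_rho} recalled in the discussion following \cref{theo:min_Vlasov}. With that step secured, the remainder of the proof is purely algebraic plus Fatou, and, crucially, no separate treatment of the kinetic term, the potential term, or the entropy term is required, since they are packaged together into the single non-negative integrand $\mathcal{D}(m,m_0)$.
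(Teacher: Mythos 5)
Your argument is correct, and it takes a genuinely different route from the paper. You rewrite the free-energy difference $\mathcal{E}_{\rm Vla}^{\beta,\rho,w=0}(m)-\mathcal{E}_{\rm Vla}^{\beta,\rho,w=0}(m_0)$ as the phase-space integral of the fermionic relative entropy $\mathcal{D}(m,m_0)\ge 0$ (I checked the pointwise identity and the integrated version; they are exact, using the mass constraint to cancel the $\mu$-terms and \eqref{eq:freefunctminimum} for $\mathcal{E}(m_0)$), and then you conclude with a single application of Fatou's lemma to the non-negative integrand $\mathcal{D}(m_n,m_0)$, noting that $t\mapsto\mathcal{D}(t,m_0(x,p))$ is continuous on $[0,1]$ because $0<m_0<1$ pointwise. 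The paper instead argues directly on sublevel sets: it invokes the finite-domain version of the free minimization (\cref{rem:semiclfunct}) on $\{|x|+|p|\ge R\}$ to show that the contribution from outside a ball is $o_R(1)$ from below, then applies Fatou separately to $(p^2+V_+)m_n$ and dominated convergence separately to $V_-m_n$ and to $s(m_n)$ inside the ball, and finally sends $R\to\infty$. Both approaches are valid; your packaging into the single non-negative integrand $\mathcal{D}(m,m_0)$ is conceptually cleaner and avoids both the localization and the separate treatment of the kinetic, potential and entropy terms. It does rely on having the explicit global minimizer $m_0\in S_{\rm Vla}(\rho)$ with finite energy, which is exactly where the hypotheses $\beta>\beta_0$, $\int e^{-\beta_0 V_+}<\infty$ and $V_-\in L^{d/2}\cap L^{1+d/2}$ enter via \cref{lem:functmin} and the integrability bounds after \cref{theo:min_Vlasov} — you flag this correctly. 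One technical point worth making explicit: the individual integrals $\iint(p^2+V)m$ and $\iint s(m)$ need not be separately absolutely convergent, so the identity should be read as saying that the combined integrand $(p^2+V-\mu)m+\beta^{-1}s(m)+\beta^{-1}\log(1+e^{-\beta(p^2+V-\mu)})$ is pointwise non-negative with the last summand integrable (finite by the bounds recalled after \cref{theo:min_Vlasov}); this is precisely why $\mathcal{E}(m)$ is well-defined in $[\mathcal{E}(m_0),+\infty]$ and why Fatou can be applied without further ado.
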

\begin{proof}
	We have to show that for any $ C_0 \in \mathbb{R} $ 
	$$ \mathcal{L}(C_0) := \left\{ m \in \Ssc \myp{\rho} \mid \Esc^{\beta,\rho,w=0} \myp{m} \leq C_0\right\} $$
	 is closed with respect to the $ L^1 $-norm on $ \Ssc \myp{\rho} $.
	Let $ \myp{m_n} \subseteq \mathcal{L}(C_0) $ be a sequence converging towards some $ m \in L^1 \myp{\mathbb{R}^d} $ with respect to the $ L^1 $-norm.
	By the $ L^1 $ convergence we immediately have $ \frac{1}{\myp{2 \pi}^d} \iint_{\mathbb{R}^{2d}} m = \rho $, we can also extract a subsequence converging almost everywhere and obtain $ 0 \leq m \leq 1 $.
Applying \cref{rem:semiclfunct} with $ \Omega = \myt{\abs{x} + \abs{y} \geq R} $, we have for any $ R > 0 $ that
\begin{align}
	\frac{1}{\myp{2 \pi}^d} \iint_{\abs{x}+\abs{p} \geq R} \myp{p^2 + V \myp{x}} m_n \myp{x,p} + \frac{1}{\beta} s \myp{m_n \myp{x,p}} \id x \id p \nn \\ 
	\geq - \frac{1}{\beta} \iint_{\abs{x}+\abs{p} \geq R} \log \myp{1+e^{-\beta\myp{p^2 + V\myp{x}}}} \id x \id p = o_R(1).
\label{ineq:outside_a_ball}
\end{align}
	Now we use that $(m_n)$ is bounded in $L^\infty(\mathbb{R}^{2d})$ to obtain that $m_n \to m$ in $L^{p}(\mathbb{R}^{2d})$ for all $1\leq p <\infty$. By Fatou's lemma and dominated convergence we obtain
\begin{align*}
&\liminf_{n\to\infty} \iint_{\abs{x} + \abs{p} \leq R} \myp{p^2 + V_+\myp{x}} m_n \myp{x,p} \id x \id p \\
& \qquad \qquad \qquad \qquad \qquad \qquad \qquad  \geq  \iint_{\abs{x} + \abs{p} \leq R} \myp{p^2 + V_+\myp{x}} m\myp{x,p} \id x \id p, \\
&\iint_{\abs{x} + \abs{p} \leq R} V_-\myp{x} m_n \myp{x,p} \id x \id p \underset{n\to\infty}{\longrightarrow} \iint_{\abs{x} + \abs{p} \leq R} V_-\myp{x} m\myp{x,p} \id x \id p.
\end{align*}
It remains to deal with the entropy term: by continuity of $s$ and by dominated convergence we have 
\begin{equation*}
\iint_{\abs{x} + \abs{p} \leq R} s \myp{m_{n} \myp{x,p}} \id x \id p \underset{n\to\infty}{\longrightarrow} \iint_{\abs{x} + \abs{p} \leq R} s \myp{m \myp{x,p}}\id x \id p.
\end{equation*}
All in all we obtain
\begin{align*}
C_0 &\geq \liminf_{n\to\infty} \Esc^{\beta,\rho,w=0} \myp{m_n} \\ 
&\geq \frac{1}{\myp{2 \pi}^d} \iint_{\abs{x}+\abs{p} \leq R} \myp{p^2 + V\myp{x}} m\myp{x,p}\id x \id p \\
	&\qquad \qquad \qquad \qquad \qquad \qquad \quad + \frac{1}{\beta} \iint_{\abs{x} + \abs{p} \leq R} s \myp{m \myp{x,p}}\id x \id p + o(R) \\
&\geq \frac{1}{\myp{2 \pi}^d} \iint_{\abs{x}+\abs{p} \leq R} \myp{p^2 + V_+\myp{x}} m\myp{x,p}\id x \id p + o(R) \\
& \qquad \quad - \frac{1}{\myp{2 \pi}^d} \iint_{\mathbb{R}^{2d}} V_-\myp{x} m\myp{x,p}\id x \id p  + \frac{1}{\beta} \iint_{\mathbb{R}^{2d}} s \myp{m \myp{x,p}}\id x \id p.
\end{align*}
Finally, we use the monotone convergence theorem and let $R$ tend to $\infty$ to obtain $ \Esc^{\beta,\rho,w=0} \myp{m} \leq C_0$.
\end{proof}
\begin{lem}
\label{lem:functminexist}
	Fix $ \rho,\beta_0 > 0 $.
	Suppose that $ w \in L^{1+d/2} \myp{\mathbb{R}^d} + L_{\varepsilon}^{\infty} \myp{\mathbb{R}^d} + \mathbb{R}_+ \delta_0 $, $ V_+ \in L_{\loc}^1 \myp{\mathbb{R}^d}, V_{-} \in  L^{1+d/2}(\mathbb{R}^d)$ satisfies $ \int_{\mathbb{R}^d} e^{-\beta_0 V_+\myp{x}} \id x < \infty $ and $V_+(x)\to\infty$ as $|x|\to\infty$.
	Then for all $\beta > \beta_0$, $ \Esc^{\beta,\rho} $ is bounded below and has a minimizer $ m_0 $ in $ \Ssc \myp{\rho} $. 
\end{lem}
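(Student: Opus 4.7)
The approach is the direct method of the calculus of variations: first establish a uniform lower bound on $\Esc^{\beta,\rho}$, then extract an almost-everywhere convergent minimizing sequence and pass to the liminf in each term. For the lower bound I would decompose $w = w_1 + w_2 + a\delta_0$ with $w_1 \in L^{1+d/2}(\R^d)$, $\|w_2\|_{L^\infty}$ as small as needed, and $a \geq 0$. The Dirac contribution $\frac{a}{2\rho}\int \rho_m^2$ is non-negative and is therefore discarded for the lower bound; the $w_2$ piece is bounded in absolute value by $\|w_2\|_{L^\infty}\rho/2$. For the singular pieces $w_1$ and $V_-$ I would combine Young's and H\"older's inequalities with the classical Lieb--Thirring bound $\|\rho_m\|_{L^{1+2/d}}^{1+2/d} \leq C \iint p^2 m$ (itself a consequence of the pointwise estimate $\rho_m(x) \leq C(\int p^2 m(x,p)\, dp)^{d/(d+2)}$ valid since $0 \leq m \leq 1$) to absorb both $\int V_-\rho_m$ and $\frac{1}{2\rho}\abs{\iint w_1(x-y)\rho_m(x)\rho_m(y)\, dx\, dy}$ into $\epsilon \iint p^2 m + C_{\epsilon,V,w}$ via the numerical Young inequality, exploiting that the exponent $d/(d+2)$ is strictly less than $1$. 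What remains, namely $(1-\epsilon)\iint p^2 m + \iint V_+ m + \beta^{-1}\iint s(m)$, is bounded below using the pointwise minimization identity $\beta E\, t + s(t) \geq -\log(1+e^{-\beta E})$ with $E = (1-\epsilon)p^2 + V_+$; the integral of the right-hand side is finite by the same computation as in \eqref{eq:computation_rho}, provided $\beta > \beta_0$.

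Given a minimizing sequence $(m_n) \subset \Ssc(\rho)$, the same argument provides uniform control on $\iint p^2 m_n$, $\iint V_+ m_n$, $\iint |s(m_n)|$ and $\|\rho_{m_n}\|_{L^{1+2/d}}$. Combined with $0 \leq m_n \leq 1$ and the fixed total mass $(2\pi)^d \rho$, the confining property $V_+(x)\to\infty$ delivers tightness in the $x$ variable and the kinetic bound delivers tightness in the $p$ variable. The Dunford--Pettis criterion then yields a subsequence (still denoted $m_n$) converging in $L^1(\R^{2d})$ and almost everywhere to some $m \in \Ssc(\rho)$.

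Finally, I would take the liminf term by term. The non-interacting part is lower semi-continuous along this subsequence by \cref{lem:strongcont}. For the interaction, Fubini gives $\rho_{m_n}\to \rho_m$ strongly in $L^1(\R^d)$, and interpolating with the uniform $L^{1+2/d}$ bound produces strong convergence in every $L^q$ with $1\leq q<1+2/d$, which is enough to pass to the limit in the $w_1$ and $w_2$ contributions via Young's convolution inequality. The main technical subtlety is the Dirac piece $\frac{a}{2\rho}\int \rho_m^2$: since $1+2/d \leq 2$ for $d\geq 2$, strong $L^2$ convergence of the densities is not available, but the term's non-negativity combined with almost-everywhere convergence makes Fatou's lemma directly applicable and gives precisely the required lower semi-continuity. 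Combining these estimates yields $\Esc^{\beta,\rho}(m) \leq \liminf \Esc^{\beta,\rho}(m_n) = \esc(\rho)$, so $m$ is a minimizer.
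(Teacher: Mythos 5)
Your argument for boundedness below and for tightness matches the paper's almost verbatim: same decomposition $w = w_1 + w_2 + a\delta_0$, same classical Lieb--Thirring bound for phase-space measures, same absorption of $V_-$ and the singular interaction into a small fraction of the kinetic energy, and same pointwise identity $\beta E\,t + s(t) \geq -\log(1+e^{-\beta E})$. Up to that point you are on solid ground and essentially reproduce the paper's estimates \eqref{eq:wyoung}--\eqref{eq:tightness}.

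The gap is in the compactness step. You write that Dunford--Pettis ``yields a subsequence converging in $L^1(\R^{2d})$ and almost everywhere.'' Dunford--Pettis gives \emph{weak} $L^1$ compactness only; uniform integrability and tightness do not produce a strongly $L^1$-convergent subsequence, and weak $L^1$ convergence does not imply almost-everywhere convergence along a further subsequence. A sequence with $0\le m_n\le 1$, fixed mass, compact support and bounded moments can oscillate at ever finer scales, converging weakly to its local average without converging strongly or pointwise (the paper itself flags this: ``Note that we do not have pointwise convergence a priori''). Because of this, the two places where you lean on pointwise convergence collapse: (i) you cannot invoke \cref{lem:strongcont}, which is lower semi-continuity only with respect to \emph{strong} $L^1$ convergence, and (ii) you cannot apply Fatou to $\int\rho_{m_n}^2$, because you do not have $\rho_{m_n}\to\rho_{m_0}$ a.e. The paper avoids both obstructions by working purely with the weak limit and exploiting convexity: the entropy integrand $s$ and the map $\nu\mapsto\int\nu^2$ are continuous and convex, hence their associated integral functionals are \emph{weakly} lower semi-continuous, which is exactly what is needed. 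To fix your argument you should replace the strong/a.e. extraction by the weak extraction the paper uses and then replace Fatou and \cref{lem:strongcont} by the convexity-based weak lower semi-continuity (for the kinetic and $V_+$ terms, weak convergence against the non-negative measurable weight already suffices, as in the paper's direct estimate after restricting to a ball $\{|x|+|p|\leq R\}$ and letting $R\to\infty$ via monotone convergence).
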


\begin{proof}
	Let $ \myp{m_n} \subseteq \Ssc \myp{\rho} $ be a minimizing sequence, i.e. $ \Esc^{\beta,\rho} \myp{m_n} \to \esc \myp{\rho} $ as $n\to\infty$.
	Since $ \myp{m_n} $ is bounded in both $ L^1 \myp{\mathbb{R}^{2d}} $ and $ L^{\infty} \myp{\mathbb{R}^{2d}} $, one can verify that up to extraction the sequence has a weak limit $ m_0 \in L^1 \myp{\mathbb{R}^{2d}} \cap L^{\infty} \myp{\mathbb{R}^{2d}} $ satisfying
	\begin{equation}
	\label{eq:weaklim}
		\int_{\mathbb{R}^{2d}} m_n \myp{x,p} \varphi \myp{x,p} \id x \id p 
		\to \int_{\mathbb{R}^{2d}} m_0 \myp{x,p} \varphi \myp{x,p} \id x \id p
	\end{equation}
	for any $ \varphi \in L^1 \myp{\mathbb{R}^{2d}} + L_{\varepsilon}^{\infty} \myp{\mathbb{R}^{2d}} $. Moreover, the weak limit $m_0$ satisfies $ 0\leq m_0 \leq 1 $ and $ \int_{\mathbb{R}^{2d}} m_0 \leq \rho \myp{2 \pi}^d $. Note that we do not have pointwise convergence a priori.
	Let us prove that $ m_0 $ is a minimizer of $ \Esc^{\beta,\rho} $ in $ \Ssc \myp{\rho} $. Our first step is to show the tightness of the sequence of probability measures $(m_n)$ to obtain $\int_{\mathbb{R}^{2d}} m_0 = (2\pi)^d \rho$, then we argue that $m_0 \in  \Ssc \myp{\rho}$ and minimizes $\Esc^{\beta,\rho}$ using weak lower-semicontinuity.

We start out by bounding the interaction term using some of the kinetic energy.
	Let $\varepsilon > 0$ and let us write $w=w_1 + w_2 + a \delta_0$ with $w_1 \in L^{1+d/2}(\mathbb{R}^d)$, $\|w_2\|_{L^\infty(\mathbb{R}^d)} < \varepsilon$ and $a\geq0$. We use Young's inequality to bound the interaction term
	\begin{align}
	\int_{\mathbb{R}^d} w\ast \rho_{m_n} \rho_{m_n} 
		&\geq \|w_1\|_{L^{1+d/2}(\mathbb{R}^d)} \|\rho_{m_n}\|_{L^{1+2/d}(\mathbb{R}^d)} \|\rho_{m_n}\|_{L^1(\mathbb{R}^d)} \nn \\
		& \qquad \qquad \qquad \qquad \qquad \qquad + \|w_2\|_{L^\infty(\mathbb{R}^d)} \|\rho_{m_n}\|_{L^1(\mathbb{R}^d)}^2  \nn \\
		&\geq C\varepsilon \iint_{\mathbb{R}^{2d}} p^2 m_n \myp{x,p} \id x \id p - C.
		\label{eq:wyoung}
	\end{align}
	In the last inequality we have used the well-known fact \cite{LieLos-01} that 
\begin{align}
	\label{eq:rhobound}
		\int_{\mathbb{R}^{d}} p^2 m \myp{x,p} \id p&\geq \inf_{\substack{ 0\leq \widetilde{m} \leq 1 \\ \int \widetilde{m} = \myp{2 \pi}^d \rho_m \myp{x}}} \int_{\mathbb{R}^d} p^2 \widetilde{m} \myp{p} \id p \nn\\
		&= \myp{2 \pi}^d c_{\mathrm{TF}} \frac{d}{d+2} \rho_m \myp{x}^{1+ 2/d},
\end{align}
which gives the Lieb-Thirring inequality for classical measures on phase space.
Similarly we have
\begin{equation}\label{eq:v_minus}
\int_{\mathbb{R}^d} V_-(x) \rho_{m_n}(x) dx  \leq C\left( \varepsilon^{-d/2} \|V_-\|^{1+d/2}_{L^{1+d/2}(\mathbb{R}^d)} + \varepsilon\|\rho_{m_n}\|^{1+2/d}_{L^{1+2/d}(\mathbb{R}^d)}\right).
\end{equation}
	Now using \cref{lem:functmin}, \eqref{eq:rhobound}, \eqref{eq:wyoung}  and \eqref{eq:v_minus}, denoting $\alpha = (\beta- \beta_0)/(2\beta)$  we have
	\begin{align}
	\label{eq:functbound}
		C \geq \Esc^{\beta,\rho} \myp{m_n}
		&\geq \frac{\alpha}{\myp{2 \pi}^d} \iint_{\mathbb{R}^{2d}} \myp{p^2 + V \myp{x}} m_n + \frac{1}{2 \rho} \int_{\mathbb{R}^{d}} \myp{w \ast \rho_{m_n}} \rho_{m_n} \nn \\
		&\quad\quad + \frac{1}{2} e_{\rm Vla}^{\beta(1-\alpha),w=0} \myp{\rho} \nn \\
		&\geq \frac{\alpha - C \varepsilon}{\myp{2 \pi}^d} \iint_{\mathbb{R}^{2d}} \myp{p^2 + V_+ \myp{x}} m_n	 - C 
	\end{align}
	Note that by construction, $\beta(1-\alpha) > \beta_0$. Taking $\varepsilon >0$ sufficiently small but positive, the above inequality shows the tightness condition 
	\begin{eqnarray}
	\label{eq:tightness}
	\iint_{\mathbb{R}^{2d}} \myp{p^2 + V_+ \myp{x}} m_n \myp{x,p} \id x \id p \leq C. 
	\end{eqnarray}
	Therefore $\iint_{\mathbb{R}^{2d}}m_0 = (2\pi)^d \rho$. 
	
	Now we prove that $\liminf_{n\to\infty} \Esc^{\beta,\rho}(m_n) \geq \Esc^{\beta,\rho}(m_0)$. From the tightness condition it is easy to verify that $\rho_{m_n} \rightharpoonup \rho_{m_0}$ and that 
	\begin{equation*}
	\int_{\mathbb{R}^d} (w-a\delta_0)\ast \rho_{m_n} \rho_{m_n} \to \int_{\mathbb{R}^d} (w-a\delta_0)\ast \rho_{m_0} \rho_{m_0}.
	\end{equation*}
	To finish, we deal with the delta part of the interaction as well as the entropy part. We use that a continuous convex function is always weakly lower semi-continuous. We obain
	\begin{align*}
	a &\int_{\mathbb{R}^d}\rho_{m_0}^2 = \int_{\mathbb{R}^d} \lim_{n\to\infty} \rho_{m_n}^2 \leq \lim_{n\to\infty} \int_{\mathbb{R}^d} \rho_{m_n}^2, \\
	&\int_{\mathbb{R}^d} s(m_0) = \int_{\mathbb{R}^d} \lim_{n\to\infty} s(m_{n}) \leq \liminf_{n\to\infty}\int_{\mathbb{R}^d} s(m_{n}).
	\end{align*}
\end{proof}
\begin{lem}
	\label{lem:minimizerbounds}
	Fix $ \rho,\beta_0 > 0 $.
	Suppose that $ w \in L^{1+d/2} \myp{\mathbb{R}^d} + L_{\varepsilon}^{\infty} \myp{\mathbb{R}^d} + \mathbb{R}_+ \delta_0$, $ V_+ \in L_{\loc}^1 \myp{\mathbb{R}^d}, V_{-} \in L^{1+d/2}(\mathbb{R}^d)$ satisfies $ \int_{\mathbb{R}^d} e^{-\beta_0 V_+\myp{x}} \id x < \infty $ and $V_+(x)\to\infty$ as $|x|\to\infty$.
	Then any minimizer $ m_0 \in \Ssc \myp{\rho} $ of $ \Esc^{\beta,\rho} $ satisfies 
	\begin{equation*}
	0 < m(x,p) < 1 \quad \textrm{for } (x,p)\in\mathbb{R}^{2d} \: \textrm{almost everywhere.}
	\end{equation*}
\end{lem}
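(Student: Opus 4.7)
The plan is to argue by contradiction via an explicit mass-preserving perturbation that exploits the divergence of the entropy derivative $s'(t) = \log(t/(1-t))$ at $t=0$ and $t=1$: moving a tiny amount of mass $\eps$ into a set where $m_0 = 0$ (or out of a set where $m_0 = 1$) creates an entropy gain of order $\eps \log \eps$, which beats the $O(\eps)$ cost coming from the smooth (kinetic, potential, interaction) part of $\Esc^{\beta,\rho}$.

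Assume first that $A := \{(x,p) : m_0(x,p) = 0\}$ has positive Lebesgue measure; the case $|\{m_0=1\}|>0$ will be treated by a mirror perturbation. Since $(2\pi)^{-d}\iint m_0 = \rho > 0$, one can pick $\delta > 0$ with $|\{m_0 \geq \delta\}| > 0$, then bounded measurable $A' \subset A$ and $C \subset \{m_0 \geq \delta\}$ contained in a common ball of $\R^{2d}$ with $|A'| = |C| > 0$. For $\eps \in (0,\delta)$ the competitor
$$m_\eps := m_0 + \eps\mathds{1}_{A'} - \eps\mathds{1}_C$$
lies in $\Ssc(\rho)$. I then expand $\Esc^{\beta,\rho}(m_\eps) - \Esc^{\beta,\rho}(m_0)$: (i) the one-body part $\iint (p^2+V) m$ is linear in $m$ and yields $\eps \int_{A'}(p^2+V) - \eps \int_C (p^2+V) = O(\eps)$ since $V \in L^1_{\rm loc}$ and $A', C$ are bounded; (ii) writing $\rho_{m_\eps} - \rho_{m_0} = \eps\sigma$ with $\sigma$ bounded and compactly supported, the quadratic interaction contributes $\frac{2\eps}{\rho}\int w\ast\rho_{m_0}\,\sigma + O(\eps^2) = O(\eps)$, using $w \in L^{1+d/2} + L^\infty_\eps + \R_+\delta_0$ together with $\rho_{m_0} \in L^1 \cap L^{1+2/d}$ coming from~\eqref{eq:tightness}–\eqref{eq:rhobound}; (iii) the entropy variation splits as $|A'|\, s(\eps) + \int_C[s(m_0-\eps)-s(m_0)]$, where the second integral is $O(\eps)$ by boundedness of $s'$ on $[\delta,1]$, and the first equals $|A'|\eps\log\eps + O(\eps)$ using $s(\eps) = \eps \log \eps - \eps + O(\eps^2)$. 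Collecting these bounds,
$$\Esc^{\beta,\rho}(m_\eps) - \Esc^{\beta,\rho}(m_0) = \frac{|A'|}{(2\pi)^d \beta}\,\eps\log\eps + O(\eps) < 0$$
for $\eps>0$ small enough, contradicting the minimality of $m_0$. The case $|\{m_0=1\}|>0$ is handled by the mirror perturbation $m_\eps := m_0 - \eps\mathds{1}_{B'} + \eps\mathds{1}_C$ with $B' \subset \{m_0 = 1\}$ and $C \subset \{m_0 \leq 1-\delta\}$ bounded and of equal measure, the decisive entropy gain then coming from $s(1-\eps) - s(1) = \eps\log\eps + O(\eps)$.

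The only delicate point is verifying that the interaction truly costs only $O(\eps)$: for the Dirac component $a\delta_0$ this reads $a\int \rho_{m_\eps}^2 - a\int\rho_{m_0}^2 = 2a\eps\int\rho_{m_0}\sigma + a\eps^2\int\sigma^2 = O(\eps)$, immediate since $\sigma$ is bounded with compact support and $\rho_{m_0}\in L^1(\R^d)$; the $L^{1+d/2}$ part of $w$ is handled via H\"older and Young inequalities together with $\rho_{m_0}\in L^{1+2/d}$, while the $L^\infty_\eps$ part is treated as in~\eqref{eq:wyoung}. Once this is in place, the expansion above is routine and the contradiction follows, so that both $\{m_0=0\}$ and $\{m_0=1\}$ must be of Lebesgue measure zero.
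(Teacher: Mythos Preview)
Your proof is correct and rests on the same mechanism as the paper's: a mass-preserving perturbation whose entropy gain $\sim \eps\log\eps$ dominates the $O(\eps)$ cost from the smooth part of the functional. The execution, however, is more direct than the paper's. The paper first moves mass from $\{m_0=1\}$ to $\{m_0=0\}$ (assuming both have positive measure), localizes around Lebesgue density points via shrinking balls, applies the Lebesgue differentiation theorem to extract pointwise values of $p^2+V$, and only then lets $\lambda\to0$; a second step is then needed to handle the case where only one of the extremal sets is non-null. You bypass all of this by perturbing each extremal set directly against a fixed level set $\{m_0\geq\delta\}$ or $\{m_0\leq1-\delta\}$ of positive measure, which immediately gives the $O(\eps)$ bound on the smooth terms without any limiting procedure. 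Your route is shorter and loses nothing.

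One small imprecision: you justify $\int_C[s(m_0-\eps)-s(m_0)]=O(\eps)$ by ``boundedness of $s'$ on $[\delta,1]$'', but $s'(t)=\log\frac{t}{1-t}$ is not bounded near $t=1$. What you actually need (and have) is a one-sided bound: since $s'$ is increasing, $s(m_0-\eps)-s(m_0)=-\int_{m_0-\eps}^{m_0}s'(t)\,dt\leq -\eps\, s'(\delta-\eps)=O(\eps)$ for $\eps<\delta/2$; the region where $m_0$ is near $1$ only contributes further negative terms. Alternatively, simply take $C\subset\{\delta\leq m_0\leq 1-\delta\}$ from the start (this set has positive measure for small $\delta$ since $m_0\in L^1$ with positive integral), and then your stated justification is literally correct. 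The same remark applies to the mirror case.
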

\begin{proof} Define $\Omega_{1} := \myt{m_0 =1} $ and $\Omega_{0} := \myt{m_0 = 0}$. Our goal is to prove that $ \Omega_{1}$ and $ \Omega_{0}$ have $0$ measure. To this end, we will first show that $|\Omega_{1}| |\Omega_{0}| = 0$. Then we use that at least one of then is a null set to prove that so is the other one. Let us first assume neither of them are null sets. Let $r>0$, $ 0 < \lambda < \frac{1}{2} $ and for almost every $(\xi_1,\xi_2)\in\Omega_{1} \times \Omega_{0}$ define
	\begin{equation*}
	\varphi_{1} = \lambda \mathds{1}_{B \myp{\xi_1,r}\cap \Omega_{1}}, \quad
	\varphi_{2} = \lambda \mathds{1}_{B\myp{\xi_2,r'}\cap \Omega_{0}},
	\end{equation*}
	where $r' := \min \myt{s \geq 0 \mid \abs{B\myp{\xi_2,s}\cap \Omega_{0}} = \abs{B\myp{\xi_1,r}\cap \Omega_{1}} }$. We will use the notation $v(r) = \abs{B\myp{\xi_1,r}\cap \Omega_{1}}$. Note that by Lebesgue's density theorem, for almost every $(\xi_1,\xi_2)\in\Omega_{1} \times \Omega_{0}$ we have $v(r)>0$ and $r'<\infty$. The idea is to consider the function $ m_0-\varphi_{1}+ \varphi_{2} \in \Ssc \myp{\rho} $ and use the fact that $ m_0 $ is a minimizer of $ \Esc^{\beta,\rho} $ to obtain a contradiction.
	Let us estimate the entropy, using that $s(0) = s(1) = 0$ and $ s (t) = s (1-t) $, we obtain
	\begin{align*}
		\iint_{\mathbb{R}^{2d}} s \myp{m_0 - \varphi_1 + \varphi_2}
		&= \iint_{\mathbb{R}^{2d}} s \myp{m_0} + s( \varphi_1) +  s(\varphi_2) \\
		&=  2 s \myp{\lambda} v(r)+ \iint_{\mathbb{R}^{2d}} s \myp{m_0}.
	\end{align*}
	It remains to estimate the contribution of this small perturbation to interaction energy, we have
	\begin{multline*}
	\int_{\mathbb{R}^d} \rho_{m_0-\varphi_1 + \varphi_2} w \ast \rho_{m_0-\varphi_1 + \varphi_2} 
	= \int_{\mathbb{R}^d} \rho_{m_0} w \ast \rho_{m_0} + 2  \int_{\mathbb{R}^d} \rho_{\varphi_2 - \varphi_1} w \ast \rho_{m_0} \\ +  \int_{\mathbb{R}^d}\rho_{\varphi_2 - \varphi_1} w\ast  \rho_{\varphi_2 - \varphi_1}.
	\end{multline*}
Let $\varepsilon >0$ and let us write $w=w_1 + w_2 + a \delta_0$ with $w_1 \in L^{1+d/2}(\mathbb{R}^d)$, $\|w_2\|_{L^\infty(\mathbb{R}^d)} < \varepsilon$ and $a\geq 0$. We first use Young's inequality to bound the last term
	\begin{align*}
		\int_{\mathbb{R}^d} &w \ast \myp{\rho_{\varphi_2} - \rho_{\varphi_1}} \myp{\rho_{\varphi_2} - \rho_{\varphi_1}} \\
		&\leq \normL{w_1}{1 + d/2} \normL{\rho_{\varphi_2} - \rho_{\varphi_1}}{1} \normL{\rho_{\varphi_2} - \rho_{\varphi_1}}{1 + 2/d} \\
		&\qquad \qquad \qquad \qquad + \|w_2\|_{L^{\infty}_\varepsilon(\mathbb{R}^d)} \|\rho_{\varphi_2} - \rho_{\varphi_1}\|_{L^{1}(\mathbb{R}^d)}^2 + a \|\rho_{\varphi_2} - \rho_{\varphi_1}\|_{L^2(\mathbb{R}^d)}^2 \\
		&\leq C\lambda^2 \bigg(\normL{w}{1+d/2}v(r)^{1+\frac{d}{d+2}} +  \|w_2\|_{L^{\infty}_\varepsilon(\mathbb{R}^d)}v(r)^2 + a v(r) \bigg).
	\end{align*}
	Next and similarly we estimate the second term (minus the delta interaction)
	\begin{align*}
		\int_{\mathbb{R}^d} (w_1&+w_2) \ast \rho_{m_0} \myp{\rho_{\varphi_2} - \rho_{\varphi_1}} \\
		&\leq \normL{w_1}{1+d/2}\normL{\rho_{m_0}}{1+2/d} \normL{\rho_{\varphi_2} - \rho_{\varphi_1}}{1}  \\
		& \qquad \qquad \qquad \qquad \qquad + \|w_2\|_{L^\infty_\varepsilon(\mathbb{R}^d)} \|\rho_{m_0}\|_{L^1(\mathbb{R}^d)} \|\rho_{\varphi_2} - \rho_{\varphi_1}\|_{L^1(\mathbb{R}^d)} \\
		&\leq C \lambda ( \normL{w_1}{1+d/2}\normL{\rho_{m_0}}{1 +2/d} + \|w_2\|_{L^\infty_\varepsilon(\mathbb{R}^d)}\|\rho_{m_0}\|_{L^1(\mathbb{R}^d)}  ) v(r).
	\end{align*}
	Since $ m_0 $ is a minimizer, these estimates imply that
	\begin{align*}
		\Esc^{\beta,\rho} \myp{m_0} 
		&\leq \Esc^{\beta,\rho} \myp{m_0 - \varphi_1 + \varphi_2} \\
		&\leq \Esc^{\beta,\rho} \myp{m_0} + \frac{1}{\myp{2 \pi}^d} \iint_{\mathbb{R}^{2d}} \myp{p^2 + V \myp{x} + a\rho_{m_0}} \myp{\varphi_2 - \varphi_1}  \\
		&\quad  + C\lambda^2 \left(\normL{w}{1+d/2}v(r)^{1+\frac{d}{d+2}} +  \|w_2\|_{L^{\infty}_\varepsilon(\mathbb{R}^d)}v(r)^2 + a v(r) \right) \\
		&  \quad + C \lambda \bigg( \normL{w_1}{1+d/2}\normL{\rho_{m_0}}{1 +2/d} \\ 
		& \qquad \qquad \qquad \qquad + \|w_2\|_{L^\infty_\varepsilon(\mathbb{R}^d)}\|\rho_{m_0}\|_{L^1(\mathbb{R}^d)}\bigg)v(r)  + \frac{2  s \myp{\lambda}}{\myp{2 \pi}^d\beta} v(r).
	\end{align*}
	Now we divide the last inequality by $v(r) $ and we let $ r $ tend to zero and use the Lebesgue differentiation theorem (and the Lebesgue density theorem), to obtain that for almost all $(\xi_1,\xi_2)\in\Omega_{1} \times \Omega_{0}$
	\begin{multline*}
		- \frac{2s(\lambda)}{\lambda \beta} \leq - p_1^2 - V \myp{x_1} - a\rho_{m_0}(x_1) + p_2^2 + V \myp{x_2} + a	\rho_{m_0}(x_2) \\ + C  \normL{w}{1+d/2}\normL{\rho_{m_0}}{1+2/d} .
	\end{multline*}
	Now letting $\lambda$ tend to zero, we have that for almost all $(\xi_1,\xi_2)\in\Omega_{1} \times \Omega_{0}$, $p_2^2 + V \myp{x_2}  + a\rho_{m_0}(x_2)  - p_1^2 - V \myp{x_1} - a\rho_{m_0}(x_1) = \infty$ which, since $V \in L^{1+d/2}_{\loc}(\mathbb{R}^d)$ and $\rho_{m_0} \in L^{1+2/d}_{\loc}(\mathbb{R}^d)$, implies that $|\Omega_{1} \times \Omega_{0}| = 0$. Therefore, at least one of them is a null set, we will treat the case where $|\Omega_{0}| = 0$ and $|\Omega_{1}| \neq 0$, the other one can be dealt with similarly. Because $m$ has finite mass we can find $\varepsilon >0$ such that $\Omega_{2,\varepsilon} := \left\{ 1-\varepsilon \leq m(x,p) \leq 1-\varepsilon/2\right\}$ is not a null set. Defining $\varphi_1$ and $\varphi_2$ (replacing $\Omega_{0}$ by $\Omega_{2,\varepsilon}$) as before and doing the same computations we obtain that for almost all $(\xi_1,\xi_2)\in\Omega_{1} \times \Omega_{2,\varepsilon}$
	\begin{multline*}
	- \frac{s(\lambda)}{\lambda \beta} \leq- p_1^2 - V \myp{x_1} - a\rho_{m_0}(x_1) + p_2^2 + V \myp{x_2} + a\rho_{m_0}(x_2)  \\ + \frac{s(m(\xi_2) - \lambda) - s(m(\xi_2))}{\lambda}+ C  \normL{w}{1+d/2}\normL{\rho_{m_0}}{1+2/d}.
	\end{multline*}
	Because $s$ is continuously differentiable on $[1-2\varepsilon,1-\varepsilon/2 ]$, the difference quotient above is bounded uniformly in $\xi_2 \in \Omega_{2,\varepsilon}$ and $\lambda>0$ small enough. Letting $\lambda$ tend to zero, we end up with the same contradiction as before showing that $\Omega_{1}$ is a null set.
\end{proof}

\begin{proof}[Proof of \cref{theo:min_Vlasov}]
	We assume $A=0$ without loss of generality, since it can be removed by a change of variable. 
	
	We will first show that the expression \eqref{eq:m0def} of the minimizers is correct by computing the Euler-Lagrange equation associated with any such minimizer $ m_0 $. This gives automatically the expression of the minimum energy (\ref{eq:functminimum}). We conclude, in the case $\widehat{w}\geq 0$, by showing that the chemical potential $\mu$ is given by (\ref{eq:chemical_potential_derivative}).
	
Let $\varepsilon >0$ small enough and $ \varphi \in L^{1}\cap L^\infty(\{\varepsilon < m < 1- \varepsilon\})$ such that $ \iint \varphi= \myp{2 \pi}^d \rho $. For $ \delta = \frac{\varepsilon}{1+ \left\lVert \varphi \right\rVert_{\infty}} $ we have $ m_t := \frac{m_0+t\varphi}{1+t} \in S_{\mathrm{Vla}} \myp{\rho} $ for all $ t \in \myp{-\delta,\delta} $.
	Since $ m_0 $ is a minimizer, we must have $ \frac{\mathrm{d}}{\mathrm{d} t} \mathcal{E}_{\mathrm{Vla}}^{\beta} \myp{m_t}_{\lvert t=0} = 0 $. Using that $ \frac{\mathrm{d}}{\mathrm{d} t} m_t =  (\varphi - m_0)\myp{1+t}^{-2} $ and $ s' \myp{t} = \log \big( \frac{t}{1-t} \big) $ we obtain
	\begin{align}
		\iint_{\mathbb{R}^{2d}} &\myp{p^2 + V \myp{x} + \frac{1}{\rho} w \ast \rho_{m_0} \myp{x} + \frac{1}{\beta} \log \myp{\frac{m_0 \myp{x,p}}{1-m_0 \myp{x,p}}} } \varphi \myp{x,p}  \, \mathrm{d} x \, \mathrm{d} p \nn \\
		&= \iint_{\mathbb{R}^{2d}} \bigg(p^2 + V \myp{x} + \frac{1}{\rho} w \ast \rho_{m_0} \myp{x} \nn \\ &\qquad\qquad\qquad\qquad \label{eq:chemical_potential_def}
		 + \frac{1}{\beta} \log \myp{\frac{m_0 \myp{x,p}}{1-m_0 \myp{x,p}}} \bigg) m_0 \myp{x,p} \, \mathrm{d} x \, \mathrm{d} p.
	\end{align}
	Denoting the right hand side by $ \myp{2 \pi}^d \mu_{\mathrm{Vla}} \myp{\rho} \rho $, we have shown for any $ \varphi$ verifying the above conditions that
	\begin{multline*}
		\iint_{\{\varepsilon < m < 1-\varepsilon\}} \bigg( p^2 + V \myp{x} + \frac{1}{\rho} w \ast \rho_{m_0} \myp{x} \\ + \frac{1}{\beta} \log \myp{\frac{m_0 \myp{x,p}}{1-m_0 \myp{x,p}}} - \mu_{\mathrm{Vla}} \myp{\rho} \bigg) \varphi \myp{x,p}  \, \mathrm{d} x \, \mathrm{d} p
		= 0.
	\end{multline*}
	This is enough for the left factor in the integrand above to be zero almost everywhere on $\{\varepsilon < m < 1-\varepsilon\}$. But $\varepsilon$ can be taken arbitrary small and by \cref{lem:minimizerbounds} we have $\bigcup_{\varepsilon>0}\{\varepsilon < m < 1-\varepsilon\} {=  \{0<m<1\}} = \mathbb{R}^{2d}$ almost everywhere, from which we obtain \eqref{eq:m0def}.
	
	That $\rho_{m_0} \in L^{2}(\mathbb{R}^d) \cap L^{1+d/2}(\mathbb{R}^d)$ follows from \cref{lem:rho_regularity} and the fact that $m_0$ satisfies \eqref{eq:m0def}.
	
	It remains to prove \eqref{eq:chemical_potential_derivative} when it is assumed that $\widehat{w}\geq 0$. This is a classical argument and we only sketch it, we refer to \cite{LieSim-77b} for further details. First note that the assumption $\widehat{w}\geq 0$ ensures the convexity of $\Esc^{\beta,\rho}$, hence for $\rho' >0$, $F_{\mathrm{Vla}}^{\beta}(\rho',\rho)$ is the minimum of a convex function under a linear constraint, it is therefore convex. This implies that, for $\rho' >0$, the function $F_{\mathrm{Vla}}^{\beta}(\cdot,\rho')$ is continuous on $\mathbb{R}_+$ and continuously differentiable except maybe in a countable number of values of $\rho$. We first show that 
	\begin{equation*} 
	\mathbb{R}^*_+ \ni \rho \mapsto \mu(\rho) \in \mathbb{R}
	\end{equation*} 
	defines a bijection, where $\mu(\rho)$, defined in (\ref{eq:m0def}), is the Lagrange multiplier associated to the constraint $\rho$. Consider, for $\mu \in \mathbb{R}$, the unconstrained minimization problem 
	\begin{equation}
	\label{eq:unconstrained}
	 \inf_{0\leq m\leq 1} \Esc^{\beta,\rho'}(m) - \frac{\mu}{(2\pi)^d} \iint_{\mathbb{R}^{2d}} m  = \inf_{\rho\geq 0} F_{\mathrm{Vla}}^{\beta}(\rho,\rho') - \mu \rho.
	\end{equation} 
 This yields a minimizer $m^\mu$ and hence a density $\rho(\mu) := (2\pi)^{-d}\iint m^\mu$, see \cref{rem:semiclfunct}. The expression of $m^\mu$ can be computed through the Euler-Lagrange equation, 
 \begin{equation*}
 m^\mu = \frac{1}{1+e^{\beta(p^2 + V + \rho'^{-1} \rho_{m^\mu} \ast w - \mu)}}
 \end{equation*}
From (\ref{eq:unconstrained}), the density $m^\mu$ must also satisfy $\Esc^{\beta,\rho'}(m^\mu) =  F_{\mathrm{Vla}}^{\beta}(\rho(\mu),\rho')$ and since $\widehat{w} \geq 0$, we conclude that $m^\mu$ is also the unique solution of this equation and must satisfy (\ref{eq:m0def}) where $\mu(\rho)$ appears. By identification, $\mu = \mu(\rho)$ is the Lagrange multiplier associated to the minimization problem at density $\rho$. This proves the bijective correspondance between $\mu(\rho)$ and $\rho$.

Finally, if $F_{\mathrm{Vla}}^{\beta}(\cdot,\rho')$ is differentiable in some $\rho_0$, the above discussion shows (\ref{eq:chemical_potential_derivative}) for $\rho = \rho_0$. But because of the one-to-one correspondance between $\mu$ and $\rho$, $\partial_{\rho}F_{\mathrm{Vla}}^{\beta}$ cannot be discontinuous, this concludes the proof.
\end{proof}

% \bibliographystyle{siam} %style de biblio
% \bibliography{biblio} %nom du fichier biblio.bib (dans le meme dossier que le fichier latex qu'on compile)

\end{document}